\newcommand\version{May 11, 2015}
\newtheorem{theorem}{Theorem}[section]
\newtheorem{proposition}[theorem]{Proposition}
\newtheorem{lemma}[theorem]{Lemma}
\newtheorem{corollary}[theorem]{Corollary}
\theoremstyle{definition}
\newtheorem{assumption}[theorem]{Assumption}
\theoremstyle{remark}
\numberwithin{equation}{section}
\renewcommand{\epsilon}{\varepsilon}
\newcommand{\I}{\mathbb{I}}
\newcommand{\N}{\mathbb{N}}
\renewcommand{\phi}{\varphi}
\newcommand{\R}{\mathbb{R}}
\newcommand{\Rp}{\text{Re\,}}
\newcommand{\Z}{\mathbb{Z}}
\DeclareMathOperator{\im}{Im}
\DeclareMathOperator{\re}{Re}
\DeclareMathOperator{\tr}{Tr}
\begin{document}

\title[Derivation of an effective evolution equation --- \version]{Derivation of an effective evolution equation\\ for a strongly coupled polaron}

\author{Rupert L. Frank}
\address{Rupert L. Frank, Mathematics 253-37, Caltech, Pasadena, CA 91125, USA}
\email{rlfrank@caltech.edu}

\author{Zhou Gang}
\address{Zhou Gang, Mathematics 253-37, Caltech, Pasadena, CA 91125, USA}
\email{gzhou@caltech.edu}

\begin{abstract}
Fr\"ohlich's polaron Hamiltonian describes an electron coupled to the quantized phonon field of an ionic crystal. We show that in the strong coupling limit the dynamics of the polaron is approximated by an effective non-linear partial differential equation due to Landau and Pekar, in which the phonon field is treated as a classical field.
\end{abstract}


\maketitle

\renewcommand{\thefootnote}{${}$} \footnotetext{\copyright\, 2015 by
  the authors. This paper may be reproduced, in its entirety, for
  non-commercial purposes.}

\section{Introduction and Main result}

\subsection{Setting of the problem}

In this paper we are interested in the dynamics of a strongly coupled polaron. A polaron is a model of an electron in an ionic lattice interacting with its surrounding polarization field. In 1937 Fr\"ohlich \cite{Fr} proposed a quantum-mechanical Hamiltonian, given in \eqref{eq:frohlich} below, in order to describe the dynamics of a polaron. In this model the phonon field is treated as a quantum field. The Fr\"ohlich Hamiltonian depends on a single parameter $\alpha>0$ which describes the strength of the coupling between the electron and the phonon field. In 1948 Landau and Pekar \cite{LaPe} proposed a system of non-linear PDEs, see \eqref{eq:defParticle}, \eqref{eq:defField} below, to describe the dynamics of a polaron and used this in their famous computation of the effective polaron mass (see \cite{Spohn1987} for an alternative approach). They treat the phonons as a classical field. The derivation of their equations is phenomenological and they do not comment on the relation between their equations and the dynamics generated by Fr\"ohlich's Hamiltonian. Our purpose in this paper is to establish a connection between the two dynamics and to rigorously derive the Landau--Pekar equations from the Fr\"ohlich dynamics in the strong coupling limit $\alpha\to\infty$ for a natural class of initial conditions and on certain time scales.

In order to describe this result in detail, we recall that the Fr\"ohlich Hamiltonian acts in $\mathcal{L}^2(\mathbb{R}^3)\otimes \mathcal{F}$, where $\mathcal{L}^2(\mathbb{R}^3)$ corresponds to the electron and $\mathcal{F}=\mathcal{F}(\mathcal{L}^2(\mathbb{R}^3))$, the bosonic Fock space over $\mathcal{L}^2(\mathbb{R}^3)$, corresponds to the phonon field. The Hamiltonian is given by
\begin{align}\label{eq:frohlich}
p^2+\sqrt{\alpha} \int_{\mathbb{R}^3} \left[e^{-ik\cdot x}a_k+e^{ik\cdot x}a_k^*\right] \frac{dk}{|k|}  + \int_{\mathbb{R}^3} a_k^* a_k \,dk \,,
\end{align} where $p:=-i\nabla_x$ and $x$ are momentum and position of the electron and $a_k$ and $a_k^*$ are annihilation and creation operators in $\mathcal{F}$ satisfying the commutation relations
\begin{align}
[a_{k},\ a_{k^{'}}^*]=\delta(k-k^{'}),\ [a_k,\ a_{k^{'}}]=0,\ \text{and} \ [a_{k}^{*},\ a_{k^{'}}^*]=0 \,.
\end{align} 
As mentioned before, the scalar $\alpha>0$ describes the strength of the coupling between the electron and the phonon field and will be large in our study.

To facilitate later discussions we rescale the variables, as in \cite{FrankSchlein2013},
\begin{align}
x\mapsto \alpha^{-1} x\,,\qquad k\mapsto \alpha k \,,
\end{align}
and find that the Hamiltonian in \eqref{eq:frohlich} is unitarily equivalent to $\alpha^2 \tilde{H}_{\alpha}^{F}$, where the new Hamiltonian $\tilde{H}_{\alpha}^{F}$, acting again in $\mathcal{L}^2(\mathbb{R}^3)\otimes \mathcal{F}$, is defined as
\begin{align}
\tilde{H}_{\alpha}^{F}:=p^2+\int_{\mathbb{R}^3} \left[e^{-ik\cdot x}b_k+e^{ik\cdot x}b_k^*\right] \frac{dk}{|k|} +\int_{\mathbb{R}^3} b_k^* b_k \,dk \,.
\end{align}
The new annihilation and creation operators $b_k:=\alpha^{\frac{1}{2}} a_{\alpha k}$ and $b_k^*:=\alpha^{\frac{1}{2}} b^*_{\alpha k}$ satisfy the commutation relations
\begin{align}\label{eq:commB}
[b_{k},\ b_{k^{'}}^*]=\alpha^{-2}\delta(k-k^{'}),\ [b_k,\ b_{k^{'}}]=0,\ \text{and} \ [b_{k}^{*},\ b_{k^{'}}^*]=0 \,.
\end{align}
We emphasize the $\alpha$-dependence in \eqref{eq:commB}.

We will discuss the dynamics generated by $\tilde{H}_{\alpha}^{F}$ for initial conditions of the product form
\begin{align}
\psi_0\otimes W(\alpha^2\phi_0)\Omega \,.
\end{align}
Here, $\Omega$ denotes the vacuum in $\mathcal{F}$ and $W(f)$ denotes the Weyl operator,
\begin{align}\label{eq:defW}
W(f):=e^{b^*(f)-b(f)} \,,
\end{align}
so that $W(\alpha^2\phi)\Omega$ is a coherent state. This particular choice of initial conditions is motivated by Pekar's approximation \cite{pekar1946polaron,pekar1954polaron} to the ground state energy, which uses exactly states of this form. Pekar's approximation was made mathematically rigorous by Donsker and Varadhan \cite{MR709647} (see \cite{MR1462224} for an alternative approach).

Clearly, the time-evolved state $e^{-i \tilde{H}_{\alpha}^{F}t}\psi_0\otimes W(\alpha^2\phi_0)\Omega$ with $t\neq 0$ will in general no longer have an exact product structure. However, we will see that for large $\alpha$ (and $t$ of order one, or even larger) it can be approximated, in a certain sense, by a state of the product form $\psi_t\otimes W(\alpha^2\phi_t)\Omega$, where $\psi_t$ and $\phi_t$ solve the Landau--Pekar equations
\begin{align}
\label{eq:defParticle}
i\partial_t \psi_t(x) & = \left[ -\Delta + \int_{\R^3} \left[ e^{-ik\cdot x} \phi_t(k)+e^{ik\cdot x} \overline{\phi_t(k)} \right] \frac{dk}{|k|} \right] \psi_t(x) \,, \\
\label{eq:defField}
i\alpha^2 \partial_t \phi_t(k) & = \phi_t(k) + |k|^{-1} \int_{\R^3} |\psi_t(x)|^2 e^{ik\cdot x} \,dx
\end{align}
with initial data $\psi_0$ and $\phi_0$. Using standard methods one can show that for any $\psi_0\in\mathcal H^1(\R^3)$, $\phi_0\in \mathcal L^2(\R^3)$ and $\alpha>0$ the system \eqref{eq:defParticle}, \eqref{eq:defField} has a global solution $(\psi_t,\phi_t)$, which satisfies
$$
\|\psi_t\|_{\mathcal L^2(\R^3)} = \|\psi_0\|_{\mathcal L^2(\R^3)}
\qquad\text{and}\qquad
\mathcal E(\psi_t,\phi_t)=\mathcal E(\psi_0,\phi_0)
\qquad\text{for all}\ t\in\R
$$
with the energy
\begin{align}
\label{eq:energy}
\mathcal E(\psi,\phi) := & \int_{\R^3} |\nabla\psi|^2\,dx + \int_{\R^3} |\psi(x)|^2 \int_{\R^3} \left( e^{-ik\cdot x} \phi(k) + e^{ik\cdot x} \bar\phi(k)\right)\frac{dk}{|k|} \,dx \nonumber \\
& + \int_{\R^3} |\phi(k)|^2\,dk \,.
\end{align}
We refer to Lemma \ref{wellposedenergy} and Proposition \ref{THM:wellposedness} for more details about the solution $(\psi_t,\phi_t)$. In the original work of Landau and Pekar the equations are given in a different, but equivalent form, and we explain this connection in Subsection \ref{sec:equiv}.


\subsection{Main result}

In order to prove our main result we need the following regularity and decay assumptions on the initial data. We denote by $\mathcal H^m(\R^3)$ the Sobolev space of order $m$ and by
\begin{equation}
\label{eq:weighted}
\mathcal L^2_{(m)}(\R^3) := \mathcal L^2(\R^3,(1+k^2)^m\,dk)
\end{equation}
a weighted $\mathcal L^2$ space with norm
$$
\|\phi\|_{\mathcal L^2_{(m)}} = \left( \int_{\R^3} (1+k^2)^m |\phi(k)|^2\,dk \right)^{1/2} \,.
$$
Our main result will be valid under

\begin{assumption}\label{ass}
We assume $\psi_0\in\mathcal H^4(\R^3)$ and $\phi_0\in \mathcal L^2_{(3)}(\R^3)$ with $\|\psi_0\|_{\mathcal{L}^2(\mathbb{R}^3)}=1$.
\end{assumption}

A first version of our main result concerns the approximation of the reduced density matrices of $e^{-i \tilde{H}_{\alpha}^{F}t}\psi_0\otimes W(\alpha^2\phi_0)\Omega$ in the trace norm.

\begin{theorem}\label{reduceddensity}
Assume that $\psi_0$ and $\phi_0$ satisfy Assumption \ref{ass} and let $(\psi_t,\phi_t)$ be the solution of \eqref{eq:defParticle}, \eqref{eq:defField} with inital condition $(\psi_0,\phi_0)$. Define
\begin{align*}
\gamma_t^\mathrm{particle} & := \tr_\mathcal F \left| e^{-i \tilde{H}_{\alpha}^{F}t}\psi_0\otimes W(\alpha^2\phi_0)\Omega\right\rangle\left\langle e^{-i \tilde{H}_{\alpha}^{F}t}\psi_0\otimes W(\alpha^2\phi_0)\Omega \right|, \\
\gamma_t^\mathrm{field} & := \tr_{\mathcal{L}^2(\mathbb{R}^3)} \left| e^{-i \tilde{H}_{\alpha}^{F}t}\psi_0\otimes W(\alpha^2\phi_0)\Omega\right\rangle\left\langle e^{-i \tilde{H}_{\alpha}^{F}t}\psi_0\otimes W(\alpha^2\phi_0)\Omega \right|.
\end{align*}
Then, for all $\alpha\geq 1$ and all $t\in [-\alpha,\alpha]$,
\begin{align*}
\tr_{\mathcal{L}^2(\mathbb{R}^3)} \left| \gamma_t^\mathrm{particle} - |\psi_t\rangle\langle\psi_t| \right| & \leq C \alpha^{-2}(1+t^2) \,, \\
\tr_\mathcal F \left| \gamma_t^\mathrm{field} - |W(\alpha^2\phi_t)\Omega\rangle\langle W(\alpha^2\phi_t)\Omega| \right| & \leq C \alpha^{-2} (1+t^2) \,.
\end{align*}
\end{theorem}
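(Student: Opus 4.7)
Following the coherent-state approach of \cite{FrankSchlein2013}, we first reduce both trace-norm bounds to bilinear estimates. Setting $\Psi_t := e^{-it\tilde H_\alpha^F}(\psi_0\otimes W(\alpha^2\phi_0)\Omega)$ and using the standard inequality
\begin{equation*}
\tr|\gamma-|\chi\rangle\langle\chi||\le 2\sqrt{1-\langle\chi,\gamma\chi\rangle}
\end{equation*}
valid for any density matrix $\gamma$ and unit vector $\chi$, both statements of the theorem reduce to
\begin{equation*}
\langle\Psi_t,(Q_{\psi_t}\otimes I_\mathcal{F})\Psi_t\rangle,\ \langle\Psi_t,(I\otimes Q_{W(\alpha^2\phi_t)\Omega})\Psi_t\rangle\le C\alpha^{-4}(1+t^2)^2,
\end{equation*}
where $Q_\chi:=I-|\chi\rangle\langle\chi|$.

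Next we pass to the rotated frame: define $\xi_t:=e^{-i\omega(t)}W^*(\alpha^2\phi_t)\Psi_t$ for a real scalar phase $\omega(t)$ to be chosen, so that $\xi_0=\psi_0\otimes\Omega$. Since the Weyl operator acts only on the Fock factor, the two quantities to bound become $\langle\xi_t,(Q_{\psi_t}\otimes I)\xi_t\rangle$ and $\langle\xi_t,(I\otimes Q_\Omega)\xi_t\rangle$. Computing $i\partial_t\xi_t$ with the help of the shift identity $W^*(\alpha^2\phi_t)b_k W(\alpha^2\phi_t)=b_k+\phi_t(k)$ (consistent with \eqref{eq:commB}) and substituting \eqref{eq:defField} to cancel the linear-in-$(b,b^*)$ terms proportional to $\phi_t$, we can choose $\omega(t)$ so that
\begin{equation*}
i\partial_t\xi_t=\Bigl(H_{LP}^{(\mathrm p)}(t)+\int b_k^* b_k\,dk+A(t,x)\Bigr)\xi_t,
\end{equation*}
with $H_{LP}^{(\mathrm p)}(t):=p^2+\int\bigl(e^{-ikx}\phi_t(k)+e^{ikx}\overline{\phi_t(k)}\bigr)|k|^{-1}\,dk$ driving \eqref{eq:defParticle} and
\begin{equation*}
A(t,x):=\int\bigl[h_t(k,x)b_k^*+\overline{h_t(k,x)}b_k\bigr]\,dk,\qquad h_t(k,x):=|k|^{-1}\bigl(e^{ikx}-\rho_t(k)\bigr),
\end{equation*}
where $\rho_t(k):=\int|\psi_t(y)|^2 e^{iky}\,dy$. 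The decisive structural fact is that $\int|\psi_t(x)|^2 h_t(k,x)\,dx=0$ for every $k$, so $A$ is a genuine fluctuation around the mean field.

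Conjugating further by the Landau--Pekar particle propagator $U_{LP}(t,0)$ (so $U_{LP}(t,0)\psi_0=\psi_t$), the state $\zeta_t:=U_{LP}(t,0)^*\xi_t$ starts at $\psi_0\otimes\Omega$ and evolves under $\int b_k^* b_k\,dk+\tilde A(t)$, where $\tilde A(t)$ inherits the vanishing-mean property relative to $\psi_0$. The goal reduces to $\langle\zeta_t,(Q_{\psi_0}\otimes I)\zeta_t\rangle$ and $\langle\zeta_t,(I\otimes Q_\Omega)\zeta_t\rangle$ both being bounded by $C\alpha^{-4}(1+t^2)^2$. We establish these by a Gronwall bootstrap on the coupled functionals $\langle\zeta_t,\mathcal{N}^j\zeta_t\rangle$ ($j=1,2$, with $\mathcal{N}:=\alpha^2\int b_k^* b_k\,dk$) and $\langle\zeta_t,(Q_{\psi_0}\otimes I)\zeta_t\rangle$. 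The vanishing-mean structure of $\tilde A$ forces the leading, linear-in-fluctuation contribution to $\partial_t\langle\zeta_t,\mathcal{N}\zeta_t\rangle$ to vanish; the remainder is quadratic in the fluctuation and thus gains an extra factor $\alpha^{-2}$, which upon iteration produces the claimed $\alpha^{-4}(1+t^2)^2$ bound on the two expectations.

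The principal obstacle is the ultraviolet singularity of the Fr\"ohlich form factor $|k|^{-1}\notin\mathcal{L}^2(\R^3)$, which invalidates the naive estimate $\|b(h_t)\zeta\|\le\alpha^{-1}\|h_t\|_{\mathcal{L}^2}\|(\int b_k^*b_k)^{1/2}\zeta\|$. The standard remedy is to use the Fr\"ohlich form bound $\pm\bigl(b^*(|k|^{-1}e^{ikx})+\text{h.c.}\bigr)\le\epsilon\bigl(p^2+\int b_k^* b_k\,dk\bigr)+C_\epsilon$ (valid in three dimensions because $|k|^{-2}(1+k^2)^{-1}$ is integrable near the origin and at infinity), combined with the energy conservation $\langle\Psi_t,\tilde H_\alpha^F\Psi_t\rangle=\mathcal{E}(\psi_0,\phi_0)$ from \eqref{eq:energy} and the regularity $\psi_t\in\mathcal{H}^4(\R^3)$, $\phi_t\in\mathcal{L}^2_{(3)}(\R^3)$ propagated by the Landau--Pekar system under Assumption~\ref{ass} (Proposition~\ref{THM:wellposedness}). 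Together these inputs provide enough control on $p^2$ and on moments of the number operator to close the Gronwall argument uniformly for $t\in[-\alpha,\alpha]$.
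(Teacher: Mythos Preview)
Your reduction via $\tr|\gamma-|\chi\rangle\langle\chi||\le 2\sqrt{1-\langle\chi,\gamma\chi\rangle}$ forces you to prove that both $\langle\Psi_t,(Q_{\psi_t}\otimes I)\Psi_t\rangle$ and $\langle\Psi_t,(I\otimes Q_{W(\alpha^2\phi_t)\Omega})\Psi_t\rangle$ are $O(\alpha^{-4}(1+t^2)^2)$, and this is false. In your rotated picture these quantities equal $\|(Q_{\psi_0}\otimes I)\zeta_t\|^2$ and $\|(I\otimes Q_\Omega)\zeta_t\|^2$, and the first Duhamel iterate of your own fluctuation equation already produces a one-phonon contribution
\[
\zeta_t-\psi_0\otimes\Omega\ \approx\ -i\int_0^t\!\int_{\R^3}\tilde h_s(k)\,b_k^*\,(\psi_0\otimes\Omega)\,dk\,ds
\]
of genuine size $\alpha^{-1}(1+|t|)$ in norm; this is exactly the term $R_1(t)$ in Theorem~\ref{THM:main}, cf.\ \eqref{eq:diff} and \eqref{eq:psi1}. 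This piece lies entirely in $\ran(I\otimes Q_\Omega)$, and by the very vanishing-mean property you invoke it also lies in $\ran(Q_{\psi_0}\otimes I)$. Hence both squared norms are of order $\alpha^{-2}(1+t)^2$, not $\alpha^{-4}$, and after the square root you only recover the $\alpha^{-1}$ rate already obtained in \cite{FrankSchlein2013}. No Gronwall iteration on moments of the number operator can repair this: the one-phonon excitation is present in $\zeta_t$ and is not small.

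The paper obtains the $\alpha^{-2}$ rate by avoiding the square root entirely. Writing $\Psi_t=f\otimes g+\Phi$ with $f=e^{-i\int_0^t\omega}\psi_t$, $g=W(\alpha^2\phi_t)\Omega$, $\Phi=R(t)$, one uses the exact identity (Appendix~\ref{sec:reducedabstract})
\[
\gamma_t^{\mathrm{particle}}-|\psi_t\rangle\langle\psi_t|
= |f\rangle\big\langle\,\langle g,\Phi\rangle_{\mathcal F}\,\big|+\big|\,\langle g,\Phi\rangle_{\mathcal F}\,\big\rangle\langle f|+\tr_{\mathcal F}|\Phi\rangle\langle\Phi|,
\]
whose trace norm is bounded by $2\|\langle g,\Phi\rangle_{\mathcal F}\|_{\mathcal L^2}+\|\Phi\|^2$. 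The last summand is $O(\alpha^{-2}(1+t)^2)$ by \eqref{eq:diff}; the first two require the separate almost-orthogonality estimates \eqref{eq:true1}, \eqref{eq:true2}. The vanishing-mean structure you identified is precisely what drives those almost-orthogonality relations, but it must enter \emph{linearly} in this trace identity rather than inside a quadratic expectation that is then square-rooted.
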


Note that $\gamma_t^\mathrm{particle}$, $\gamma_t^\mathrm{field}$, $|\psi_t\rangle\langle\psi_t|$ and $|W(\alpha^2\phi_t)\Omega\rangle\langle W(\alpha^2\phi_t)\Omega|$ all have trace norm equal to one (in fact, they are non-negative operators with trace one) and therefore Theorem \ref{reduceddensity} gives a non-trivial approximation up to times $t=o(\alpha)$. Already the approximation up to times of order one is significant since this is the time scale on which $\psi_t$ changes. It is a bonus that the same approximation is in fact valid for much longer times.

We emphasize that the Landau--Pekar approximation to the Fr\"ohlich dynamics depends on $\alpha$ (through \eqref{eq:defField}). As we will explain in Subsection \ref{sec:comp}, without allowing for an $\alpha$-dependence one can not approximate $\gamma_t^\mathrm{particle}$ with accuracy $\alpha^{-2}$ for times of order one.

We next present a more precise result which comes at the expense of a more complicated formulation. We approximate the state 
$e^{-i \tilde{H}_{\alpha}^{F}t}\psi_0\otimes W(\alpha^2\phi_0)\Omega$ itself in $\mathcal L^2(\R^3)\otimes\mathcal F$, and not only its reduced density matrices. However, it turns out that up to the desired order $\alpha^{-2}$ this is \emph{not} possible in terms of simple product states. Instead, we need to include an explicit non-product state of order $\alpha^{-1}$ which takes correlations between the particle and the field into account. The key observation is that this term satisfies an almost orthogonality condition, so that it does not contribute to the reduced density matrices to order $\alpha^{-1}$. For the statement we need the real scalar function $\omega$ defined as
\begin{align}
\label{eq:defC1}
\omega(t):=\alpha^2\im(\phi_t,\partial_t\phi_t) + \|\phi_t\|^2 \,.
\end{align}
It will follow from Lemma \ref{wellposedenergy} below that this function is uniformly bounded in $t\in\R$.

The following is our main result.

\begin{theorem}\label{THM:main}
Assume that $\psi_0$ and $\phi_0$ satisfy Assumption \ref{ass} and let $(\psi_t,\phi_t)$ be the solution of \eqref{eq:defParticle}, \eqref{eq:defField} with initial condition $(\psi_0,\phi_0)$. Then there is a decomposition
\begin{align}
e^{-i\tilde{H}_{\alpha}^{F}t}\psi_0\otimes W(\alpha^2\phi_0)\Omega=e^{-i\int_0^t \omega(s)\,ds} \psi_t\otimes W(\alpha^2 \phi_t)\Omega+R(t)
\label{eq:defD}
\end{align}
and a constant $C>0$ such that for all $\alpha\geq 1$ and all $t\in [-\alpha,\alpha]$,
\begin{align}
\left\|\left\langle \Omega, \ W^*(\alpha^2\phi_t)R(t)\right\rangle_{\mathcal{F}}\right\|_{\mathcal{L}^2(\mathbb{R}^3)}& \leq C\alpha^{-2}|t|\left(1+|t|\right) \label{eq:true1}\\
\left\|\left\langle \psi_t,\ W^*(\alpha^2\phi_t)R(t)\right\rangle_{\mathcal{L}^2(\mathbb{R}^3)}\right\|_{\mathcal{F}}& \leq C\alpha^{-2}|t|\left(1+|t|\right)\label{eq:true2}
\end{align}
and
\begin{align}\label{eq:diff}
\left\|R(t)\right\|_{\mathcal{L}^2(\mathbb{R}^3)\otimes\mathcal{F}}\leq C\alpha^{-1}\left(1+|t|\right) \,.
\end{align}
More precisely, \eqref{eq:defD} holds with $R(t)=R_1(t)+R_2(t)$ and with the following bounds
\begin{align}
\left\|\left\langle \Omega, \ W^*(\alpha^2\phi_t)R_1(t)\right\rangle_{\mathcal{F}}\right\|_{\mathcal{L}^2(\mathbb{R}^3)} & \leq C\alpha^{-2}t^2 \label{eq:true1rem}\\
\left\|\left\langle \psi_t,\ W^*(\alpha^2\phi_t)R_1(t)\right\rangle_{\mathcal{L}^2(\mathbb{R}^3)}\right\|_{\mathcal{F}} & \leq C\alpha^{-2}t^2 \label{eq:true2rem}
\end{align}
and
\begin{align}\label{eq:diffrem}
\left\|R_2(t)\right\|_{\mathcal{L}^2(\mathbb{R}^3)\otimes\mathcal{F}}\leq C\alpha^{-2}|t|\left(1+|t|\right) \,,\qquad
\left\|R_1(t)\right\|_{\mathcal{L}^2(\mathbb{R}^3)\otimes\mathcal{F}}\leq C\alpha^{-1}\left(1+|t|\right) \,.
\end{align}
\end{theorem}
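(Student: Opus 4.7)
The plan is to conjugate the Fr\"ohlich dynamics by the time-dependent Weyl operator $W(\alpha^2\phi_t)$ and a scalar phase so that the Landau--Pekar ansatz becomes the trivial product state $\psi_t\otimes\Omega$; the resulting fluctuation dynamics has a self-adjoint generator for which the residual error is a one-boson source satisfying exact orthogonality relations coming from \eqref{eq:defField}. Duhamel's principle, together with a Fock-sector decomposition, then produces the splitting $R=R_1+R_2$ with the claimed bounds.

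Introduce the fluctuation vector
\[
\chi_t := e^{i\int_0^t\omega(s)\,ds}\,W^*(\alpha^2\phi_t)\,e^{-i\tilde H_\alpha^F t}\bigl(\psi_0\otimes W(\alpha^2\phi_0)\Omega\bigr),
\]
so that the theorem is equivalent to estimating $\chi_t-\tilde\chi_t$ with $\tilde\chi_t := \psi_t\otimes\Omega$. Using the Weyl identity $W^*(\alpha^2\phi_t)b_kW(\alpha^2\phi_t)=b_k+\phi_t(k)$ and a Baker--Campbell--Hausdorff evaluation of $W^*(\alpha^2\phi_t)\partial_tW(\alpha^2\phi_t)$, whose scalar part is absorbed by the $\omega$-phase of \eqref{eq:defC1}, a direct computation gives $i\partial_t\chi_t=\mathcal H_\chi(t)\chi_t$ where
\[
\mathcal H_\chi(t) := \bigl[-\Delta+h_{\phi_t}(x)\bigr] + \int b_k^* b_k\,dk + b^*(V_x-F_t)+b(V_x-F_t),
\]
with $V_x(k):=|k|^{-1}e^{ikx}$ ($x$ being the electron position operator) and $F_t(k):=|k|^{-1}\int|\psi_t(y)|^2 e^{iky}\,dy$. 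The cancellation of the $\phi_t$-linear second-quantized terms in $W^*\tilde H_\alpha^F W+(-iW^*\partial_t W)$ is precisely the content of the field equation \eqref{eq:defField}, and the resulting operator is manifestly self-adjoint.

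Since $i\partial_t\tilde\chi_t=[-\Delta+h_{\phi_t}]\tilde\chi_t$ by \eqref{eq:defParticle}, the formal error source is
\[
E_1(t) := (\mathcal H_\chi(t)-i\partial_t)\tilde\chi_t = b^*(V_x-F_t)\psi_t\otimes\Omega,
\]
which lies in the one-boson sector of $\mathcal F$. It enjoys two \emph{exact} orthogonality relations: $\langle\Omega,E_1(t)\rangle_\mathcal F=0$ in $\mathcal L^2(\mathbb R^3)$ (automatic from $b^*\Omega\perp\Omega$), and $\langle\psi_t,E_1(t)\rangle_{\mathcal L^2(\mathbb R^3)}=0$ in $\mathcal F$, because $\int|\psi_t(x)|^2 V_x(k)\,dx = F_t(k)$ by the definition of $F_t$ together with $\|\psi_t\|_{\mathcal L^2}=1$. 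This last identity is the algebraic heart of the proof: it is a direct consequence of the Landau--Pekar choice of $\phi_t$. Decomposing $r(t):=\chi_t-\tilde\chi_t$ into Fock-number sectors $r=r^{(0)}+r^{(1)}+r^{(\geq 2)}$, I would set $W^*(\alpha^2\phi_t)e^{i\int_0^t\omega(s)\,ds}R_1(t):=r^{(1)}(t)$ and $R_2:=R-R_1$. Then $\langle\Omega,W^*(\alpha^2\phi_t)R_1(t)\rangle_\mathcal F=0$ identically, making \eqref{eq:true1rem} trivial; the $\psi_t$-projection of $r^{(1)}$ vanishes at leading order by the LP orthogonality, while its subleading piece, coming from $\partial_s\psi_s$ integrated along the Duhamel expression for $r^{(1)}$, carries the explicit factor $\alpha^{-2}$ from \eqref{eq:defField}, producing \eqref{eq:true2rem}.

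The size bounds rest on $\|b^*(f)\Omega\|^2_\mathcal F=\alpha^{-2}\|f\|^2_{\mathcal L^2}$: an $\mathcal L^2$-bounded coefficient $f$ generates a one-boson state of norm $O(\alpha^{-1})$ in $\mathcal L^2(\mathbb R^3)\otimes\mathcal F$, and a Gronwall-type estimate on $\|r^{(1)}\|$ then yields $\|R_1\|\leq C\alpha^{-1}(1+|t|)$. The component $R_2$ encodes the two-or-more-boson excitations and the back-reaction of $R_1$ into the zero-boson sector; each of these incurs a further application of $b^*$ or $b$, contributing an additional $\alpha^{-1}$ and giving $\|R_2\|\leq C\alpha^{-2}|t|(1+|t|)$ after Gronwall iteration. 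The main obstacle is that $E_1(s)$ is not in $\mathcal L^2(\mathbb R^3)\otimes\mathcal F$ pointwise in $s$, since $\|V_x-F_s\|_{\mathcal L^2}$ diverges in the ultraviolet; to make Duhamel rigorous, I would pair against $\mathcal L^2$ test vectors (working at the level of dual forms), regularize with a UV cutoff and pass to the limit using uniform bounds, or --- cleaner --- construct a refined Bogoliubov-type ansatz $\tilde\chi_t+\tilde r^{(1)}_t$ whose residual is genuinely $\mathcal L^2$. Throughout, the uniform-in-$t$ regularity $\psi_t\in\mathcal H^4$, $\phi_t\in\mathcal L^2_{(3)}$ from Assumption \ref{ass} and Proposition \ref{THM:wellposedness} is essential to control the relevant Sobolev-type norms on $[-\alpha,\alpha]$.
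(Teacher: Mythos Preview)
Your fluctuation-dynamics setup and the two exact orthogonality relations for $E_1(t)$ are correct and are the algebraic core of the argument; the paper arrives at the same structure via Lemma~\ref{LM:newform}, where the projector $P_{\tilde\psi_s}^\perp$ in \eqref{eq:fts} encodes precisely your identity $\int|\psi_s(x)|^2V_x\,dx=F_s$. The genuine gap is at ``a Gronwall-type estimate on $\|r^{(1)}\|$ then yields $\|R_1\|\le C\alpha^{-1}(1+|t|)$'': since $E_1(s)\notin\mathcal L^2(\mathbb R^3)\otimes\mathcal F$, the Duhamel integrand has infinite norm pointwise in $s$ and no Gronwall inequality is available. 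None of your three suggested fixes is concrete enough to produce the stated bound, and the same obstruction blocks the $\alpha^{-2}$ estimate for $R_2$, whose source again involves $b^{\#}(V_x-F_t)$ acting on $r^{(1)}$. Your claim that $\langle\psi_t,r^{(1)}(t)\rangle$ is small ``by the LP orthogonality'' is also too quick: the source $E_1(s)$ is orthogonal to $\psi_s$, not to $\psi_t$, and the one-boson evolution does not preserve this; the paper devotes Subsection~\ref{sub:true2} to deriving an ODE for $\langle\tilde\psi_t\otimes\Phi,e^{-iH_{\phi_t}t}D_0\rangle$ whose right side is shown to be $O(\alpha^{-2}t)$ through the slowness $\|\partial_t\phi_t\|\lesssim\alpha^{-2}$ of the \emph{field} (not of $\psi$, as you write).

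The paper's resolution of the ultraviolet problem is a specific device you do not mention: one integrates by parts in the \emph{time} variable of the Duhamel integral, writing $e^{iH_{\phi_t}s}=-i(H_{\phi_t}+M)^{-1}e^{-iMs}\partial_s e^{i(H_{\phi_t}+M)s}$, to manufacture a resolvent $(H_{\phi_t}+M)^{-1}$ in front of the singular $\int e^{ik\cdot x}b_k^*|k|^{-1}dk$. The analytic input is then Lemma~\ref{singular}, namely $\|(-\Delta+1)^{-1/2}\int e^{ik\cdot x}b_k^*\,u\otimes\Omega\,|k|^{-1}dk\|\lesssim\alpha^{-1}\|u\|_{\mathcal H^1}$, proved by a further integration by parts in $x$ via $e^{ik\cdot x}=(1+|k|^2)^{-1}(1-ik\cdot\nabla_x)e^{ik\cdot x}$. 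This is a dynamical analogue of the Lieb--Yamazaki bound and is the paper's main technical contribution; for the terms with two $|k|^{-1}$ factors (the paper's $D_1,D_2$) it must be iterated and combined with a second time-integration by parts in the outer variable to remove stray $\tilde H_{\phi_t}$ factors (Lemma~\ref{singular2} and the treatment of $D_{111}$, $D_{211}$). The $\mathcal H^4\times\mathcal L^2_{(3)}$ regularity of Assumption~\ref{ass} is needed exactly to absorb the spatial derivatives these integrations by parts create. Finally, the paper's $R_1$ is the explicit term $-iW(\alpha^2\phi_t)e^{-iH_{\phi_t}t}D_0$ of Proposition~\ref{decomp}, which is spread over all Fock sectors through the displacement $W^*(\alpha^2\phi_t)W(\alpha^2\phi_s)$; hence \eqref{eq:true1rem} is not automatic but is obtained in Subsection~\ref{sub:true1} by commuting $b_k^*$ through that displacement and invoking $\|\phi_t-\phi_s\|\lesssim\alpha^{-2}|t-s|$.
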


Similarly as before we note that for $t=o(\alpha)$ the term $R(t)$ is of lower order than the main term $e^{-i\int_0^t \omega(s)\,ds} \psi_t\otimes W(\alpha^2 \phi_t)\Omega$, which has constant norm equal to one.

The message of Theorem \ref{THM:main} is that, while $R(t)$ is in general not of order $\alpha^{-2}$ (for times of order one), it can be split into a piece which is, namely $R_2(t)$, and a piece which satisfies almost orthogonality conditions, so that it does not contribute to the reduced particle or field density matrices at order $\alpha^{-1}$ either. The term $R_1(t)$ is given explicitly in \eqref{eq:psi1} below. 

Theorem \ref{THM:main} implies Theorem \ref{reduceddensity} by a simple abstract argument, which we explain in Appendix \ref{sec:reducedabstract}. In the following we concentrate on proving Theorem \ref{THM:main}.

In Subsection \ref{sec:comp} we compare Theorem \ref{THM:main} with a similar approximation in \cite{FrankSchlein2013} where $\phi_t$ is independent of $t$. In Lemma \ref{opt} we show that this simpler approximation does not yield the same accuracy in terms of powers of $\alpha^{-1}$ as Theorem \ref{THM:main}. In this sense Theorem \ref{THM:main} derives the Landau--Pekar dynamics from the Fr\"ohlich dynamics and answers an open question in \cite{FrankSchlein2013}.

While it is necessary to take the time dependence of $\phi_t$ into account, this dependence is still weak for times of order $\alpha$ as considered in our theorems. The field $\phi_t$ changes by order one only on times of order $\alpha^2$, and it would be desirable to extend Theorems \ref{reduceddensity} and \ref{THM:main} to this time scale, at least for a certain class of initial conditions. This remains an open problem.

The almost orthogonality relations \eqref{eq:true1} and \eqref{eq:true2} clearly play an important role in our proof. Let us discuss their origin in more detail. We introduce the function
\begin{equation}
\label{eq:tildepsi}
\tilde\psi_t := e^{-i\int_0^t\omega(s)\,ds}\,\psi_t
\end{equation}
and consider the problem of approximating $e^{-i\tilde{H}_{\alpha}^{F}t}\psi_0\otimes W(\alpha^2\phi_0)\Omega$ by a function of the form $\tilde\psi_t\otimes W(\alpha^2 \phi_t)\Omega$. (We do \emph{not} assume at this point that $\tilde\psi_t$ and $\phi_t$ satisfy an equation.) Since $W(\alpha^2\phi_t)$ is unitary, this is the same as the problem of choosing $\tilde\psi_t$ and $\phi_t$ so as to minimize the norm of the vector
\begin{align}
W^*(\alpha^2\phi_t) e^{-i\tilde{H}_{\alpha}^{F}t}\psi_0\otimes W(\alpha^2\phi_0)\Omega - \tilde\psi_t\otimes \Omega\,.\label{eq:sec}
\end{align}
Clearly, for given $\psi_0$, $\phi_0$ and $\phi_t$, the optimal choice for $\tilde\psi_t$ is
\begin{align}\label{eq:trial1}
\tilde\psi_t=\left\langle \Omega, \ W^*(\alpha^2\phi_t)e^{-i\tilde{H}_{\alpha}^{F}t} \psi_0\otimes W(\alpha^2\phi_0)\Omega\right\rangle_{\mathcal{F}}.
\end{align}
In order to determine $\phi_t$ we only solve the simpler problem of minimizing the norm of the projection of \eqref{eq:sec} onto the subspace $\mathrm{span} \{\tilde\psi_t\}\otimes\mathcal F$. This norm could be made zero if we could achieve
\begin{align}\label{eq:trial2}
\Omega = \left\langle \tilde\psi_t,\ W^*(\alpha^2\phi_t)e^{-i\tilde{H}_{\alpha}^{F}t} \psi_0\otimes W(\alpha^2\phi_0)\Omega \right\rangle_{\mathcal{L}^2}.
\end{align}
While it may not be possible to have exact equalities in \eqref{eq:trial1} and \eqref{eq:trial2}, we will see that the Landau--Pekar equations yield almost equalities. In fact, the almost orthogonality relations \eqref{eq:true1} and \eqref{eq:true2} in our main theorem state exactly that
\begin{align}
\tilde\psi_t-\left\langle \Omega, \ W^*(\alpha^2\phi_t)e^{-i\tilde{H}_{\alpha}^{F}t} \psi_0\otimes W(\alpha^2\phi_0)\Omega\right\rangle_{\mathcal{F}}=&\, O_{\mathcal L^2}\left(\alpha^{-2}|t|\left(1+|t|\right)\right)\label{eq:true11}\\
\Omega - \left\langle \tilde\psi_t,\ W^*(\alpha^2\phi_t)e^{-i\tilde{H}_{\alpha}^{F}t} \psi_0\otimes W(\alpha^2\phi_0)\Omega\right\rangle_{\mathcal{L}^2}=&\, O_{\mathcal F}\left(\alpha^{-2}|t|\left(1+|t|\right)\right) .\label{eq:true12}
\end{align}


\subsection{Comparison with earlier results}\label{sec:comp}

The problem of approximating the Fr\"ohlich dynamics of a polaron was studied before in \cite{FrankSchlein2013} and it was shown that
\begin{equation}
\label{eq:fs}
\left\| e^{-i \tilde{H}_{\alpha}^{F}t}\psi_0\otimes W(\alpha^2\phi_0)\Omega - e^{-i\|\phi_0\|_2^2 t} \zeta_t \otimes W(\alpha^2\phi_0) \right\|_{\mathcal L^2\otimes\mathcal F} \leq C \alpha^{-1} |t|^{1/2} e^{C|t|} \,,
\end{equation}
where $\zeta_t$ denotes the solution of the \emph{linear} equation
$$
i\partial_t\zeta_t(x) = \left[-\Delta + \int_{\R^3} \left[ e^{-ikx}\phi_0(k) + e^{ik\cdot x} \overline{\phi_0(k)} \right]\frac{dk}{|k|}\right] \zeta_t(x)
$$
with initial condition $\psi_0$. We stress that in this approximation, $\phi_0$ does not evolve in time.

Our Theorem \ref{THM:main} improves upon this result by exhibiting an approximation which is valid for longer times. Namely, \eqref{eq:diff} says that
$$
\left\| e^{-i \tilde{H}_{\alpha}^{F}t}\psi_0\otimes W(\alpha^2\phi_0)\Omega - e^{-i\int_0^t\omega(s)\,ds} \psi_t \otimes W(\alpha^2\phi_t)\Omega \right\|_{\mathcal L^2\otimes\mathcal F} \leq C \alpha^{-1} \left(1+|t|\right) \,.
$$
(In \cite{FrankSchlein2013} weaker regularity and decay assumptions are imposed on $\psi_0$ and $\phi_0$, but we emphasize that \eqref{eq:diff} is also valid under weaker assumptions than those in Assumption \ref{ass}. In fact, the latter assumption is needed to bound $R_2(t)$, whereas for \eqref{eq:diff} one can avoid the use of Duhamel's principle in Proposition \ref{decomp}.)

More importantly, even for times of order one the bounds from \cite{FrankSchlein2013} do not allow one to approximate $\gamma_t^\mathrm{field}$ as precisely as in Theorem \ref{reduceddensity}. In fact, \eqref{eq:fs} gives, using \eqref{eq:traceineq} and possibly changing the value of $C$,
\begin{align*}
\tr_{\mathcal L^2} \left| \gamma_t^\mathrm{particle} - |\zeta_t\rangle\langle\zeta_t| \right| & \leq C \alpha^{-1} |t|^{1/2} e^{C|t|} \,, \\
\tr_{\mathcal L^2} \left| \gamma_t^\mathrm{field} - |W(\alpha^2\phi_0)\Omega\rangle\langle W(\alpha^2\phi_0)\Omega| \right| & \leq C \alpha^{-1} |t|^{1/2} e^{C|t|} \,.
\end{align*}
The next result shows that in the approximation of $\gamma_t^\mathrm{field}$ the order $\alpha^{-1}$ (for times of order one) cannot be improved in general. 
In contrast, Theorem \ref{reduceddensity} says that $|W(\alpha^2\phi_t)\Omega\rangle\langle W(\alpha^2\phi_t)\Omega|$ provides an approximation to order $\alpha^{-2}$. This gain of a factor of $\alpha^{-1}$ comes from the time dependence of $\phi_t$ through the Landau--Pekar equations.

\begin{lemma}\label{opt}
In addition to Assumption \ref{ass} suppose that $\phi_0\not\equiv-\sigma_{\psi_0}$ in the notation \eqref{eq:forcing}. Then there are $\epsilon>0$, $C>0$ and $c>0$ such that for all $|t|\in[C\alpha^{-1},\epsilon]$ and all $\alpha\geq C/\epsilon$,
$$
\tr_{\mathcal F} \left|\gamma^\mathrm{field} - |W(\alpha^2\phi_0)\Omega\rangle\langle W(\alpha^2\phi_0\Omega| \right| \geq c \alpha^{-1} |t| \,.
$$
\end{lemma}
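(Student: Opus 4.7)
The plan is to use a reverse triangle inequality, interpolating via the true Landau--Pekar coherent state whose distance to $\gamma_t^\mathrm{field}$ is controlled by Theorem~\ref{reduceddensity}. Writing $P_t := |W(\alpha^2\phi_t)\Omega\rangle\langle W(\alpha^2\phi_t)\Omega|$ and $P_0$ analogously, Theorem~\ref{reduceddensity} gives
\[
\tr_\mathcal{F}\bigl|\gamma_t^\mathrm{field} - P_0\bigr| \;\geq\; \tr_\mathcal{F}|P_t - P_0| - C\alpha^{-2}(1+t^2).
\]
For $|t|\leq\epsilon\leq 1$ the subtracted term is at most $2C\alpha^{-2}$, so it suffices to bound $\tr|P_t-P_0|$ below by a quantity of order $\alpha^{-1}|t|$ on the window $[C\alpha^{-1},\epsilon]$.

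The next step is to compute $\tr|P_t - P_0|$ exactly. Since $P_t$ and $P_0$ are rank-one orthogonal projections onto unit vectors, $\tr|P_t - P_0| = 2\sqrt{1 - |\langle W(\alpha^2\phi_t)\Omega,W(\alpha^2\phi_0)\Omega\rangle|^2}$. Applying the Baker--Campbell--Hausdorff formula with the rescaled commutation relations \eqref{eq:commB} yields $W(f)W(g) = W(f+g)\exp(-i\alpha^{-2}\im\langle f,g\rangle)$ and $\langle\Omega,W(h)\Omega\rangle = \exp(-\alpha^{-2}\|h\|_{\mathcal L^2}^2/2)$, so a direct calculation gives
\[
\bigl|\langle W(\alpha^2\phi_t)\Omega,W(\alpha^2\phi_0)\Omega\rangle\bigr|^2 \;=\; e^{-\alpha^2\|\phi_t-\phi_0\|_{\mathcal L^2}^2}.
\]
Thus the problem reduces to a quantitative lower bound on $\|\phi_t-\phi_0\|_{\mathcal L^2}$.

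For this I would Taylor-expand the field equation \eqref{eq:defField} at $t=0$. Since $\partial_t\phi_t\big|_{t=0} = -i\alpha^{-2}(\phi_0+\sigma_{\psi_0})$,
\[
\phi_t - \phi_0 \;=\; -i\alpha^{-2}t\bigl(\phi_0+\sigma_{\psi_0}\bigr) + \int_0^t\!\!\int_0^s \partial_\tau^2 \phi_\tau\,d\tau\,ds.
\]
Differentiating \eqref{eq:defField} once more, $\partial_t^2\phi_t = -i\alpha^{-2}\bigl(\partial_t\phi_t + \partial_t\sigma_{\psi_t}\bigr)$; using Lemma~\ref{wellposedenergy} and Proposition~\ref{THM:wellposedness} one obtains $\|\partial_t^2\phi_t\|_{\mathcal L^2}\leq C\alpha^{-2}$ uniformly in $t\in\R$, so the remainder is bounded by $C\alpha^{-2}t^2$. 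Setting $c_0 := \|\phi_0+\sigma_{\psi_0}\|_{\mathcal L^2}>0$ (positive by hypothesis), there exists $\epsilon>0$, depending only on $c_0$ and the initial data, such that for $|t|\leq\epsilon$,
\[
\|\phi_t-\phi_0\|_{\mathcal L^2}\;\geq\; \tfrac{c_0}{2}\,\alpha^{-2}|t|.
\]

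Assembling the pieces, $\alpha^2\|\phi_t-\phi_0\|^2 \geq (c_0^2/4)\alpha^{-2}t^2$, and this quantity is bounded by $(c_0^2/4)\epsilon^2/(C/\epsilon)^2$ for $\alpha\geq C/\epsilon$. Hence the elementary bound $1-e^{-x}\geq x/2$ (valid on any bounded interval $[0,x_0]$) yields $\tr|P_t-P_0|\geq (c_0/\sqrt 2)\,\alpha^{-1}|t|$. Inserting this into the first display and choosing $C$ in the constraint $|t|\geq C\alpha^{-1}$ large enough to absorb the $O(\alpha^{-2})$ error term from Theorem~\ref{reduceddensity} delivers the claimed lower bound $c\,\alpha^{-1}|t|$ with $c = c_0/(2\sqrt 2)$. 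The only nontrivial step is the uniform bound $\|\partial_t^2\phi_t\|_{\mathcal L^2}\leq C\alpha^{-2}$: it requires estimating $\partial_t\sigma_{\psi_t}$ in $\mathcal L^2$, hence controlling $\bar\psi_t\partial_t\psi_t$, which is where the $\mathcal H^4$ hypothesis of Assumption~\ref{ass} comfortably enters via the results of Section~\ref{sec:comp} and the well-posedness propositions. Everything else is a calculation.
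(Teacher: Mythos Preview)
Your proof is correct and follows essentially the same route as the paper: reduce via Theorem~\ref{reduceddensity} and the rank-one trace identity to a lower bound on $\|\phi_t-\phi_0\|_{\mathcal L^2}$, then Taylor-expand the field equation and control the remainder using the a-priori bounds from Lemma~\ref{wellposedenergy} and Proposition~\ref{THM:wellposedness}. The only slip is the phrase ``uniformly in $t\in\R$'' for the bound on $\partial_t^2\phi_t$ (Proposition~\ref{THM:wellposedness} only covers $|t|\lesssim\alpha^2$), but since you use it only for $|t|\leq\epsilon$ this is harmless.
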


Since Theorem \ref{reduceddensity} is a consequence of Theorem \ref{THM:main} and since we showed that one cannot replace $\phi_t$ by $\phi_0$ in Theorem \ref{reduceddensity}, the same applies to Theorem \ref{THM:main}.

\medskip

Let us consider our problem from a wider perspective. We have a composite quantum system $\mathcal H_1\otimes\mathcal H_2$ and a Hamiltonian which couples the two subsystems. Each system has an effective `Planck constant' and the characteristic feature of the problem is that the Planck constant of one system goes to zero, whereas that of the other system remains fixed. Thus, one of the system becomes classical, whereas the other one remains quantum-mechanical, and Ginibre, Nironi and Velo \cite{GiNiVe} used the term `partially classical limit' in a closely related context. (For us, the `Planck constant' of the phonons is $\alpha^{-2}$, as can be seen from the commutation relations, whereas that of the electron is of order one.) A prime example of such a problem is the Born--Oppenheimer approximation, where the inverse square root of the nuclear mass plays the role of the small Planck constant.

Here, however, we consider the case where $\mathcal H_1\otimes\mathcal H_2$ has infinitely many degrees of freedom. As is well known, our Hamiltonian is the Wick quantization of an energy functional on an infinite-dimensional phase space and the notion of `Planck constant' has a well-defined meaning through the commutation relations of the fields. (We emphasize that in our problem we can imagine that we have also a field $\Psi$ for the electrons, but that we only consider the sector of a single electron.) 

Although there is an enormous literature concerning the classical limit, starting with Hepp's work \cite{He}, and although we believe that the question of a partially classical limit is a very natural one which appears in many models, we are only aware of the single work \cite{GiNiVe} prior to \cite{FrankSchlein2013} on this question. The paper \cite{GiNiVe} studies fluctuation dynamics. Closer to our focus here are the works \cite{Fa,AmFa} about the Nelson model with a cut-off where, however, a classical limit on \emph{both} systems is taken. On the level of results one obtains equations similar to the Landau--Pekar equations (without the factor $\alpha^2$ in \eqref{eq:defField}), but the proofs are completely different, as \cite{AmFa} relies on the Wigner measure approach from \cite{AmNi1,AmNi2}. The polaron model, in contrast to the Nelson model, does not require a cut-off, although this is not obvious since the operator $\int e^{ik\cdot x} b_k |k|^{-1}\,dk$ and its adjoint are not bounded relative to the number operator. Lieb and Yamazaki \cite{LiYa} devised a method to deal with this problem in the stationary case, but it is not clear to us how to apply their argument in a dynamical setting and we consider our solution of this problem as a technical novelty in this paper. Our methods apply equally well to a partially classical limit in the cut-off Nelson model and, in fact, the proofs in that case would be considerably shorter.


\subsection{An equivalent form of the Landau--Pekar equations}\label{sec:equiv}

Often the Landau--Pekar equations are stated in the form
\begin{align}
\label{eq:defParticleequiv}
i\partial_t\psi_t & = \left( -\Delta + |x|^{-1}*P_t \right) \psi_t \,,\\
\label{eq:defFieldequiv}
\alpha^4 \partial_t^2 P_t & = - P_t - (2\pi)^2 |\psi_t|^2
\end{align}
for a real-valued polarization field $P_t$; see, e.g., \cite{LaPe,PolaronReview}. Let us show that this pair of equations is equivalent to the pair of equations that we discussed so far. In fact, assume that $\psi_t$ and $\phi_t$ solve \eqref{eq:defParticle} and \eqref{eq:defField} and define
$$
P_t(x) := (2\pi)^{-1} \re \int_{\R^3} |k| \phi_t(k) e^{-ik\cdot x}\,dk \,,
$$
as well as the auxiliary function
$$
Q_t(x) := (2\pi)^{-1} \im \int_{\R^3} |k| \phi_t(k) e^{-ik\cdot x}\,dk \,.
$$
If we multiply \eqref{eq:defField} by $|k|$ and integrate with respect to $e^{-ik\cdot x}$, we obtain
$$
i\alpha^2\partial_t (P_t+iQ_t) = P_t+iQ_t + (2\pi)^2 |\psi_t|^2 \,.
$$
Since $P_t$ and $Q_t$ are real, this equation is equivalent to the pair of equations
$$
\alpha^2\partial_t P_t = Q_t \,,
\qquad
\alpha^2\partial_t Q_t = - P_t- (2\pi)^2 |\psi_t|^2 \,. 
$$
Here we can eliminate $Q_t$ by differentiating the first equation and arrive at \eqref{eq:defFieldequiv}.

Moreover, the inversion formula
$$
\phi_t(k) = (2\pi)^{-2} |k|^{-1} \int_{\R^3} (P_t+iQ_t) e^{ik\cdot x}\,dx
$$
implies
$$
\int_{\R^3} \left( e^{-ik\cdot x} \phi_t(k) + e^{ik\cdot x} \overline{\phi_t(k)} \right) \frac{dk}{|k|}
= |x|^{-1} * P_t \,,
$$
which yields \eqref{eq:defParticleequiv}.

\subsection*{Acknowledgements}

The authors are grateful to J. Fr\"ohlich, M. Lewin, B. Schlein and R. Seiringer for their helpful remarks at various stages of this project. Support through NSF grants PHY--1347399 and DMS--1363432 (R.L.F.) and DMS--1308985 and DMS--1443225 (Z.G.) is acknowledged.


\section{Outline of the proof}

\subsection{Well-posedness of the Landau--Pekar equations}

We begin by discussing the well-posedness of the equations for $\psi_t$ and $\phi_t$ in \eqref{eq:defParticle} and \eqref{eq:defField}. We use the following abbreviations for the coupling terms in these equations,
\begin{equation}
\label{eq:effpot}
V_\phi(x) := \int_{\R^3} \left[ e^{-ik\cdot x} \phi(k)+e^{ik\cdot x} \overline{\phi(k)} \right] \frac{dk}{|k|}
\end{equation}
and
\begin{equation}
\label{eq:forcing}
\sigma_\psi(k) := |k|^{-1} \int_{\R^3} |\psi_t(x)|^2 e^{ik\cdot x} \,dx \,.
\end{equation}
The following lemma, which is proved in Appendix \ref{sec:wellposedness}, states global well-posedness in the energy space $\mathcal H^1(\R^3)\times \mathcal L^2(\R^3)$.

\begin{lemma}\label{wellposedenergy}
For any $(\psi_0,\phi_0)\in \mathcal H^1(\R^3)\times \mathcal L^2(\R^3)$ there is a unique global solution $(\psi_t,\phi_t)$ of \eqref{eq:defParticle}, \eqref{eq:defField}. One has the conservation laws
$$
\|\psi_t\|_{\mathcal L^2}=\|\psi_0\|_{\mathcal L^2} \qquad\text{and}\qquad \mathcal E(\psi_t,\phi_t) = \mathcal E(\psi_0,\phi_0)
\qquad\text{for all}\ t\in\R \,.
$$
Moreover, for all $\alpha>0$ and all $t\in\R$,
\begin{equation}
\label{eq:energycons}
\|\psi_t\|_{\mathcal H^1} \lesssim 1 \,,
\qquad
\|\phi_t\|_{\mathcal L^2} \lesssim 1 \,,
\end{equation}
and
\begin{equation}
\label{eq:slowPhi1}
\|\partial_{t}\phi_t\|_{\mathcal L^2} \lesssim \alpha^{-2} \,,
\qquad
\|\phi_t-\phi_s\|_{\mathcal L^2} \lesssim \alpha^{-2}|t-s| \,,
\qquad
\|\sigma_{\psi_t}\|_{\mathcal L^2} \lesssim 1 \,.
\end{equation}
\end{lemma}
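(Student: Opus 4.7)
My plan is to establish the lemma by a standard local well-posedness argument via Duhamel's principle, followed by the conservation laws, and finally by extracting time-independent \emph{a priori} bounds from energy conservation that both globalize the solution and yield the quantitative estimates.

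The analytic heart of the proof consists of two bilinear estimates. First, since the Fourier transform of $V_\phi$ equals (up to constants) $(\phi(k)+\overline{\phi(-k)})/|k|$, one has $\||k|\widehat{V_\phi}\|_{\mathcal L^2}\lesssim\|\phi\|_{\mathcal L^2}$, so $V_\phi\in\dot{\mathcal H}^1(\R^3)\hookrightarrow\mathcal L^6(\R^3)$, and H\"older together with Sobolev yield
\[
\|V_\phi\psi\|_{\mathcal L^2}\le \|V_\phi\|_{\mathcal L^6}\|\psi\|_{\mathcal L^3}\lesssim\|\phi\|_{\mathcal L^2}\|\psi\|_{\mathcal H^1}.
\]
Second, by Parseval and Hardy--Littlewood--Sobolev,
\[
\|\sigma_\psi\|_{\mathcal L^2}^2 = c\iint_{\R^3\times\R^3}\frac{|\psi(x)|^2|\psi(y)|^2}{|x-y|}\,dx\,dy\lesssim\|\psi\|_{\mathcal L^2}^3\|\nabla\psi\|_{\mathcal L^2}.
\]
With these in hand, I would set up the coupled Duhamel system
\[
\psi_t=e^{it\Delta}\psi_0-i\int_0^t e^{i(t-s)\Delta}V_{\phi_s}\psi_s\,ds,\qquad \phi_t=e^{-it/\alpha^2}\phi_0-i\alpha^{-2}\int_0^t e^{-i(t-s)/\alpha^2}\sigma_{\psi_s}\,ds,
\]
and run a contraction on a small ball in $C([-T,T];\mathcal H^1)\times C([-T,T];\mathcal L^2)$, employing Strichartz estimates for the free Schr\"odinger group to propagate $\mathcal H^1$-regularity for $\psi$. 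Mass conservation then follows from the self-adjointness of $-\Delta+V_{\phi_t}$ (a form-small perturbation by the first bound), and energy conservation is verified by differentiating $\mathcal E(\psi_t,\phi_t)$ and using both equations so that the coupling terms $\langle\sigma_{\psi_t},\phi_t\rangle$ produced by $\partial_t V_{\phi_t}$ and by $\partial_t\|\phi_t\|_{\mathcal L^2}^2$ exactly cancel; the computation is first carried out for smooth data and extended by density.

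The coercive estimate from energy conservation is the key to globalization and to the quantitative bounds. Since $\|\psi_t\|_{\mathcal L^2}=\|\psi_0\|_{\mathcal L^2}$ is conserved, the two bilinear estimates together with Young's inequality yield
\[
\Bigl|\int_{\R^3} V_{\phi_t}|\psi_t|^2\,dx\Bigr|\le 2\|\phi_t\|_{\mathcal L^2}\|\sigma_{\psi_t}\|_{\mathcal L^2}\lesssim\|\phi_t\|_{\mathcal L^2}\|\nabla\psi_t\|_{\mathcal L^2}^{1/2}\le\tfrac12\|\phi_t\|_{\mathcal L^2}^2+\tfrac12\|\nabla\psi_t\|_{\mathcal L^2}^2+C,
\]
so the identity $\mathcal E(\psi_t,\phi_t)=\mathcal E(\psi_0,\phi_0)$ gives $\|\nabla\psi_t\|_{\mathcal L^2}^2+\|\phi_t\|_{\mathcal L^2}^2\lesssim 1$, i.e.\ \eqref{eq:energycons}. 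This precludes blow-up and extends the local solution to all of $\R$. Plugging \eqref{eq:energycons} into the HLS bound yields $\|\sigma_{\psi_t}\|_{\mathcal L^2}\lesssim 1$, after which \eqref{eq:defField} directly gives $\|\partial_t\phi_t\|_{\mathcal L^2}\le\alpha^{-2}(\|\phi_t\|_{\mathcal L^2}+\|\sigma_{\psi_t}\|_{\mathcal L^2})\lesssim\alpha^{-2}$, and time integration produces $\|\phi_t-\phi_s\|_{\mathcal L^2}\lesssim\alpha^{-2}|t-s|$.

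The main technical obstacle is closing the local contraction at the $\mathcal H^1$ level: the $\mathcal L^2$ bound on $V_\phi\psi$ is clean, but $\nabla V_\phi$ lies only in a negative-order Sobolev space, so one cannot directly differentiate $V_\phi\psi$. The standard fix is to work in a suitable mixed Strichartz space for $\psi$, or alternatively to first prove well-posedness at the $\mathcal H^2$ level (where $V_\phi$ is a Kato-class multiplier and the time-dependent Schr\"odinger generator admits a clean operator-theoretic treatment) and then descend to $\mathcal H^1$ by density. Once this is handled, every other estimate in the lemma is a mechanical consequence of the two bilinear bounds and the structure of \eqref{eq:defParticle}--\eqref{eq:defField}.
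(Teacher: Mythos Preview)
Your proposal is correct and follows essentially the same route as the paper: local well-posedness by a fixed-point argument, the conservation laws, the Hardy--Littlewood--Sobolev bound $\|V_\phi\|_{\mathcal L^6}\lesssim\|\phi\|_{\mathcal L^2}$ (equivalently your $\dot{\mathcal H}^1\hookrightarrow\mathcal L^6$ argument) to get the coercivity inequality $\mathcal E(\psi,\phi)\ge\|\nabla\psi\|_2^2+\|\phi\|_2^2-C\|\phi\|_2\|\nabla\psi\|_2^{1/2}\|\psi\|_2^{3/2}$, and then the remaining bounds in \eqref{eq:slowPhi1} read off directly from the equation for $\phi_t$. The paper does not spell out how the $\mathcal H^1$ contraction closes either, simply labelling it ``standard''; your remark about needing Strichartz or an $\mathcal H^2$ detour is a fair acknowledgment of the one nontrivial point, and either fix works.
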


In the proof of our main result we need to go beyond the energy space $\mathcal H^1(\R^3)\times \mathcal L^2(\R^3)$. The following proposition states that if the initial conditions have more regularity and decay then, at least for a certain (long) time interval, we have bounds on the solution in the corresponding spaces. We will also need some bounds on the auxiliary functions $g_{s,t}:\ \mathbb{R}^3\rightarrow \mathbb{C}$ defined by
\begin{align}\label{eq:gts}
g_{s,t}(x):=\int_{\mathbb{R}^3} \left[\overline{\phi_t(k)}-\overline{\phi_s(k)}\right]  e^{ik\cdot x}\, \frac{dk}{|k|}
\end{align}
and $g_s:\R^3\to\mathbb C$ defined by
\begin{align}
\label{eq:gs}
g_s(x) := -\partial_s g_{s,t}(x) = \int_{\mathbb{R}^3} e^{ik\cdot x} \overline{\partial_s\phi_s(k)} \, \frac{dk}{|k|}\,.
\end{align}
The following proposition will also be proved in Appendix \ref{sec:wellposedness}.

\begin{proposition}\label{THM:wellposedness}
Let $\tau>0$. If $(\psi_0,\phi_0)$ satisfies Assumption \ref{ass}, then for all $\alpha>0$ and for all $t, s\in [-\tau\alpha^2,\tau\alpha^2]$ we have
\begin{align}
\|\psi_t\|_{\mathcal{H}^4}\lesssim_\tau 1, \qquad
\|\phi_t\|_{\mathcal L^2_{(3)}} \lesssim_\tau 1 \,.\label{eq:psiH4}
\end{align}
Moreover,
\begin{align}
\label{eq:psiH4der}
\|\partial_t\psi_t\|_{\mathcal H^2} \lesssim_\tau 1 \,,
\qquad
\|\partial_t \sigma_{\psi_t}\|_{\mathcal L^2} \lesssim_\tau 1
\end{align}
and
\begin{align}
\|g_{s,t}\|_{\infty}\lesssim_\tau \alpha^{-2}|t-s|\,,
\qquad
\|g_s\|_\infty \lesssim \alpha^{-2} \,.\label{eq:slowPhi}
\end{align}
\end{proposition}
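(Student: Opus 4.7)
The plan is a bootstrap argument that couples weighted $\mathcal L^2$ bounds on $\phi_t$ with higher Sobolev bounds on $\psi_t$. First, I would solve the linear equation \eqref{eq:defField} explicitly by Duhamel to get
\[
\phi_t(k) = e^{-it/\alpha^2}\phi_0(k) - i\alpha^{-2}\int_0^t e^{-i(t-s)/\alpha^2}\sigma_{\psi_s}(k)\,ds,
\]
which yields $\|\phi_t\|_{\mathcal L^2_{(3)}}\le\|\phi_0\|_{\mathcal L^2_{(3)}} + \alpha^{-2}\int_0^{|t|}\|\sigma_{\psi_s}\|_{\mathcal L^2_{(3)}}\,ds$. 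Splitting the Fourier representation $\sigma_\psi(k)=|k|^{-1}\widehat{|\psi|^2}(k)$ by frequency --- using $\|\widehat{|\psi|^2}\|_{\infty}\le\|\psi\|_{\mathcal L^2}^2$ to absorb the $|k|^{-2}$ singularity near the origin and $\||\psi|^2\|_{\mathcal H^2}\lesssim\|\psi\|_{\mathcal H^2}^2$ (by the $\mathcal H^2$ algebra property in $\R^3$) at high frequencies --- then gives $\|\sigma_\psi\|_{\mathcal L^2_{(3)}}\lesssim\|\psi\|_{\mathcal H^2}^2$. So any uniform $\mathcal H^2$ bound on $\psi_t$ propagates the weighted bound on $\phi_t$ over $[-\tau\alpha^2,\tau\alpha^2]$, losing only a factor of $\tau$.

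The heart of the argument, and the main obstacle, is the higher Sobolev control of $\psi_t$: a naive Duhamel estimate on $\|\psi_t\|_{\mathcal H^4}$ with the perturbation $V_{\phi_t}$ produces an exponential factor $e^{c|t|}$ which blows up on the time scale $\tau\alpha^2$. To circumvent this I would exploit the self-adjointness of $H_t:=-\Delta+V_{\phi_t}$ and estimate the equivalent quantities $\|H_t\psi_t\|$ and $\|H_t^2\psi_t\|$. The key cancellation is
\[
\tfrac{d}{dt}\|H_t\psi_t\|^2 = 2\re\langle H_t\psi_t,\dot V_{\phi_t}\psi_t\rangle,
\]
since the contribution from $-iH_t^2\psi_t$ is purely imaginary. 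A Cauchy--Schwarz bound in the Fourier representation of $V_\phi$ shows $\|\nabla^j\dot V_{\phi_t}\|_{\infty}\lesssim\|\partial_t\phi_t\|_{\mathcal L^2_{(s)}}$ for $s>j+\tfrac12$; for $j=0$ this gives $\|\dot V_{\phi_t}\|_{\infty}\lesssim\alpha^{-2}$, and integrating in time produces the linear-in-$|t|$ bound $\|H_t\psi_t\|\le\|H_0\psi_0\|+C\alpha^{-2}|t|\lesssim_\tau 1$. The standard elliptic inequality $\|\Delta\psi_t\|\le\|H_t\psi_t\|+\|V_{\phi_t}\|_{\infty}\|\psi_t\|$ then upgrades this to $\|\psi_t\|_{\mathcal H^2}\lesssim_\tau 1$.

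Iterating the same energy method one floor higher yields
\[
\tfrac{d}{dt}\|H_t^2\psi_t\|^2 = 2\re\langle H_t^2\psi_t,\dot V_{\phi_t}H_t\psi_t+H_t\dot V_{\phi_t}\psi_t\rangle,
\]
which requires pointwise bounds on $\nabla^j\dot V_{\phi_t}$ for $j\le 2$. This is precisely where the weight~$3$ in Assumption~\ref{ass} is tight: it is the minimal regularity that yields $\|\nabla^2\dot V_{\phi_t}\|_{\infty}\lesssim\alpha^{-2}$. The resulting estimate $\|H_t^2\psi_t\|\lesssim_\tau 1$, combined with a second elliptic inequality, gives $\|\psi_t\|_{\mathcal H^4}\lesssim_\tau 1$. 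The four coupled bounds --- on $\|\phi_t\|_{\mathcal L^2_{(3)}}$, $\|\sigma_{\psi_t}\|_{\mathcal L^2_{(3)}}$, $\|\psi_t\|_{\mathcal H^2}$, and $\|\psi_t\|_{\mathcal H^4}$ --- are then closed by a standard continuity argument on the maximal interval where each stays below twice its initial value, which must therefore contain $[-\tau\alpha^2,\tau\alpha^2]$.

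The auxiliary estimates \eqref{eq:psiH4der} and \eqref{eq:slowPhi} are now easy consequences. The identity $\partial_t\psi_t=-iH_t\psi_t$ together with the $\mathcal H^2$-algebra property applied to $V_{\phi_t}\psi_t$ gives $\|\partial_t\psi_t\|_{\mathcal H^2}\lesssim_\tau 1$. For $\partial_t\sigma_{\psi_t}$, I would use $\partial_t|\psi_t|^2 = 2\re(\overline{\psi_t}\,\partial_t\psi_t)$ and the Hardy--Littlewood--Sobolev inequality $\int_{\R^3}|k|^{-2}|\widehat f(k)|^2\,dk\lesssim\|f\|_{\mathcal L^{6/5}}^2$, combined with the H\"older bound $\|\overline{\psi_t}\,\partial_t\psi_t\|_{\mathcal L^{6/5}}\le\|\psi_t\|_{\mathcal L^3}\|\partial_t\psi_t\|_{\mathcal L^2}\lesssim_\tau 1$. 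Finally, Cauchy--Schwarz gives $\|g_s\|_{\infty}\le\int|\partial_s\phi_s(k)||k|^{-1}\,dk\lesssim\|\partial_s\phi_s\|_{\mathcal L^2_{(2)}}\lesssim\alpha^{-2}$; the bound on $g_{s,t}$ follows by the same estimate applied to $\phi_t-\phi_s$, after noting that the Duhamel formula for $\phi$ gives $\|\phi_t-\phi_s\|_{\mathcal L^2_{(2)}}\lesssim\alpha^{-2}|t-s|$.
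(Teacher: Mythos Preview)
Your proposal is correct and follows essentially the same strategy as the paper: control $\|\phi_t\|_{\mathcal L^2_{(m)}}$ via the Duhamel formula for \eqref{eq:defField}, and control $\|\psi_t\|_{\mathcal H^{2n}}$ via the energy functionals $\|H_t^n\psi_t\|^2$ (the paper's $\mathcal E^{(2)}$ and $\mathcal E^{(4)}$, shifted by a constant $M$), exploiting the self-adjointness of $H_t$ so that only $\dot V_{\phi_t}$ terms survive in the time derivative. The paper organizes the steps as a strict hierarchy ($\mathcal H^1\to\mathcal L^2_{(1)}\to\mathcal H^2\to\mathcal L^2_{(3)}\to\mathcal H^4$) rather than a simultaneous continuity argument, and simplifies slightly by observing that the $\sigma_{\psi_t}$ contribution to $\partial_t V_{\phi_t}$ cancels, but these are inessential differences.
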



\subsection{Decomposition of the solution}\label{sec:decomp}

In this subsection we decompose the solution $e^{-i\tilde{H}_{\alpha}^{F}t}\psi_0\otimes W(\alpha^2\phi_0)\Omega$ as claimed in Theorem \ref{THM:main}. In order to state this, we need to introduce some notations.

It will be convenient to work with the function $\tilde\psi_t$ from \eqref{eq:tildepsi}. Clearly, the bounds from Lemma \ref{wellposedenergy} and Proposition \ref{THM:wellposedness} hold for $\tilde\psi_t$ as well. (For the bounds on $\partial_t\tilde\psi_t$ we use the fact that $|\omega(t)|\lesssim 1$ by Lemma \ref{wellposedenergy}.) Moreover, we note that $\tilde\psi_t$ and $\phi_t$ satisfy the modified equations
\begin{align}
\label{eq:defParticlemod}
i\partial_t \tilde\psi_t(x) & = \left[ -\Delta + \int_{\R^3} \left[ e^{-ik\cdot x} \phi_t(k)+e^{ik\cdot x} \bar\phi_t(k) \right] \frac{dk}{|k|} + \omega(t) \right] \tilde\psi_t(x) \,, \\
\label{eq:defFieldmod}
i\alpha^2 \partial_t \phi_t(k) & = \phi_t(k)+ |k|^{-1} \int_{\R^3} |\tilde\psi_t(x)|^2 e^{ik\cdot x} \,dx \,.
\end{align}

Next, we define for $\psi\in\mathcal L^2(\R^3)$ with $\|\psi\|=1$ the orthogonal projections in $\mathcal L^2(\R^3)$
\begin{align*}
P_\psi := |\psi\rangle\langle\psi| \,,
\qquad P_\psi^\bot := 1 - P_\psi = 1- |\psi\rangle\langle\psi| \,.
\end{align*}
The effective Schr\"odinger operator $H_\phi$ in $\mathcal L^2(\R^3)$ is defined by
\begin{equation}
\label{eq:effso}
H_\phi := -\Delta + V_\phi + \int_{\R^3} |\phi(k)|^2\,dk
\end{equation}
with $V_\phi$ from \eqref{eq:effpot}. Moreover, let us introduce the operator
\begin{equation}
\label{eq:tildehphi}
\tilde H_\phi := W^*(\alpha^2 \phi) \tilde H^F_\alpha W(\alpha^2 \phi)
\end{equation}
in $\mathcal L^2(\R^3)\otimes\mathcal F$. Using the commutation relations (see Lemma \ref{LM:commRe}) we find that
\begin{equation}
\label{eq:tildehphialt}
\tilde H_\phi = H_\phi + \int_{\mathbb{R}^3} \left[e^{ik\cdot x}b_k^*+ e^{-ik\cdot x}b_k \right]\frac{dk}{|k|} +\int_{\mathbb{R}^3} \left[\phi(k) b_{k}^*+\bar\phi(k) b_{k} \right] dk +\int_{\mathbb{R}^3} b_k^* b_k \,dk \,.
\end{equation}

Finally, we introduce the vector
\begin{equation}
\label{eq:fts}
F_{t,s} := P_{\tilde\psi_s}^\bot \int_{\R^3} \left( e^{ik\cdot x} W^*(\alpha^2 \phi_t)W(\alpha^2\phi_s) b_k^*\ \tilde\psi_s\otimes\Omega\right) \frac{dk}{|k|}
\end{equation}
and define
$$
D_0 := \int_0^t e^{iH_{\phi_t}s} F_{t,s} \,ds
$$
and
\begin{align*}
D_1:=&\int_0^{t} \int_{0}^{t-s} \int_{\mathbb{R}^3} \left( e^{i\tilde{H}_{\phi_t}(s+s_1)} e^{ik\cdot x}b_k^*\  e^{-iH_{\phi_t}s_1} F_{t,s} \right)\frac{dk}{|k|} \,ds_1\,ds \,, \\
D_2:=&\int_0^{t} \int_{0}^{t-s} \int_{\mathbb{R}^3} \left( e^{i\tilde{H}_{\phi_t}(s+s_1)} e^{-ik\cdot x}b_k\  e^{-iH_{\phi_t}s_1} F_{t,s} \right) \frac{dk}{|k|} \,ds_1\,ds \,, \\
D_3:=&\int_0^{t} \int_{0}^{t-s} \int_{\mathbb{R}^3} \left( e^{i\tilde{H}_{\phi_t}(s+s_1)} \phi_{t}(k) b_{k}^*\  e^{-iH_{\phi_t}s_1} F_{t,s} \right) dk \,ds_1\,ds \,, \\
D_4:=&\int_0^{t} \int_{0}^{t-s} \int_{\mathbb{R}^3} \left( e^{i\tilde{H}_{\phi_t}(s+s_1)} \overline{\phi_{t}(k)} b_{k} \ e^{-iH_{\phi_t}s_1} F_{t,s} \right) dk \,ds_1\,ds \,, \\
D_5:=&\int_0^{t} \int_{0}^{t-s} \int_{\mathbb{R}^3} \left( e^{i\tilde{H}_{\phi_t}(s+s_1)} b_k^* b_k\  e^{-iH_{\phi_t}s_1}  F_{t,s} \right) dk \,ds_1\,ds \,.
\end{align*}
With these notations the promised representation formula for the solution looks as follows.

\begin{proposition}\label{decomp}
Assume that $(\tilde\psi_t,\phi_t)$ satisfy \eqref{eq:defParticlemod}, \eqref{eq:defFieldmod} with initial conditions $(\psi_0,\phi_0)$ where $\|\psi_0\|^2=1$. Then for any $t\in\R$ one has the decomposition
$$
e^{-i\tilde{H}_{\alpha}^{F}t}\psi_0\otimes W(\alpha^2\phi_0)\Omega= \tilde\psi_t\otimes W(\alpha^2 \phi_t)\Omega+R_1(t)+R_2(t)
$$
with
$$
R_1(t) := -i W(\alpha^2\phi_t) e^{-i H_{\phi_t}t} D_0
$$
and
$$
R_2(t) := - W(\alpha^2\phi_t) e^{-i\tilde H_{\phi_t} t} \left( D_1+D_2+D_3+D_4+D_5 \right).
$$
\end{proposition}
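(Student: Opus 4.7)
\textbf{Proof plan for Proposition \ref{decomp}.}

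The plan is to derive an inhomogeneous evolution equation for the remainder $R(t) := e^{-i\tilde H^F_\alpha t}\psi_0\otimes W(\alpha^2\phi_0)\Omega - \tilde\psi_t\otimes W(\alpha^2\phi_t)\Omega$ (so that $R(0)=0$) and then apply Duhamel's principle twice. Since the exact dynamics satisfies $i\partial_t e^{-i\tilde H^F_\alpha t}(\cdots)=\tilde H^F_\alpha e^{-i\tilde H^F_\alpha t}(\cdots)$, we obtain $(i\partial_t-\tilde H^F_\alpha)R=-(\tilde H^F_\alpha-i\partial_t)[\tilde\psi_t\otimes W(\alpha^2\phi_t)\Omega]$, so the whole problem reduces to identifying the source term.

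To compute the source, I would first apply the intertwining relation $\tilde H^F_\alpha W(\alpha^2\phi_t)=W(\alpha^2\phi_t)\tilde H_{\phi_t}$ coming from \eqref{eq:tildehphi} and use \eqref{eq:tildehphialt}. Since $b_k\Omega=0$ and $N\Omega=0$, this gives $\tilde H^F_\alpha[\tilde\psi_t\otimes W(\alpha^2\phi_t)\Omega]=W(\alpha^2\phi_t)\bigl[(H_{\phi_t}\tilde\psi_t)\otimes\Omega+(b^*(h_x)+b^*(\phi_t))\tilde\psi_t\otimes\Omega\bigr]$, where $h_x(k):=e^{ik\cdot x}/|k|$. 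For $i\partial_t$ I would use the BCH-type identity $W(f)W(g)=e^{-i\alpha^{-2}\im\langle f,g\rangle}W(f+g)$ (valid because $[b^*(f)-b(f),b^*(g)-b(g)]$ is a c-number under \eqref{eq:commB}) to obtain
\begin{equation*}
i\partial_t W(\alpha^2\phi_t)\Omega=\bigl(\|\phi_t\|^2-\omega(t)\bigr)W(\alpha^2\phi_t)\Omega+W(\alpha^2\phi_t)\bigl(b^*(\phi_t)+b^*(\sigma_{\tilde\psi_t})\bigr)\Omega,
\end{equation*}
after inserting \eqref{eq:defFieldmod} and the definition \eqref{eq:defC1} of $\omega$. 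Combining with the modified particle equation \eqref{eq:defParticlemod}, the $\omega$-- and $\|\phi_t\|^2$--contributions cancel, and so do the $b^*(\phi_t)$ terms. What survives is $(\tilde H^F_\alpha-i\partial_t)\Psi_{\rm app}=W(\alpha^2\phi_t)\bigl(b^*(h_x)-b^*(\sigma_{\tilde\psi_t})\bigr)\tilde\psi_t\otimes\Omega$. A short computation using $\sigma_{\tilde\psi_t}(k)=|k|^{-1}\langle\tilde\psi_t,e^{ik\cdot\cdot}\tilde\psi_t\rangle$ then shows $\bigl(b^*(h_x)-b^*(\sigma_{\tilde\psi_t})\bigr)\tilde\psi_t\otimes\Omega=F_{t,t}$, since it is exactly $P_{\tilde\psi_t}^\bot$ applied inside the $k$-integral.

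Next, I would solve $(i\partial_t-\tilde H^F_\alpha)R=W(\alpha^2\phi_t)F_{t,t}$ by Duhamel to get $R(t)=-i\int_0^t e^{-i\tilde H^F_\alpha(t-s)}W(\alpha^2\phi_s)F_{s,s}\,ds$, then conjugate out the Weyl operator via $e^{-i\tilde H^F_\alpha\tau}=W(\alpha^2\phi_t)e^{-i\tilde H_{\phi_t}\tau}W^*(\alpha^2\phi_t)$. The essential observation is that $W^*(\alpha^2\phi_t)W(\alpha^2\phi_s)$ acts only on $\mathcal F$ and therefore commutes with $P_{\tilde\psi_s}^\bot$ and with the multiplication operator $e^{ik\cdot x}$; hence it can be pushed inside the definition \eqref{eq:fts} to give $W^*(\alpha^2\phi_t)W(\alpha^2\phi_s)F_{s,s}=F_{t,s}$, yielding the clean formula $R(t)=-iW(\alpha^2\phi_t)\int_0^t e^{-i\tilde H_{\phi_t}(t-s)}F_{t,s}\,ds$.

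Finally, I would apply Duhamel a second time to split the propagator as $e^{-i\tilde H_{\phi_t}(t-s)}=e^{-iH_{\phi_t}(t-s)}-i\int_0^{t-s}e^{-i\tilde H_{\phi_t}(t-s-s_1)}(\tilde H_{\phi_t}-H_{\phi_t})e^{-iH_{\phi_t}s_1}\,ds_1$. Using \eqref{eq:tildehphialt}, the perturbation $\tilde H_{\phi_t}-H_{\phi_t}$ decomposes into the five field contributions whose corresponding time integrals are exactly $D_1,\dots,D_5$; the leading term produces $R_1(t)=-iW(\alpha^2\phi_t)\int_0^t e^{-iH_{\phi_t}(t-s)}F_{t,s}\,ds=-iW(\alpha^2\phi_t)e^{-iH_{\phi_t}t}D_0$. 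The main obstacle in carrying out this plan is step 2: the algebraic computation of $i\partial_t W(\alpha^2\phi_t)\Omega$ via the BCH identity, and the verification that the phase $\omega(t)$ chosen in \eqref{eq:defC1} together with the Landau--Pekar equation \eqref{eq:defFieldmod} produces precisely the cancellations that leave only the $P_{\tilde\psi_t}^\bot$-component $F_{t,t}$ as a source; any other choice of phase would leave an $O(1)$ longitudinal source that cannot be handled by Duhamel within the claimed $\alpha^{-1}(1+|t|)$ accuracy.
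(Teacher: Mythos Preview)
Your proposal is correct and follows essentially the same route as the paper. The paper isolates your first Duhamel step as a separate Lemma~\ref{LM:newform}, proving the identity $e^{-i\tilde H^F_\alpha t}\psi_0\otimes W(\alpha^2\phi_0)\Omega=\tilde\psi_t\otimes W(\alpha^2\phi_t)\Omega-iW(\alpha^2\phi_t)\int_0^t e^{-i\tilde H_{\phi_t}(t-s)}F_{t,s}\,ds$ by differentiating in $t$ and using \eqref{eq:defParticlemod}, \eqref{eq:defFieldmod} and the Weyl derivative formula \eqref{eq:firPW}; this is exactly your source computation and conjugation step packaged differently. The second Duhamel split into $D_0$ and $D_1,\ldots,D_5$ is identical. (Minor remark: in your first paragraph the sign should read $(i\partial_t-\tilde H^F_\alpha)R=(\tilde H^F_\alpha-i\partial_t)\Psi_{\rm app}$, without the extra minus; your subsequent formulas are consistent with the correct sign.)
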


Clearly, in terms of the original function $\psi_t$, the term $R_1$ is explicitly given by
\begin{align}
\label{eq:psi1}
R_1(t) = & -i W(\alpha^2 \phi_t) \int_0^{t} \left[ e^{-iH_{\phi_t}(t-s)-i\int_0^s \omega(s_1)\,ds_1} \right. \notag\\
& \qquad\qquad\qquad\quad \left. P_{\psi_s}^\bot \int_{\R^3} \left( e^{ik\cdot x} W^*(\alpha^2 \phi_t)W(\alpha^2\phi_s) b_k^*\ \psi_s\otimes\Omega\right) \frac{dk}{|k|} \right] ds \,.
\end{align}

The proof of Proposition \ref{decomp} makes use of equations \eqref{eq:defParticlemod}, \eqref{eq:defFieldmod} for $(\tilde\psi_t,\phi_t)$ as well as the Duhamel formula. We single out the use of the equations in the following lemma.

\begin{lemma}\label{LM:newform}
Assume that $(\tilde\psi_t,\phi_t)$ satisfy \eqref{eq:defParticlemod}, \eqref{eq:defFieldmod} with initial conditions $(\psi_0,\phi_0)$ where $\|\psi_0\|^2=1$. Then for any $t\in\R$ one has
\begin{align}
e^{-i\tilde H_{\alpha}^{F}t} \psi_0\otimes W(\alpha^2\phi_0)\Omega
=\tilde\psi_t\otimes W(\alpha^2\phi_t) \Omega
-i \int_0^{t} e^{-i\tilde{H}_{\alpha}^{F}(t-s)} W(\alpha^2\phi_t) F_{t,s} \,ds \,.\label{eq:newform}
\end{align}
\end{lemma}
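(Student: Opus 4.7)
The plan is to derive \eqref{eq:newform} via Duhamel with respect to the reference state $u_s := \tilde\psi_s \otimes W(\alpha^2\phi_s)\Omega$. Define $A(s) := e^{-i\tilde H^F_\alpha(t-s)} u_s$, so that $A(0)$ is the LHS of \eqref{eq:newform} and $A(t) = \tilde\psi_t\otimes W(\alpha^2\phi_t)\Omega$. Then
$$
A'(s) = e^{-i\tilde H^F_\alpha(t-s)}\bigl[\partial_s u_s + i\tilde H^F_\alpha u_s\bigr],
$$
and the lemma reduces to the identity
\begin{equation}
\label{eq:planclaim}
\partial_s u_s + i\tilde H^F_\alpha u_s = i\, W(\alpha^2\phi_t)\, F_{t,s}.
\end{equation}
Once \eqref{eq:planclaim} is established, integration from $0$ to $t$ and rearrangement yield \eqref{eq:newform}.

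The next step is to factor out $W(\alpha^2\phi_s)$ on the left, using $\tilde H^F_\alpha = W(\alpha^2\phi_s)\tilde H_{\phi_s} W^*(\alpha^2\phi_s)$ from \eqref{eq:tildehphi}, so both summands on the left of \eqref{eq:planclaim} can be written as $W(\alpha^2\phi_s)$ acting on a vector in $\mathcal L^2(\R^3)\otimes\mathcal F$. Applying \eqref{eq:tildehphialt} to $\tilde\psi_s\otimes\Omega$ kills all $b_k$-annihilation terms and leaves $H_{\phi_s}\tilde\psi_s\otimes\Omega + \int e^{ik\cdot x}\tilde\psi_s/|k|\otimes b_k^*\Omega\,dk + \tilde\psi_s\otimes b^*(\phi_s)\Omega$. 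For $\partial_s u_s$ I would differentiate the Weyl operator using BCH: since $[A(s),A'(s)]$ with $A(s)=\alpha^2 b^*(\phi_s) - \alpha^2 b(\phi_s)$ is a c-number, one gets
$$
\partial_s W(\alpha^2\phi_s) = W(\alpha^2\phi_s)\bigl[\alpha^2 b^*(\partial_s\phi_s) - \alpha^2 b(\partial_s\phi_s) + i\alpha^2\im\langle\phi_s,\partial_s\phi_s\rangle\bigr],
$$
and acting on $\Omega$ kills the $b$-term. For $\partial_s\tilde\psi_s$ I would use \eqref{eq:defParticlemod}, reading it carefully as $i\partial_s\tilde\psi_s = (-\Delta + V_{\phi_s} + \omega(s))\tilde\psi_s = (H_{\phi_s}-\|\phi_s\|^2+\omega(s))\tilde\psi_s$, since $H_{\phi_s}$ carries an extra $\|\phi_s\|^2$ relative to $-\Delta+V_{\phi_s}$.

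The bookkeeping then produces three scalar contributions multiplying $\tilde\psi_s\otimes\Omega$: $-i\omega(s)$ from $\partial_s\tilde\psi_s$, $+i\|\phi_s\|^2$ from the $\|\phi_s\|^2$ difference above cancelling against $iH_{\phi_s}$, and $+i\alpha^2\im\langle\phi_s,\partial_s\phi_s\rangle$ from the Weyl phase. These sum to zero by the very definition \eqref{eq:defC1} of $\omega(s)$. For the one-boson terms I would substitute the field equation \eqref{eq:defFieldmod} in the form $\alpha^2 b^*(\partial_s\phi_s)\Omega = -i\,b^*(\phi_s)\Omega - i\,b^*(\sigma_{\tilde\psi_s})\Omega$; this cancels the $\tilde\psi_s\otimes b^*(\phi_s)\Omega$ contribution coming from $\tilde H_{\phi_s}$ and leaves
$$
i\int e^{ik\cdot x}\tilde\psi_s\otimes b_k^*\Omega\,\frac{dk}{|k|} - i\,\tilde\psi_s\otimes b^*(\sigma_{\tilde\psi_s})\Omega.
$$
Since $\langle\tilde\psi_s, e^{ik\cdot x}\tilde\psi_s\rangle/|k| = \sigma_{\tilde\psi_s}(k)$, this difference equals $i P_{\tilde\psi_s}^\bot \int e^{ik\cdot x}\tilde\psi_s\otimes b_k^*\Omega\,dk/|k|$. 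Multiplying back by $W(\alpha^2\phi_s)$, and using that $P_{\tilde\psi_s}^\bot$ commutes with the Weyl operator (they act on different tensor factors), rewrite $W(\alpha^2\phi_s) = W(\alpha^2\phi_t)\bigl[W^*(\alpha^2\phi_t)W(\alpha^2\phi_s)\bigr]$ to recognize the result as $iW(\alpha^2\phi_t)F_{t,s}$ per \eqref{eq:fts}, proving \eqref{eq:planclaim}.

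The main obstacle is purely the sign/phase bookkeeping: one must simultaneously track the $i\alpha^2\im\langle\phi_s,\partial_s\phi_s\rangle$ produced by the non-commutativity in $\partial_s W(\alpha^2\phi_s)$, the $\|\phi_s\|^2$ hidden in $H_{\phi_s}$ versus $-\Delta+V_{\phi_s}$, and the $\omega(s)$ in the modified particle equation, and verify that they exactly cancel. Once this cancellation is secured, the remaining identification of the leftover with $iW(\alpha^2\phi_t)F_{t,s}$ is an algebraic manipulation using only the commutation relations and the field equation.
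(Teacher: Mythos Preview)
Your proof is correct and follows essentially the same route as the paper: both arguments reduce \eqref{eq:newform} to the differential identity $\partial_s u_s + i\tilde H^F_\alpha u_s = iW(\alpha^2\phi_t)F_{t,s}$ (the paper's \eqref{eq:tderiv}), verified by the same combination of the Weyl derivative formula, the conjugation relation \eqref{eq:tildehphi}--\eqref{eq:tildehphialt}, and the equations \eqref{eq:defParticlemod}--\eqref{eq:defFieldmod}, with the scalar cancellation coming precisely from the definition of $\omega$. The only cosmetic difference is that the paper differentiates in $t$ after applying $e^{i\tilde H^F_\alpha t}$ (exploiting that $W(\alpha^2\phi_t)F_{t,s}$ is in fact $t$-independent), whereas you differentiate in $s$ via $A(s)=e^{-i\tilde H^F_\alpha(t-s)}u_s$; these are equivalent Duhamel parametrizations.
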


\begin{proof}[Proof of Lemma \ref{LM:newform}]
Applying the operator $e^{i\tilde{H}_{\alpha}^{F}t}$ to both sides of \eqref{eq:newform} we see that we need to prove
\begin{align*}
\psi_0\otimes W(\alpha^2\phi_0)\Omega
=e^{i\tilde{H}_{\alpha}^{F}t}\tilde\psi_t\otimes W(\alpha^2\phi_t)\Omega
-i  \int_0^{t} e^{i\tilde{H}_{\alpha}^{F}s} W(\alpha^2\phi_t) F_{t,s} \,ds \,.
\end{align*}
This is clearly true at $t=0$ and therefore we only need to show that the time derivatives of both sides coincide for all $t$, that is, in view of definition \eqref{eq:fts} of $F_{t,s}$,
\begin{align*}
0=e^{i\tilde{H}_{\alpha}^{F}t} & \left[ i\tilde{H}_{\alpha}^{F} \tilde\psi_t\otimes W(\alpha^2\phi_t)\Omega+\partial_{t}\tilde\psi_t\otimes W(\alpha^2\phi_t)\Omega+\tilde\psi_t\otimes \partial_{t}W(\alpha^2\phi_t)\Omega \right. \\
& \quad \left. - i  W(\alpha^2\phi_t) P_{\tilde\psi_t}^\bot \int_{\R^3} \left( e^{ik\cdot x} b_k^* \tilde\psi_t \otimes\Omega\right) \frac{dk}{|k|} \right] .
\end{align*}
This is, of course, the same as
\begin{align}
\label{eq:tderiv}
& i\tilde{H}_{\alpha}^{F} \tilde\psi_t\otimes W(\alpha^2\phi_t)\Omega+\partial_{t}\tilde\psi_t\otimes W(\alpha^2\phi_t)\Omega+\tilde\psi_t\otimes \partial_{t}W(\alpha^2\phi_t)\Omega \nonumber \\
& \qquad = i  W(\alpha^2\phi_t) P_{\tilde\psi_t}^\bot \int_{\R^3} \left( e^{ik\cdot x} b_k^* \tilde\psi_t \otimes\Omega\right) \frac{dk}{|k|} \,,
\end{align}
which is what we are going to show now.

We begin by rewriting the first term on the left side. Using \eqref{eq:tildehphi} and \eqref{eq:tildehphialt} we obtain
\begin{align*}
& i\tilde{H}_{\alpha}^{F} \tilde\psi_t\otimes W(\alpha^2\phi_t)\Omega \\
& \qquad =i H_{\phi_t}\tilde\psi_t\otimes W(\alpha^2\phi_t)\Omega +\tilde\psi_t\ W(\alpha^2\phi_t)\left[i b^*(\phi_t)+i\int_{\mathbb{R}^3} e^{ik\cdot x} b_k^*\frac{dk}{|k|}\right]\Omega \,.
\end{align*}
In order to rewrite the third term on the left side of \eqref{eq:tderiv} we use the formula for $\partial_{t}W(\alpha^2\phi_t)$ from \eqref{eq:firPW} below and find
\begin{align*}
\tilde\psi_t\otimes \partial_{t}W(\alpha^2\phi_t)\Omega = & i\alpha^2 \left(\im (\phi_t,\partial_t\phi_t) \right) \tilde\psi_t \otimes W(\alpha^2\phi_t) \Omega
+ \alpha^2 \tilde\psi_t \otimes W(\alpha^2\phi_t) \ b^*(\partial_t\phi_t)\Omega \,.
\end{align*}
Thus, recalling the definition of $\omega$ in \eqref{eq:defC1}, we have shown that
\begin{align}
& i\tilde{H}_{\alpha}^{F} \tilde\psi_t\otimes W(\alpha^2\phi_t)\Omega+\partial_{t}\tilde\psi_t\otimes W(\alpha^2\phi_t)\Omega+\tilde\psi_t\otimes \partial_{t}W(\alpha^2\phi_t)\Omega \nonumber \\
& \qquad = \left[\partial_t+i\left( -\Delta + V_{\phi_t} + \omega(t) \right) \right]\tilde\psi_t\otimes W(\alpha^2\phi_t)\Omega\label{eq:particle}\\
& \qquad \quad + W(\alpha^2\phi_t)\left[\alpha^2 b^*(\partial_t\phi_t)+i b^*(\phi_t)+i\int_{\mathbb{R}^3} e^{ik\cdot x} b_k^* \,\frac{dk}{|k|} \right]\left(\tilde \psi_t\otimes \Omega\right) .\label{eq:field}
\end{align}
At this point in the proof we use the equations for $\tilde\psi_t$ and $\phi_t$. It follows from \eqref{eq:defParticlemod} that line \eqref{eq:particle} vanishes identically. For line \eqref{eq:field} we use \eqref{eq:defFieldmod} to obtain
\begin{align}
\label{eq:newformproof}
& i\tilde{H}_{\alpha}^{F} \tilde\psi_t\otimes W(\alpha^2\phi_t)\Omega+\partial_{t}\tilde\psi_t\otimes W(\alpha^2\phi_t)\Omega+\tilde\psi_t\otimes \partial_{t}W(\alpha^2\phi_t)\Omega \nonumber \\
& \qquad = i W(\alpha^2\phi_t)\left[
\int_{\R^3} \left( -\int_{\R^3} |\tilde\psi_t(y)|^2 e^{ik\cdot y} \,dy + e^{ik\cdot x} \right) b_k^* \,\frac{dk}{|k|}
\right] \left( \tilde \psi_t\otimes \Omega\right) \nonumber \\
& \qquad = i W(\alpha^2\phi_t)  P_{\tilde\psi_t}^\bot \int_{\mathbb{R}^3} \left(e^{ik\cdot x} b_k^* \tilde\psi_t\otimes \Omega \right) \frac{dk}{|k|} \,.
\end{align}
Here we used the fact that $\|\tilde\psi_t\|=\|\psi_0\|=1$ by assumption and Lemma \ref{wellposedenergy}, and therefore
$$
P_{\tilde\psi_t}^\bot = 1- |\tilde\psi_t\rangle\langle\tilde\psi_t| \,.
$$
Equation \eqref{eq:newformproof} proves \eqref{eq:tderiv} and completes the proof.
\end{proof}

Having proved Lemma \ref{LM:newform} we turn to the proof of Proposition \ref{decomp}.

\begin{proof}[Proof of Proposition \ref{decomp}]
It follows from Lemma \ref{LM:newform} and \eqref{eq:tildehphi} that
$$
e^{-i\tilde H_{\alpha}^{F}t} \psi_0\otimes W(\alpha^2\phi_0)\Omega
=\tilde\psi_t\otimes W(\alpha^2\phi_t) \Omega
-i W(\alpha^2\phi_t) \int_0^{t} e^{-i\tilde{H}_{\phi_t}(t-s)} F_{t,s} \,ds \,.
$$
In the time integral on the right side we use Duhamel's principle and \eqref{eq:tildehphialt},
\begin{align*}
e^{-i\tilde{H}_{\phi_t}(t-s)} = & e^{-i H_{\phi_t}(t-s)} \\
& -i \int_0^{t-s} e^{-i\tilde{H}_{\phi_t}(t-s-s_1)} \left( 
\int_{\mathbb{R}^3} \left[e^{ik\cdot x}b_k^*+ e^{-ik\cdot x}b_k \right]\frac{dk}{|k|}  +\int_{\mathbb{R}^3} b_k^* b_k \,dk \right. \\
& \qquad\qquad\qquad\qquad\qquad\qquad \left. +\int_{\mathbb{R}^3} \left[\phi_{t}(k) b_{k}^*+\bar\phi_{t}(k) b_{k} \right] dk
\right) e^{-iH_{\phi_t} s_1} \,ds_1 \,.
\end{align*}
Proposition \ref{decomp} now follows easily from the definition of $D_0$, $\ldots$, $D_5$.
\end{proof}


\subsection{Reduction of the proof of the main result}

In the remainder of this paper we will prove the following

\begin{theorem}\label{PRO:D05}
Assume that $\psi_0$ and $\phi_0$ satisfy Assumption \ref{ass}, let $(\tilde\psi_t,\phi_t)$ be the solution of \eqref{eq:defParticlemod}, \eqref{eq:defFieldmod} with initial condition $(\psi_0,\phi_0)$ and let $D_0,\ldots,D_5$ be as in Proposition \ref{decomp}. Then there is a constant $C>0$ such that for all $\alpha\geq 1$ and $t\in [0,\alpha^2]$
\begin{align}
\|D_0\|_{\mathcal{L}^2\otimes \mathcal{F}}&\leq C\alpha^{-1}\left(1+t\right), \label{eq:D0}\\
\|D_1\|_{\mathcal{L}^2\otimes \mathcal{F}}&\leq C\alpha^{-2} t\left(1+t\right) \,, \label{eq:D1}\\
\|D_2\|_{\mathcal{L}^2\otimes \mathcal{F}}&\leq \alpha^{-2} t\left( 1+ t\right) \left(1+\alpha^{-1}t\right), \label{eq:D2}\\
\|D_3\|_{\mathcal{L}^2\otimes \mathcal{F}}&\leq C\alpha^{-2}t\left(1+t\right)\left(1+\alpha^{-1} t\right), \label{eq:D3}\\
\|D_4\|_{\mathcal{L}^2\otimes \mathcal{F}}&\leq C\alpha^{-2}t^2\left(1+\alpha^{-1}t\right), \label{eq:D4}\\
\|D_5\|_{\mathcal{L}^2\otimes \mathcal{F}}&\leq C\alpha^{-3}t\left(1+t\right)\left(1+\alpha^{-2} t^2\right), \label{eq:D5}\\
\left\| \left\langle\Omega, \  e^{-iH_{\phi_t}t} D_0\right\rangle_{\mathcal{F}} \right\|_{\mathcal L^2(\R^3)}& \leq C\alpha^{-2} t^2 \,, \label{eq:vaToOne} \\
\left\| \left\langle\tilde\psi_t, \  e^{-iH_{\phi_t}t} D_0\right\rangle_{\mathcal L^2(\R^3)} \right\|_{\mathcal{F}}& \leq C\alpha^{-2} t^2 \left(1+ \alpha^{-2} t^2 \right). \label{eq:psiPerpprop}
\end{align}
\end{theorem}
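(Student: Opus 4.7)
The central tool is to decompose $F_{t,s}$ via the Weyl commutation relations. Setting $\chi_{s,t}:=\alpha^2(\phi_s-\phi_t)$, $\Omega_{s,t}:=W(\chi_{s,t})\Omega$ and $h_x(k):=e^{ikx}/|k|$, one has $W^*(\alpha^2\phi_t)W(\alpha^2\phi_s)=e^{i\theta_{s,t}}W(\chi_{s,t})$ for a real scalar $\theta_{s,t}$ and $W(\chi_{s,t})b_k^*\Omega=[b_k^*+\overline{\phi_t(k)}-\overline{\phi_s(k)}]\Omega_{s,t}$, so that
\begin{equation*}
F_{t,s}=e^{i\theta_{s,t}}P_{\tilde\psi_s}^\bot\bigl[\tilde\psi_s\otimes b^*(h_\cdot)\Omega_{s,t}+g_{s,t}\tilde\psi_s\otimes\Omega_{s,t}\bigr].
\end{equation*}
The projection $P_{\tilde\psi_s}^\bot$ has a crucial regularizing effect on the singular first piece, turning $b^*(h_x)$ into $b^*(h_x-\sigma_{\psi_s})$; using $\widehat{|\psi_s|^2}(0)=\|\psi_s\|^2=1$, the function $e^{ikx}-\widehat{|\psi_s|^2}(k)$ vanishes to first order at $k=0$, and a direct computation in $x$-$k$ reduces the weighted integral $\int|\tilde\psi_s|^2\|h_x-\sigma_{\psi_s}\|_{\mathcal L^2_k}^2\,dx$ to $\int(1-|\widehat{|\psi_s|^2}(k)|^2)|k|^{-2}dk$, which is finite by $\psi_s\in\mathcal H^4$ (Proposition \ref{THM:wellposedness}).

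Combining this with the coherent-state identity $\|b^*(f)\Omega_{s,t}\|_{\mathcal F}^2=\alpha^{-2}\|f\|^2+\alpha^{-4}|(\chi_{s,t},f)|^2$ and the bound $\|g_{s,t}\|_\infty\lesssim\alpha^{-2}|t-s|$ from \eqref{eq:slowPhi} yields $\|F_{t,s}\|_{\mathcal L^2\otimes\mathcal F}\lesssim\alpha^{-1}+\alpha^{-2}|t-s|$, and integration over $s\in[0,t]$ gives \eqref{eq:D0}. For $D_1,\ldots,D_5$, the additional field operators $b_k^*,b_k,\phi_t(k)b_k^*,\overline{\phi_t(k)}b_k,b_k^*b_k$ are pushed through $W(\chi_{s,t})$ via \eqref{eq:commB} and applied to $\Omega_{s,t}$: each extra $b^\#$ brings either a factor $\alpha^{-1}$ (from $\|b^*(f)\Omega\|=\alpha^{-1}\|f\|$) or a shift by $\alpha^{-2}\chi_{s,t}(k)$, and $b_k^*b_k$ gives $\alpha^{-2}$ through $b_k\Omega_{s,t}=\alpha^{-2}\chi_{s,t}(k)\Omega_{s,t}$. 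The remaining $1/|k|$ singularities in $D_1,D_2,D_5$ are treated by a Lieb--Yamazaki-type splitting $\int=\int_{|k|\le K}+\int_{|k|>K}$ together with the commutator identity $e^{ikx}/|k|=\sum_j[p_j,k_je^{ikx}/|k|^3]$, trading the singularity against the regularity of $\tilde\psi_s$ granted by Proposition \ref{THM:wellposedness}. Performing the two time integrations yields \eqref{eq:D1}--\eqref{eq:D5}; the factors $(1+\alpha^{-1}t)$ and $(1+\alpha^{-2}t^2)$ arise from the coherent-state shift $\chi_{s,t}$, whose $\mathcal L^2_k$-norm grows linearly in $|t-s|$.

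For the orthogonality \eqref{eq:vaToOne}, pair $F_{t,s}$ against $\Omega$ in $\mathcal F$: the creation contribution vanishes since $\langle\Omega,b^*(h_x-\sigma_{\psi_s})\Omega_{s,t}\rangle_{\mathcal F}=\langle b(h_x-\sigma_{\psi_s})\Omega,\Omega_{s,t}\rangle=0$, so only $e^{i\theta_{s,t}}\langle\Omega,\Omega_{s,t}\rangle\,P_{\tilde\psi_s}^\bot(g_{s,t}\tilde\psi_s)$ survives, bounded in $\mathcal L^2$ by $\|g_{s,t}\|_\infty\lesssim\alpha^{-2}|t-s|$; integrating $e^{-iH_{\phi_t}(t-s)}$-conjugated along $s\in[0,t]$ gives \eqref{eq:vaToOne}. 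For \eqref{eq:psiPerpprop}, introduce the phase $\varphi_{s,t}:=-\int_t^s[\omega(u)-\|\phi_u\|^2]du$ and $\eta_s:=e^{-i\varphi_{s,t}}e^{iH_{\phi_t}(t-s)}\tilde\psi_t$. Duhamel for $\eta_s-\tilde\psi_s$ with driving term $(H_{\phi_t}-H_{\phi_u})\tilde\psi_u$ of size $\alpha^{-2}|t-u|$ (from $\|V_{\phi_t}-V_{\phi_u}\|_\infty\lesssim\|g_{u,t}\|_\infty\lesssim\alpha^{-2}|t-u|$ and $|\|\phi_t\|^2-\|\phi_u\|^2|\lesssim\alpha^{-2}|t-u|$ via \eqref{eq:slowPhi1}) gives $\|\eta_s-\tilde\psi_s\|_{\mathcal L^2}\lesssim\alpha^{-2}(t-s)^2$. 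Since $P_{\tilde\psi_s}^\bot\tilde\psi_s=0$ and $F_{t,s}=P_{\tilde\psi_s}^\bot X_{t,s}$ with $\|X_{t,s}\|\lesssim\alpha^{-1}(1+\alpha^{-1}|t-s|)$, one has $\langle\tilde\psi_t,e^{-iH_{\phi_t}t}D_0\rangle_{\mathcal L^2}=\int_0^te^{i\varphi_{s,t}}\langle P_{\tilde\psi_s}^\bot(\eta_s-\tilde\psi_s),X_{t,s}\rangle_{\mathcal L^2}\,ds$, and combining the two estimates yields \eqref{eq:psiPerpprop}.

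The principal technical obstacle is handling the singular creation operator $b^*(h_x)$ in this dynamical setting: the $P_{\tilde\psi_s}^\bot$-regularization (turning $h_x$ into the $\mathcal L^2_k$-function $h_x-\sigma_{\psi_s}$) must be combined with the Lieb--Yamazaki commutator identity for the secondary $1/|k|$ factors in $D_1,D_2,D_5$, while simultaneously tracking the coherent-state shift $\chi_{s,t}$ through each Weyl commutation and across the double Duhamel expansion defining $D_1,\ldots,D_5$. This coordination is the technical novelty emphasized in the introduction and is what distinguishes the argument from the stationary Lieb--Yamazaki setting. A secondary subtlety is the precise choice of the phase $\omega(u)-\|\phi_u\|^2$ (rather than $\omega(u)$ alone) in the orthogonality argument: this exact subtraction cancels an otherwise $O(1)$ term in the Duhamel formula for $\eta_s-\tilde\psi_s$, leaving only the slow $\alpha^{-2}|t-u|$ driving term, without which the crucial $\alpha^{-2}(t-s)^2$ estimate would fail.
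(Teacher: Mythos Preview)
There is a genuine gap at the very first step. You claim that the $P_{\tilde\psi_s}^\bot$-projection regularizes $b^*(h_x)$ and that
\[
\int_{\R^3}|\tilde\psi_s(x)|^2\,\|h_x-\sigma_{\psi_s}\|_{\mathcal L^2_k}^2\,dx
=\int_{\R^3}\frac{1-|\widehat{|\psi_s|^2}(k)|^2}{|k|^2}\,dk
\]
is finite. The identity is correct, but the conclusion is not: since $|\psi_s|^2\in L^1(\R^3)$, one has $\widehat{|\psi_s|^2}(k)\to 0$ as $|k|\to\infty$, so the integrand tends to $|k|^{-2}$ at infinity, and $\int_{|k|>1}|k|^{-2}\,dk=\infty$ in three dimensions. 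No amount of $\mathcal H^4$-regularity of $\psi_s$ helps here---on the contrary, more smoothness only makes $\widehat{|\psi_s|^2}$ decay faster, pushing the integrand closer to $|k|^{-2}$. The projection $P_{\tilde\psi_s}^\bot$ cures only the \emph{infrared} singularity at $k=0$ (via $\widehat{|\psi_s|^2}(0)=1$); the \emph{ultraviolet} divergence at large $k$ is untouched. Consequently $F_{t,s}\notin\mathcal L^2\otimes\mathcal F$, and your pointwise bound $\|F_{t,s}\|\lesssim\alpha^{-1}+\alpha^{-2}|t-s|$ is false. This error propagates: the argument for $D_0$ collapses, the estimate $\|X_{t,s}\|\lesssim\alpha^{-1}$ used in your treatment of \eqref{eq:psiPerpprop} is equally invalid, and the bounds on $D_1,\ldots,D_5$ inherit the same problem through $F_{t,s}$.

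The paper handles the ultraviolet issue quite differently. Rather than bounding $\|F_{t,s}\|$, it integrates by parts \emph{in the time variable} $s$ (using $e^{iH_{\phi_t}s}=-i(H_{\phi_t}+M)^{-1}e^{-iMs}\partial_s e^{i(H_{\phi_t}+M)s}$) to insert a resolvent $(H_{\phi_t}+M)^{-1}$, and then applies the identity $e^{ik\cdot x}=(1+|k|^2)^{-1}(1-ik\cdot\nabla_x)e^{ik\cdot x}$ to trade the large-$k$ divergence for a gradient, which the resolvent absorbs (Lemma~\ref{singular}). This is the dynamical analogue of Lieb--Yamazaki that the paper emphasizes as new; you invoke it only for the ``secondary'' singularities in $D_1,D_2,D_5$, but it is already indispensable for $D_0$. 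The cost of the time-integration by parts is a boundary term of order $\alpha^{-1}$ with no factor of $t$, which is why the paper's bound on $D_0$ reads $\alpha^{-1}(1+t)$ rather than $\alpha^{-1}t$; your approach, even if patched, would not naturally produce that constant term. Your treatment of \eqref{eq:vaToOne} is fine and matches the paper's; your Duhamel idea for \eqref{eq:psiPerpprop} is reasonable in spirit but ultimately relies on the same infinite $\|X_{t,s}\|$.
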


This theorem (and its analogue for $t\in[-\alpha^2,0]$), together with the decomposition from Proposition \ref{decomp} and the fact that the operators $W(\alpha^2\phi_t)$, $e^{-iH_{\phi_t}t}$ and $e^{-i\tilde H_{\phi_t} t}$ are unitary, implies Theorem \ref{THM:main}. In fact, \eqref{eq:D0} implies the second bound in \eqref{eq:diffrem}, \eqref{eq:D1}--\eqref{eq:D5} imply the first bound in \eqref{eq:diffrem}, \eqref{eq:vaToOne} implies \eqref{eq:true1rem} and \eqref{eq:psiPerpprop} implies \eqref{eq:true2rem}.

We emphasize that Theorem \ref{PRO:D05} is valid up to times $\alpha^2$. (In fact, since the proof only relies on Proposition \ref{THM:wellposedness}, it is valid up to times $\tau\alpha^2$ for an arbitrary $\tau>0$ with $C$ depending on $\tau$.) Consequently, the bounds in Theorem \ref{THM:main} are also valid up to times $\alpha^2$. However, since the evolved state and the main term in the approximation have both norm one, the bounds are only meaningful for times up to $\epsilon\alpha$ for some small $\epsilon>0$.

The basic intuition behind the bounds on $D_k$, $k=0,\ldots,5$, is that each annihilation or creation operator is of order $\alpha^{-1}$ and therefore $D_0$, which contains only one creation operator, is of order $\alpha^{-1}$, $D_1,D_2,D_3,D_4$, which contain two creation or annihilation operators, are of order $\alpha^{-2}$ and $D_5$, which contains three creation or annihilation operators, is of order $\alpha^{-3}$. We illustrate this intuition in more detail in Subsection \ref{sec:warmup} with the simplest possible terms.

While this basic principle is true, it is oversimplifying the situation considerably as is does not take the slow-decaying terms $|k|^{-1}$ into account. The operator $\int e^{ik\cdot x}b_k^* |k|^{-1}dk$ and its adjoint are \emph{not} bounded relative to the number operator $\int b_k^* b_k\,dx$. In fact, the treatment of these operators is the major difficulty that we have to overcome here.

At this point we have reduced the proof of Theorem \ref{THM:main} to the proof of Theorem~\ref{PRO:D05}, and the remainder of the paper is concerned with this. We bound $D_0$ in Section \ref{sec:remainder}, $D_1$ in Section~\ref{sec:D15} and $D_2$ in Section \ref{sec:remainder2}. The terms $D_3$, $D_4$ and $D_5$ which are easier to bound than $D_1$ and $D_2$, are briefly discussed in Section \ref{sec:remainder3}. Finally, the bounds \eqref{eq:vaToOne} and \eqref{eq:psiPerpprop} will be proved in Subsections \ref{sub:true1} and \ref{sub:true2}, respectively.


\subsection{A further decomposition}

Using the fact that $P_{\tilde\psi_t}^\bot=1-|\tilde\psi_t\rangle\langle\tilde\psi_t|$ (see the proof of Lemma \ref{LM:newform}), we decompose
$$
F_{t,s} = F_{t,s}^{(1)} - F_{t,s}^{(2)} \,,
$$
where
$$
F_{t,s}^{(1)} := \int_{\R^3} \left( e^{ik\cdot x} W^*(\alpha^2 \phi_t)W(\alpha^2\phi_s) b_k^*\ \tilde\psi_s\otimes\Omega\right) \frac{dk}{|k|} 
$$
and, with the notation $\sigma_{\psi}$ from \eqref{eq:forcing}, 
$$
F_{t,s}^{(2)} := \tilde\psi_s \otimes W^*(\alpha^2\phi_t)W(\alpha^2\phi_s) b^*(\sigma_{\tilde\psi_s})\Omega \,.
$$
Correspondingly, we define
$$
D_k = D_{k1} - D_{k2}
\qquad\text{for}\ k=0,1,2,3,4,5 \,.
$$
In general, the terms $D_{k2}$ are easier to deal with than the terms $D_{k1}$. The reason for this is that $e^{ik\cdot x}|k|^{-1} \not\in L^2(\R^3)$, whereas $\sigma_{\tilde\psi_t}\in L^2(\R^3)$ by Lemma \ref{wellposedenergy}, so the operator $\int e^{ik\cdot x} b_k^* |k|^{-1}\,dk$ in $F_{t,s}^{(1)}$ is harder to control than the operator $b^*(\sigma_{\tilde\psi_s})$ in $F_{t,s}^{(2)}$.

For $k=1,\ldots,5$ both operators $D_{k1}$ and $D_{k2}$ involve an operator $b_k^*$, $b_k$ or $b_k^*b_k$ to the left of $F_{t,s}^{(1)}$ or $F_{t,s}^{(2)}$, which in turn involves an operator $W^*(\alpha^2\phi_t) W(\alpha^2\phi_s)$. We now decompose
$$
D_{kj}= D_{kj1} + D_{kj2}
\qquad\text{for}\ k=1,2,3,4,5
\ \text{and}\ j=1,2 \,,
$$
where $D_{kj1}$ denotes the expression with $b_k$, $b_k^*$ or $b_k^*b_k$ commuted through the operator $W^*(\alpha^2\phi_t) W(\alpha^2\phi_s)$ and $D_{kj2}$ denotes the expression coming from the commutator. To be explicit, we display some exemplary cases,
\begin{align}
\label{eq:d111def}
D_{111} & =\int_0^{t} \int_{0}^{t-s} \int_{\R^3} \int_{\R^3} e^{i\tilde{H}_{\phi_t}(s+s_1)} e^{ik\cdot x} e^{-iH_{\phi_t}s_1} e^{i k'\cdot x} W^*(\alpha^2\phi_t)W(\alpha^2\phi_s) b_k^* b_{k'}^* \notag \\
& \qquad\qquad\qquad\qquad \times\tilde\psi_s\otimes\Omega \,\frac{dk'}{|k'|}\,\frac{dk}{|k|}\,ds_1\,ds \,,\\
\label{eq:d121def}
D_{121} & =\int_0^{t} \int_{0}^{t-s} \int_{\R^3} e^{i\tilde{H}_{\phi_t}(s+s_1)} e^{ik\cdot x} e^{-iH_{\phi_t}s_1}
 W^*(\alpha^2\phi_t)W(\alpha^2\phi_s) b_k^* b^*(\sigma_{\tilde\psi_s}) \notag\\
 &\qquad\qquad\qquad\qquad\times\tilde\psi_s\otimes\Omega \,\frac{dk}{|k|}\,ds_1\,ds \,, \\
\label{eq:d211def}
D_{211} & =\int_0^{t} \int_{0}^{t-s} \int_{\R^3} \int_{\R^3} e^{i\tilde{H}_{\phi_t}(s+s_1)} e^{ik\cdot x} e^{-iH_{\phi_t}s_1} e^{i k'\cdot x} W^*(\alpha^2\phi_t)W(\alpha^2\phi_s) b_k b_{k'}^* \notag\\
&\qquad\qquad\qquad\qquad\times\tilde\psi_s\otimes\Omega \,\frac{dk'}{|k'|}\,\frac{dk}{|k|}\,ds_1\,ds \,,\\
\label{eq:d221def}
D_{221} & =\int_0^{t} \int_{0}^{t-s} \int_{\R^3} e^{i\tilde{H}_{\phi_t}(s+s_1)} e^{ik\cdot x} e^{-iH_{\phi_t}s_1}
 W^*(\alpha^2\phi_t)W(\alpha^2\phi_s) b_k b^*(\sigma_{\tilde\psi_s}) \notag\\
 &\qquad\qquad\qquad\qquad\times\tilde\psi_s\otimes\Omega \,\frac{dk}{|k|}\,ds_1\,ds \,.
\end{align}
The commutator terms can be computed with the help of Corollary \ref{cor:bbb}. Recalling the definition of the function $g_{s,t}$ in \eqref{eq:gts}, we have for instance
\begin{align}
\label{eq:d112def}
D_{112} & = - \int_0^{t} \int_{0}^{t-s} \int_{\R^3} e^{i\tilde{H}_{\phi_t}(s+s_1)} g_{s,t} W^*(\alpha^2\phi_t)W(\alpha^2\phi_s) e^{-iH_{\phi_t}s_1} e^{ik\cdot x} b_{k}^* \tilde\psi_s\otimes\Omega \,\frac{dk}{|k|}\,ds_1\,ds \,, \\
\label{eq:d122def}
D_{122} & = - \int_0^{t} \int_{0}^{t-s} e^{i\tilde{H}_{\phi_t}(s+s_1)} g_{s,t} W^*(\alpha^2\phi_t)W(\alpha^2\phi_s) e^{-iH_{\phi_t}s_1} b^*(\sigma_{\tilde\psi_s}) \tilde\psi_s\otimes\Omega \,ds_1\,ds \,, \\
\label{eq:d212def}
D_{212} & = \int_0^{t} \int_{0}^{t-s} \int_{\R^3} e^{i\tilde{H}_{\phi_t}(s+s_1)} \overline{g_{s,t}} W^*(\alpha^2\phi_t)W(\alpha^2\phi_s) e^{-iH_{\phi_t}s_1} e^{ik\cdot x} b_{k}^* \tilde\psi_s\otimes\Omega \,\frac{dk}{|k|}\,ds_1\,ds \,,\\
\label{eq:d222def}
D_{222} &= \int_0^{t} \int_{0}^{t-s} e^{i\tilde{H}_{\phi_t}(s+s_1)} \overline{g_{s,t}} W^*(\alpha^2\phi_t)W(\alpha^2\phi_s) e^{-iH_{\phi_t}s_1} b^*(\sigma_{\tilde\psi_s}) \tilde\psi_s\otimes\Omega \,ds_1\,ds \,.
\end{align}


\subsection{Some warm-up bounds}\label{sec:warmup}

In order to prepare for the rather technical sections that follow, we will first focus on the terms that do not include a term of the form $|k|^{-1}$, that is, on the terms $D_{02}$, $D_{32}$, $D_{42}$ and $D_{52}$. We hope that this explains the underlying mechanism of our proof and the intuition that each annihilation or creation operator is of size $\alpha^{-1}$.


\subsection*{Bound on $D_{02}$}

We recall that
$$
D_{02} = \int_0^t \left( e^{iH_{\phi_t}s} \tilde\psi_s \right) \otimes \left( W^*(\alpha^2\phi_t) W(\alpha^2\phi_s) b^*(\sigma_{\tilde\psi_s})\Omega\right) ds
$$
and, therefore, by Lemma \ref{wellposedenergy},
\begin{equation}
\label{eq:d02}
\|D_{02}\|_{\mathcal L^2\otimes\mathcal F} \leq \int_0^t \|\tilde\psi_s\|_2 \|b^*(\sigma_{\tilde\psi_s})\Omega\|_{\mathcal F} \,ds = \alpha^{-1} \int_0^t \|\sigma_{\tilde\psi_s}\|_2 \,ds \lesssim \alpha^{-1} t \,.
\end{equation}


\subsection*{Bound on $D_{32}$}

We have
$$
D_{321} = \int_0^t \int_0^{t-s} e^{i\tilde H_{\phi_t}(s+s_1)} \left( e^{-iH_{\phi_t}s_1} \tilde\psi_s\right) \otimes \left( W^*(\alpha^2\phi_t) W(\alpha^2\phi_s) b^*(\phi_t) b^*(\sigma_{\tilde\psi_s})\Omega\right) ds_1\,ds
$$
and, according to Corollary \ref{cor:bbb},
\begin{align*}
D_{322} & = - \int_0^t \int_0^{t-s} e^{i\tilde H_{\phi_t}(s+s_1)} \left( e^{-iH_{\phi_t}s_1} \tilde\psi_s\right) \\ & \qquad\qquad\qquad\otimes \left( (\phi_t-\phi_s,\phi_t) W^*(\alpha^2\phi_t) W(\alpha^2\phi_s) b^*(\sigma_{\tilde\psi_s})\Omega\right) ds_1\,ds \,.
\end{align*}
By the bounds from Lemma \ref{wellposedenergy} we have
$$
\left\| b^*(\phi_t) b^*(\sigma_{\tilde\psi_s}) \Omega \right\|_{\mathcal F} = \alpha^{-2} \left( \|\phi_t\|_2^2 \|\sigma_{\tilde\psi_s}\|_2^2 + |(\phi_t,\sigma_{\tilde\psi_s})|^2 \right)^{1/2} 
\lesssim \alpha^{-2} \,,
$$
and therefore, using also the conservation of the $\mathcal L^2$-norm of $\tilde\psi_s$,
$$
\|D_{321}\|_{\mathcal L^2\otimes\mathcal F} \lesssim \alpha^{-2} t^2 \,. 
$$
On the other hand, the bounds from Lemma \ref{wellposedenergy} imply
$$
\left\| b^*(\sigma_{\tilde\psi_s})\Omega \right\|_{\mathcal F} = \alpha^{-1} \|\sigma_{\tilde\psi_s}\|_2 \lesssim \alpha^{-1}
\qquad
|(\phi_t-\phi_s,\phi_t)|\lesssim \alpha^{-2}|t-s| \,,
$$
and therefore, using again the conservation of the $\mathcal L^2$-norm of $\tilde\psi_s$,
$$
\|D_{322}\|_{\mathcal L^2\otimes\mathcal F} \lesssim \alpha^{-3} t^3 \,. 
$$
Thus, we have shown that
\begin{equation}
\label{eq:d32}
\| D_{32} \|_{\mathcal L^2\otimes\mathcal F} \lesssim \alpha^{-2} t^2 \left( 1+ \alpha^{-1} t \right) .
\end{equation}


\subsection*{Bound on $D_{42}$}

We have
$$
D_{421} = \int_0^t \int_0^{t-s} e^{i\tilde H_{\phi_t}(s+s_1)} \left( e^{-iH_{\phi_t}s_1} \tilde\psi_s\right) \otimes \left(W^*(\alpha^2\phi_t) W(\alpha^2\phi_s)  b(\phi_t) b^*(\sigma_{\tilde\psi_s})\Omega\right) ds_1\,ds
$$
and, according to Corollary \ref{cor:bbb},
\begin{align*}
D_{422} & = \int_0^t \int_0^{t-s} e^{i\tilde H_{\phi_t}(s+s_1)} \left( e^{-iH_{\phi_t}s_1} \tilde\psi_s\right) \\
 & \qquad\qquad\qquad\otimes \left( (\phi_t,\phi_t-\phi_s) W^*(\alpha^2\phi_t) W(\alpha^2\phi_s) b^*(\sigma_{\tilde\psi_s})\Omega\right) ds_1\,ds \,.
\end{align*}
We commute once again and obtain
$$
D_{421} = \int_0^t \int_0^{t-s} e^{i\tilde H_{\phi_t}(s+s_1)} \left( e^{-iH_{\phi_t}s_1} \tilde\psi_s\right) \otimes \left( \alpha^{-2} (\phi_t,\sigma_{\tilde\psi_s}) W^*(\alpha^2\phi_t) W(\alpha^2\phi_s) \Omega\right) ds_1\,ds \,.
$$
According to Lemma \ref{wellposedenergy} we have $|(\phi_t,\sigma_{\tilde\psi_s})|\lesssim 1$. This and computations similarly to those in the bound of $D_{32}$ yield
$$
\|D_{421}\|_{\mathcal L^2\otimes\mathcal F} \lesssim \alpha^{-2} t^2 \,,
\qquad
\|D_{422}\|_{\mathcal L^2\otimes\mathcal F} \lesssim \alpha^{-3} t^3 \,.
$$
Thus, we have shown that
\begin{equation}
\label{eq:d42}
\| D_{42} \|_{\mathcal L^2\otimes\mathcal F} \lesssim \alpha^{-2} t^2 \left( 1+\alpha^{-1} t \right) .
\end{equation}


\subsection*{Bound on $D_{52}$}

To simplify the notation, let us introduce
\begin{align}\label{eq:number}
\mathcal N:=\int_{\mathbb{R}^3} b_k^* b_k \,dk \,.
\end{align}
We have
$$
D_{521} = \int_0^t \int_0^{t-s} e^{i\tilde H_{\phi_t}(s+s_1)} \left( e^{-iH_{\phi_t}s_1} \tilde\psi_s\right) \otimes \left( W^*(\alpha^2\phi_t) W(\alpha^2\phi_s) \mathcal N\, b^*(\sigma_{\tilde\psi_s})\Omega\right) ds_1\,ds \,.
$$
Moreover, by Corollary \ref{cor:bbb},
\begin{align*}
\left[\mathcal N,\ W^*(\alpha^2\phi_t)W(\alpha^2\phi_s)\right]
= &-W^*(\alpha^2\phi_t)W(\alpha^2\phi_s) \left( b(\phi_t)-b(\phi_s) \right) \\
&+W^*(\alpha^2\phi_t)W(\alpha^2\phi_s) \left( b^*(\phi_t) - b^*(\phi_s) \right) \\
&-W^*(\alpha^2\phi_t)W(\alpha^2\phi_s) \|\phi_t-\phi_2\|_2^2 \,,
\end{align*}
so
\begin{align*}
D_{522}= & \int_0^t \int_0^{t-s} e^{i\tilde H_{\phi_t}(s+s_1)} \left( e^{-iH_{\phi_t}s_1} \tilde\psi_s\right) \otimes \left( W^*(\alpha^2\phi_t)W(\alpha^2\phi_s) \right. \\
&  \qquad\qquad\qquad \times \left. \left( - b(\phi_t-\phi_s) + b^*(\phi_t- \phi_s) - \|\phi_t-\phi_2\|_2^2 \right) b^*(\sigma_{\tilde\psi_s})\Omega \right)ds_1\,ds \,.
\end{align*}
We use $\mathcal N\, b^*(\sigma_{\tilde\psi_s}) = b^*(\sigma_{\tilde\psi_s})\mathcal N + \alpha^{-2} b^*(\sigma_{\tilde\psi_s})$ and obtain
$$
D_{521} = \alpha^{-2} \int_0^t \int_0^{t-s} e^{i\tilde H_{\phi_t}(s+s_1)} \left( e^{-iH_{\phi_t}s_1} \tilde\psi_s\right) \otimes \left( W^*(\alpha^2\phi_t) W(\alpha^2\phi_s) b^*(\sigma_{\tilde\psi_s})\Omega\right) ds_1\,ds \,.
$$
Therefore, similarly as before,
$$
\| D_{521} \|_{\mathcal L^2\otimes\mathcal F} \lesssim \alpha^{-3}t^2 \,.
$$
For $D_{522}$ we commute again to get
\begin{align*}
D_{522} &= \int_0^t \int_0^{t-s} e^{i\tilde H_{\phi_t}(s+s_1)} \left( e^{-iH_{\phi_t}s_1} \tilde\psi_s\right) \otimes \left( W^*(\alpha^2\phi_t)W(\alpha^2\phi_s) \right. \\
& \ \ \times \left. \left( - \alpha^{-2} (\phi_t-\phi_s,\sigma_{\tilde\psi_s}) \Omega + b^*(\phi_t- \phi_s) b^*(\sigma_{\tilde\psi_s})\Omega - \|\phi_t-\phi_2\|_2^2\, b^*(\sigma_{\tilde\psi_s})\Omega \right) \right)ds_1\,ds \,.
\end{align*}
For the second term on the right side we compute
$$
\left\| b^*(\phi_t-\phi_s) b^*(\sigma_{\tilde\psi_s})\Omega \right\|_{\mathcal F} = \alpha^{-2}  \left( \|\phi_t-\phi_s\|_2^2 \|\sigma_{\tilde\psi_s}\|_2^2 + |(\phi_t-\phi_s,\sigma_{\tilde\psi_s})|^2 \right)^{1/2} \,.
$$
Using the bounds from Lemma \ref{wellposedenergy} for $\|\phi_t-\phi_s\|_2$ we obtain that
$$
\| D_{522} \|_{\mathcal L^2\otimes\mathcal F} \lesssim \alpha^{-4} t^3 \left( 1+ \alpha^{-1} t \right)  \,.
$$
Thus, we have shown that
\begin{equation}
\label{eq:d52}
\| D_{52} \|_{\mathcal L^2\otimes\mathcal F} \lesssim \alpha^{-3} t^2 \left( 1 + \alpha^{-2} t^2 \right) .
\end{equation}


\section{Bound on $D_0$}\label{sec:remainder}


We have already controlled $D_{02}$ in \eqref{eq:d02}, so it remains to consider $D_{01}$.

\subsection*{Bound on $D_{01}$}

We recall that
$$
D_{01} = \int_0^t e^{iH_{\phi_t}s} \int_{\R^3} \left( e^{ik\cdot x} W^*(\alpha^2 \phi_t)W(\alpha^2\phi_s) b_k^*\ \tilde\psi_s\otimes\Omega\right) \frac{dk}{|k|} \,ds
$$
The main difficulty here, which we will encounter in various forms throughout this paper, is the unboundedness of the operator $\int e^{ik\cdot x} b_{k}^{*}|k|^{-1}\,dk$ (for any fixed $x\in\R^3$), since $e^{ik\cdot x}|k|^{-1}\not\in\mathcal L^2(\R^3)$.

To overcome this difficulty we make use of the oscillatory behavior of $e^{ik\cdot x}$ via the formula
\begin{align}\label{eq:standard}
e^{ik\cdot x}=\frac{1-ik\cdot \nabla_x}{1+|k|^2}\ e^{ik\cdot x}
\end{align}
and aim at integrating by parts with respect to $x$. However, this integration by parts creates a new difficulty: the resulting operator $\nabla_x$ is unbounded and has to be controlled.

To overcome this new difficulty, it will be desirable to have an operator $(-\Delta+1)^{-1}$ somewhere in the expression of $D_{01}$ so that we can use it to control $\nabla_x,$ since obviously $\nabla_x (-\Delta+1)^{-1}$ is bounded. It is equivalent and technically more convenient to work with $(H_{\phi_t}+M)^{-1}$, where $M>0$ is a large constant (independent of $\alpha$ and $t$), instead of $(-\Delta+1)^{-1}$. In order to create this term we first integrate by parts in $s$ and make use of the identity
\begin{align}
e^{i H_{\phi_t}s}= -i \left(H_{\phi_t}+M\right)^{-1} e^{-iMs} \partial_{s}\left[e^{i\left(H_{\phi_t}+M\right)s}\right] .
\end{align}
We obtain, using the fact that $H_{\phi_t}$ commutes with $W(\alpha^2\phi_s)$,
\begin{align*}
D_{01}=&-i e^{iH_{\phi_t}t} \left(H_{\phi_t}+M\right)^{-1} \int_{\R^3} e^{ik\cdot x} b_k^* \tilde\psi_t\otimes \Omega \,\frac{dk}{|k|} \\
&+i W^*(\alpha^2\phi_t) W(\alpha^2\phi_0) \left( H_{\phi_t}+M\right)^{-1} \int_{\mathbb{R}^3} e^{ik\cdot x} b_k^*\tilde\psi_0\otimes \Omega\,\frac{dk}{|k|} \\
&+M\int_{0}^{t} e^{iH_{\phi_t}s} W^*(\alpha^2\phi_t) W(\alpha^2\phi_s)  \left(H_{\phi_t}+M\right)^{-1} \int_{\mathbb{R}^3} e^{ik\cdot x} b_k^*\tilde\psi_s\otimes \Omega \,\frac{dk}{|k|}\,ds \\
&+i\int_{0}^{t} e^{iH_{\phi_t}s} W^*(\alpha^2\phi_t) W(\alpha^2\phi_s)  \left(H_{\phi_t}+M\right)^{-1} \int_{\mathbb{R}^3} e^{ik\cdot x} b_k^* \partial_{s}\tilde\psi_s\otimes \Omega\, \frac{dk}{|k|} \,ds \\
&+i\int_{0}^{t} e^{iH_{\phi_t}s} W^*(\alpha^2\phi_t) \left(\partial_s W(\alpha^2\phi_s)\right) \left(H_{\phi_t}+M\right)^{-1} \int_{\mathbb{R}^3} e^{ik\cdot x} b_k^* \tilde\psi_s\otimes\Omega \,\frac{dk}{|k|} \,ds \\
=& D_{011}+D_{012}+D_{013}+D_{014}+D_{015} \,,
\end{align*}
where the terms $D_{01k}$ are defined in a natural way. We will prove the following lemma.

\begin{lemma}\label{singular}
For $u\in \mathcal H^1(\R^3)$ and $f\in \mathcal L^2(\R^3)$,
$$
\left\| \left( -\Delta+1\right)^{-1/2} \int_{\R^3} e^{ik\cdot x} b_k^* u\otimes \Omega \,\frac{dk}{|k|} \right\|_{\mathcal L^2\otimes\mathcal F} \lesssim \alpha^{-1} \|u\|_{\mathcal H^1}
$$
and
$$
\left\| \left( -\Delta+1\right)^{-1/2} \int_{\R^3} e^{ik\cdot x} b^*(f) b_k^* u\otimes \Omega \,\frac{dk}{|k|} \right\|_{\mathcal L^2\otimes\mathcal F} \lesssim \alpha^{-2} \|u\|_{\mathcal H^1} \|f\|_{2} \,.
$$
\end{lemma}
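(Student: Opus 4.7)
The plan is to compute the squared norms by expanding in the Fock-space structure, reduce the result to a bounded Fourier multiplier, and then control that multiplier.

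First I would set $T := \int_{\R^3} e^{ik\cdot x}\,b_k^*\,|k|^{-1}\,dk$ and use only $b_k\Omega=0$ together with the commutation relation $[b_k,b_{k'}^*]=\alpha^{-2}\delta(k-k')$ from \eqref{eq:commB} to obtain
$$
\bigl\|(-\Delta+1)^{-1/2}\,T\,u\otimes\Omega\bigr\|_{\mathcal L^2\otimes\mathcal F}^2
= \alpha^{-2}\int_{\R^3}\frac{dk}{|k|^2}\,\bigl\|(-\Delta+1)^{-1/2}e^{ik\cdot x}u\bigr\|_{\mathcal L^2}^2.
$$
Next I would apply the intertwining identity $e^{-ik\cdot x}(-\Delta+1)^{-1/2}e^{ik\cdot x}=((-i\nabla+k)^2+1)^{-1/2}$, followed by Plancherel in $x$, to rewrite the right-hand side as $\alpha^{-2}\int_{\R^3} K(p)\,|\hat u(p)|^2\,dp$ with the scalar multiplier
$$
K(p) := \int_{\R^3}\frac{dk}{|k|^2\bigl(1+|p+k|^2\bigr)}.
$$
The first bound thus reduces to showing $K\in L^\infty(\R^3)$. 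I would verify this by splitting the $k$-integral into the regions $\{|k|\leq|p|/2\}$, $\{|p+k|\leq|p|/2\}$, and their complement, and estimating each via elementary spherical coordinates; this in fact yields $K(p)\lesssim (1+|p|)^{-1}$, giving the first bound with even $\|u\|_{\mathcal L^2}$ in place of $\|u\|_{\mathcal H^1}$.

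For the second bound I would use the same strategy with $T$ replaced by $T_f := \int_{\R^3} e^{ik\cdot x}\,b^*(f)b_k^*\,|k|^{-1}\,dk$. The new ingredient is the two-particle Fock inner product, which a short commutator computation shows equals
$$
\langle b^*(f)b_k^*\Omega,\,b^*(f)b_{k'}^*\Omega\rangle_{\mathcal F}
= \alpha^{-4}\|f\|^2\,\delta(k-k') + \alpha^{-4}\,f(k)\,\overline{f(k')}.
$$
Inserting this into the expansion of $\bigl\|(-\Delta+1)^{-1/2}T_f\,u\otimes\Omega\bigr\|^2$ splits it into a diagonal piece and a rank-one piece. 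The diagonal piece is $\alpha^{-2}\|f\|^2$ times the integral already bounded in part one, so it is $\lesssim\alpha^{-4}\|f\|^2\|u\|_{\mathcal L^2}^2$. The rank-one piece equals
$$
\alpha^{-4}\,\Bigl\|(-\Delta+1)^{-1/2}\!\int_{\R^3}|k|^{-1}\overline{f(k)}\,e^{ik\cdot x}u\,dk\Bigr\|_{\mathcal L^2}^2,
$$
and by Minkowski's integral inequality followed by Cauchy--Schwarz in $k$ (pairing $|f(k)|$ against $|k|^{-1}\|(-\Delta+1)^{-1/2}e^{ik\cdot x}u\|_{\mathcal L^2}$) this is bounded by $\|f\|_{\mathcal L^2}^2$ times exactly the same integral as in part one, hence also $\lesssim\alpha^{-4}\|f\|^2\|u\|_{\mathcal L^2}^2$.

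The only substantive step in the whole argument is the uniform boundedness of the multiplier $K(p)$: each of the factors $|k|^{-2}$ and $(1+|p+k|^2)^{-1}$ sits right on the borderline of integrability (the former near $k=0$, the latter at infinity), and the three-region splitting works because for any $k$ at least one of $|k|$ and $|p+k|$ is comparable to $|p|$, which lets one factor tame the marginal singularity of the other. Everything else is routine Fock-space bookkeeping.
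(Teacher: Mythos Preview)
Your proof is correct and takes a different route from the paper. The paper argues by duality: pairing with an arbitrary $\gamma\in\mathcal L^2\otimes\mathcal F$, it invokes the identity $e^{ik\cdot x}=\frac{1-ik\cdot\nabla_x}{1+|k|^2}\,e^{ik\cdot x}$ and integrates by parts in $x$, which trades the non--square-integrable weight $|k|^{-1}$ for the square-integrable weights $|k|^{-1}(1+|k|^2)^{-1}$ and $(1+|k|^2)^{-1}$ at the price of one derivative landing on $u$ --- hence the $\|u\|_{\mathcal H^1}$ in the statement. Your direct computation of the squared norm bypasses this integration by parts entirely and reduces the first inequality to the boundedness of the scalar multiplier $K(p)=\int_{\R^3}|k|^{-2}(1+|p+k|^2)^{-1}\,dk$, which is exactly the content of \eqref{eq:hkl1}, established later in the paper (Lemma~\ref{intbounds}) for a different purpose. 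Your argument is more economical here and in fact yields the sharper conclusion with $\|u\|_{\mathcal L^2}$ in place of $\|u\|_{\mathcal H^1}$; for the second inequality your diagonal/rank-one splitting and the Minkowski--Cauchy--Schwarz step recycle that same multiplier bound cleanly, whereas the paper simply repeats its integration-by-parts computation with $\Phi_k=b^*(f)b_k^*\Omega$ in place of $b_k^*\Omega$.
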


We defer the proof of this lemma to the end of this section and first show how to use it to control $D_{01}$. We know from Corollary \ref{formopdom} and Lemma \ref{wellposedenergy} that we can choose $M$ large enough so that $(H_{\phi_t}+M)^{-1/2} (-\Delta+1)^{1/2}$ is bounded uniformly in $t\in\R$. Moreover, by Proposition \ref{THM:wellposedness}, $\tilde\psi_t$ and $\partial_t\tilde\psi_t$ belong to $H^1(\R^3)$ and have uniformly bounded norms for $t\in [0,\alpha^2]$; see also the remark at the beginning of Subsection \ref{sec:decomp} concerning the bounds on $\partial_t\tilde\psi_t$. These facts, together with the unitarity of $e^{iH_{\phi_t}s}$, $W^*(\alpha^2\phi_t)$ and $W(\alpha^2\phi_s)$, imply that
$$
\| D_{011} \|_{\mathcal L^2\otimes\mathcal F} \lesssim \alpha^{-1} \,,
\qquad
\| D_{012} \|_{\mathcal L^2\otimes\mathcal F} \lesssim \alpha^{-1} \,,
$$
and
$$
\| D_{013} \|_{\mathcal L^2\otimes\mathcal F} \lesssim \alpha^{-1}t \,,
\qquad
\| D_{014} \|_{\mathcal L^2\otimes\mathcal F} \lesssim \alpha^{-1}t \,,
$$
In order to deal with the term $D_{015}$ we make use of \eqref{eq:firPW} and find
\begin{align*}
D_{015}= & -\int_{0}^{t} \left( \im  (\phi_s,\alpha^2\partial_s\phi_s)\right) e^{iH_{\phi_t}s} W^*(\alpha^2\phi_t) W(\alpha^2\phi_s) \\
& \qquad\qquad\qquad \times \left(H_{\phi_t}+M\right)^{-1} \int_{\mathbb{R}^3} e^{ik\cdot x} b_k^* \tilde\psi_s\otimes\Omega \,\frac{dk}{|k|} \,ds \\
& + i\int_{0}^{t} e^{iH_{\phi_t}s} W^*(\alpha^2\phi_t) W(\alpha^2\phi_s) \\
& \qquad\qquad\qquad \times\left(H_{\phi_t}+M\right)^{-1} \int_{\mathbb{R}^3} e^{ik\cdot x}  b^*(\alpha^2\partial_s\phi_s) b_k^* \tilde\psi_s\otimes\Omega \,\frac{dk}{|k|} \,ds \\
& - i\int_{0}^{t} e^{iH_{\phi_t}s} W^*(\alpha^2\phi_t) W(\alpha^2\phi_s) \\
&\qquad\qquad\qquad\times \left(H_{\phi_t}+M\right)^{-1} \int_{\mathbb{R}^3} e^{ik\cdot x} b(\alpha^2\partial_s\phi_s) b_k^* \tilde\psi_s\otimes\Omega \,\frac{dk}{|k|} \,ds \\
= & D_{0151}+D_{0152}+D_{0153} \,.
\end{align*}
From Lemma \ref{wellposedenergy} we know that $|(\phi_s,\alpha^2\partial_s\phi_s)|\lesssim 1$ and $\|\alpha^2\partial_s\phi_s\|\lesssim 1$. Thus, the first and the second bound in Lemma \ref{singular} imply, respectively,
$$
\| D_{0151} \|_{\mathcal L^2\otimes\mathcal F} \lesssim \alpha^{-1} t \,,
\qquad
\| D_{0152} \|_{\mathcal L^2\otimes\mathcal F} \lesssim \alpha^{-2} t \,.
$$
For $D_{0153}$ we use the commutation relations to rewrite it as
$$
D_{0153} = - i \int_{0}^{t} e^{iH_{\phi_t}s} W^*(\alpha^2\phi_t) W(\alpha^2\phi_s) \left(H_{\phi_t}+M\right)^{-1} g_s \tilde\psi_s\otimes\Omega \,ds
$$
with $g_s$ from \eqref{eq:gs}. Therefore, Proposition \ref{THM:wellposedness} yields
$$
\| D_{0153} \|_{\mathcal L^2\otimes\mathcal F} \lesssim \alpha^{-2} t \,.
$$
To summarize, we have shown that
\begin{equation}
\label{eq:d01}
\left\| D_{01} \right\|_{\mathcal L^2\otimes\mathcal F} \lesssim \alpha^{-1} \left(1+t\right).
\end{equation}

It remains to give the

\begin{proof}[Proof of Lemma \ref{singular}]
For any $\gamma\in \mathcal{L}^2(\mathbb{R}^3)\otimes \mathcal{F}$ and $(\Phi_k)_{k\in\R^3} \subset\mathcal F$ we use \eqref{eq:standard} to find
\begin{align*}
& \left\langle \gamma,\  \left( -\Delta+1\right)^{-1/2} \int_{\R^3} e^{ik\cdot x} u \otimes \Phi_k\, \frac{dk}{|k|} \right\rangle_{\mathcal{L}^2\otimes \mathcal{F}} \\
& \qquad = \left\langle \nabla \left(-\Delta+1\right)^{-1/2} \gamma, \int_{\mathbb{R}^3} \frac{ik e^{ik\cdot x}}{|k|(1+|k|^2)} u\otimes \Phi_k \,dk \right\rangle_{\mathcal{L}^2\otimes \mathcal{F}}\\
&\qquad\qquad+\left\langle  \left(-\Delta+1\right)^{-1/2} \gamma, \int_{\mathbb{R}^3} \frac{ik e^{ik\cdot x}}{|k|(1+|k|^2)} (\nabla u)\otimes \Phi_k \,dk \right\rangle_{\mathcal{L}^2\otimes \mathcal{F}}\\
&\qquad\qquad+\left\langle  \left(-\Delta+1\right)^{-1/2} \gamma, \int_{\mathbb{R}^3} \frac{e^{ik\cdot x}}{|k|(1+|k|^2)} u\otimes \Phi_k\,dk \right\rangle_{\mathcal{L}^2\otimes \mathcal{F}} \,.
\end{align*}
Clearly, $\|\nabla(-\Delta+1)^{-1/2}\gamma\|_{\mathcal L^2\otimes\mathcal F}\leq \|\gamma\|_{\mathcal L^2\otimes\mathcal F}$ and $\|(-\Delta+1)^{-1/2}\gamma\|_{\mathcal L^2\otimes\mathcal F}\leq \|\gamma\|_{\mathcal L^2\otimes\mathcal F}$, so
\begin{align*}
\left\| \left( -\Delta+1\right)^{-1/2} \int_{\R^3} e^{ik\cdot x} u\otimes \Phi_k \,\frac{dk}{|k|} \right\|_{\mathcal L^2\otimes\mathcal F} \lesssim
\|u\|_{\mathcal H^1} & \sup_{x\in\R^d} \left( 
\left\| \int_{\mathbb{R}^3} \frac{ik e^{ik\cdot x}}{|k|(1+|k|^2)} \Phi_k \,dk \right\|_{\mathcal{F}} \right. \\
& \qquad\left. + \left\|\int_{\mathbb{R}^3} \frac{e^{ik\cdot x}}{|k|(1+|k|^2)} \Phi_k\,dk \right\|_{\mathcal{F}} \right).
\end{align*}
If $\Phi_k=b_k^*\Omega$, we use the fact that $\frac{1}{|k|(1+|k|^2)}$, $\frac{k}{|k|(1+|k|^2)}\in \mathcal{L}^2(\mathbb{R}^3)$ to conclude that, uniformly in $x\in\R^3$,
\begin{align*}
\left\| \int_{\mathbb{R}^3} \frac{ik e^{ik\cdot x}}{|k|(1+|k|^2)} b_k^* \Omega \,dk \right\|_{\mathcal{F}}\lesssim\alpha^{-1} \,,\qquad
\left\|\int_{\mathbb{R}^3} \frac{e^{ik\cdot x}}{|k|(1+|k|^2)} b_k^* \Omega\,dk \right\|_{\mathcal{F}}\lesssim \alpha^{-1} \,.
\end{align*}
This proves the first bound in the lemma. If $\Phi_k=b^*(f)b_k^*\Omega$, one can similarly show that
\begin{align*}
\left\| \int_{\mathbb{R}^3} \frac{ik e^{ik\cdot x}}{|k|(1+|k|^2)} b^*(f)b_k^* \Omega \,dk \right\|_{\mathcal{F}}\lesssim \frac{\|f\|_2}{\alpha^2} \,,\quad
\left\|\int_{\mathbb{R}^3} \frac{e^{ik\cdot x}}{|k|(1+|k|^2)} b^*(f) b_k^* \Omega\,dk \right\|_{\mathcal{F}}\lesssim \frac{\|f\|_2}{\alpha^2} \,.
\end{align*}
This proves the second bound in the lemma.
\end{proof}


\section{Bound on $D_1$}\label{sec:D15}

\subsection*{Bound on $D_{111}$}

We recall equation \eqref{eq:d111def} for $D_{111}$. In this equation, we commute $e^{ik\cdot x}$ with $e^{-iH_{\phi_t}s}$. Thus, if we introduce the operator
\begin{align}\label{eq:hphitk}
H_{\phi}(k):=e^{ik\cdot x} H_{\phi}e^{-ik\cdot x}=\left(i\nabla_x+k\right)^2+V_{\phi}+\int_{\mathbb{R}^3} |\phi(k)|^2 \,dk \,,
\end{align}
we obtain
\begin{align*}
D_{111}&=\int_0^{t} \int_{0}^{t-s} \int_{\R^3} \int_{\R^3} e^{i\tilde{H}_{\phi_t}(s+s_1)} e^{-iH_{\phi_t}(k)s_1} e^{i(k+k')\cdot x}
 W^*(\alpha^2\phi_t)W(\alpha^2\phi_s) b_k^* b_{k'}^* \\
 &\qquad\qquad\qquad\qquad \times \tilde\psi_s\otimes\Omega \,\frac{dk'}{|k'|}\,\frac{dk}{|k|}\,ds_1\,ds \,.
\end{align*}
Controlling $D_{111}$ is harder than controlling $D_{01}$ because there are two slowly decaying terms $|k|^{-1}$ and $|k'|^{-1}$. The beginning of the proof, however, is similar, namely, for a large constant $M>0$ to be specified, independent of $t$ and $\alpha$, we integrate by parts in $s$ using
$$
e^{i\tilde{H}_{\phi_t}s}=-i \left(\tilde{H}_{\phi_t}+M\right)^{-1} e^{-iMs} \left[\partial_{s}e^{i\left(\tilde{H}_{\phi_t}+M\right)s} \right] \,.
$$
In this way we obtain
\begin{align*}
D_{111}=& - i \int_{0}^{t} \int_{\R^3}\int_{\R^3} e^{i\tilde{H}_{\phi_t}t} \left(\tilde{H}_{\phi_t}+M\right)^{-1} e^{-iH_{\phi_t}(k)s_1} e^{i(k'+k)\cdot x} \\
& \quad \times W^*(\alpha^2\phi_t)W(\alpha^2\phi_{t-s_1}) b_k^* b_{k'}^* \tilde\psi_{t-s_1} \otimes \Omega \,\frac{dk'}{|k'|}\,\frac{dk}{|k|}\,ds_1 \\
& + i \int_{0}^{t} \int_{\R^3}\int_{\R^3} e^{i\tilde{H}_{\phi_t}s_1} \left(\tilde{H}_{\phi_t}+M\right)^{-1} e^{-iH_{\phi_t}(k)s_1} e^{i(k'+k)\cdot x} \\
& \quad \times W^*(\alpha^2\phi_t)W(\alpha^2\phi_0) b_k^* b_{k'}^* \tilde\psi_{0} \otimes \Omega \,\frac{dk'}{|k'|}\,\frac{dk}{|k|}\,ds_1 \\
 &+ M\int_0^{t} \int_{0}^{t-s} \int_{\R^3} \int_{\R^3} e^{i\tilde{H}_{\phi_t}(s+s_1)} \left(\tilde{H}_{\phi_t}+M\right)^{-1} e^{-iH_{\phi_t}(k)s_1} e^{i(k'+k)\cdot x} \nonumber\\
 & \quad \times W^*(\alpha^2\phi_t)W(\alpha^2\phi_s) b_k^* b_{k'}^* \tilde\psi_s\otimes \Omega \,\frac{dk'}{|k'|}\,\frac{dk}{|k|}\,ds_1\,ds\nonumber\\
& +i\int_0^{t} \int_{0}^{t-s} \int_{\R^3} \int_{\R^3} e^{i\tilde{H}_{\phi_t}(s+s_1)} \left(\tilde{H}_{\phi_t}+M\right)^{-1}  e^{-iH_{\phi_t}(k)s_1} e^{i(k'+k)\cdot x} \nonumber\\
 & \quad \times W^*(\alpha^2\phi_t)W(\alpha^2\phi_s) b_k^* b_{k'}^* [\partial_s\tilde\psi_s]\otimes \Omega \,\frac{dk'}{|k'|}\,\frac{dk}{|k|}\,ds_1\,ds \nonumber  \\
&+ i\int_0^{t} \int_{0}^{t-s} \int_{\R^3} \int_{\R^3} e^{i\tilde{H}_{\phi_t}(s+s_1)} \left(\tilde{H}_{\phi_t}+M\right)^{-1}  e^{-iH_{\phi_t}(k)s_1} e^{i(k'+k)\cdot x} \nonumber\\
& \quad \times W^*(\alpha^2\phi_t)[\partial_{s}W(\alpha^2\phi_s)] b_k^* b_{k'}^* \tilde\psi_s\otimes \Omega \,\frac{dk'}{|k'|}\,\frac{dk}{|k|}\,ds_1\,ds \nonumber \,.
\end{align*}
We now use \eqref{eq:tildehphi}, which implies
\begin{align*}
\left(\tilde{H}_{\phi_t}+M\right)^{-1} W^*(\alpha^2\phi_t)W(\alpha^2\phi_s) & = W^*(\alpha^2\phi_t) \left( \tilde H_\alpha^F + M \right)^{-1} W(\alpha^2\phi_s) \\
& = W^*(\alpha^2\phi_t) W(\alpha^2\phi_s) \left( \tilde H_{\phi_s} + M \right)^{-1} \,,
\end{align*}
in order to commute $\left(\tilde{H}_{\phi_t}+M\right)^{-1}$ to the right through $W^*(\alpha^2\phi_t)W(\alpha^2\phi_s)$. Moreover, we use Lemma \ref{LM:derivative} to compute $\partial_s W(\alpha^2\phi_s)$. In this way we obtain
\begin{align*}
D_{111}=& - i \int_{0}^{t} e^{i\tilde{H}_{\phi_t}t} W^*(\alpha^2\phi_t)W(\alpha^2\phi_{s}) Q_1 \,ds \\
& + i \int_{0}^{t} e^{i\tilde{H}_{\phi_t}s_1} W^*(\alpha^2\phi_t)W(\alpha^2\phi_0) Q_2 \,ds_1 \\
&+ M\int_0^{t} \int_{0}^{t-s} e^{i\tilde{H}_{\phi_t}(s+s_1)} W^*(\alpha^2\phi_t)W(\alpha^2\phi_s) Q_3 \,ds_1\,ds\nonumber\\
& +i\int_0^{t} \int_{0}^{t-s} e^{i\tilde{H}_{\phi_t}(s+s_1)} W^*(\alpha^2\phi_t)W(\alpha^2\phi_s) Q_4 \,ds_1\,ds \nonumber  \\
&+ i\int_0^{t} \int_{0}^{t-s} e^{i\tilde{H}_{\phi_t}(s+s_1)} W^*(\alpha^2\phi_t) W(\alpha^2\phi_s) Q_5 \,ds_1\,ds
\end{align*}
with
\begin{align*}
Q_1 & := \left(\tilde{H}_{\phi_s}+M\right)^{-1} \int_{\R^3}\int_{\R^3} e^{-iH_{\phi_t}(k)(t-s)} e^{i(k'+k)\cdot x} b_k^* b_{k'}^* \tilde\psi_{s} \otimes \Omega \,\frac{dk'}{|k'|}\,\frac{dk}{|k|} \,,\\
Q_2 & := \left(\tilde{H}_{\phi_0}+M\right)^{-1} \int_{\R^3}\int_{\R^3} e^{-iH_{\phi_t}(k)s_1} e^{i(k'+k)\cdot x} b_k^* b_{k'}^* \tilde\psi_{0} \otimes \Omega \,\frac{dk'}{|k'|}\,\frac{dk}{|k|} \,,\\
Q_3 & := \left(\tilde{H}_{\phi_s}+M\right)^{-1} \int_{\R^3} \int_{\R^3} e^{-iH_{\phi_t}(k)s_1} e^{i(k'+k)\cdot x} b_k^* b_{k'}^* \tilde\psi_s\otimes \Omega \,\frac{dk'}{|k'|}\,\frac{dk}{|k|} \,,\\
Q_4 & := \left(\tilde{H}_{\phi_s}+M\right)^{-1} \int_{\R^3} \int_{\R^3} e^{-iH_{\phi_t}(k)s_1} e^{i(k'+k)\cdot x} b_k^* b_{k'}^* [\partial_s\tilde\psi_s]\otimes \Omega \,\frac{dk'}{|k'|}\,\frac{dk}{|k|} \,,\\
Q_5 & := \left(\tilde{H}_{\phi_s}+M\right)^{-1} \int_{\R^3} \int_{\R^3}  e^{-iH_{\phi_t}(k)s_1} e^{i(k'+k)\cdot x}
\left( b^*(\alpha^2\partial_s\phi_s) - b(\alpha^2\partial_s\phi_s) \right. \\
& \qquad\qquad\qquad\qquad\qquad\qquad \left. + i \im (\phi_s,\alpha^2\partial_s\phi_s)\right) b_k^* b_{k'}^* \tilde\psi_s\otimes \Omega \,\frac{dk'}{|k'|}\,\frac{dk}{|k|} \,.\qquad\end{align*}
(Here, we suppress the dependence on $t$, $s$ and $s_1$ in the notation of the $Q_j$'s.)

In the remainder of this section we shall show that, uniformly for $0\leq s,s_1\leq t \leq \alpha^2$,
\begin{equation}
\label{eq:qbounds}
\| Q_j \|_{\mathcal L^2\otimes\mathcal F} \lesssim \alpha^{-2} \quad\text{if}\ j=1,2,3,4,5 \,.
\end{equation}
This will imply that
\begin{equation}
\label{eq:d111}
\left\| D_{111} \right\|_{\mathcal L^2\otimes\mathcal F} \lesssim \alpha^{-2} t(1+t) \,.
\end{equation}

Since the operator $(\tilde{H}_{\phi_s}+M)^{-1} \left( -\Delta+\mathcal N+M\right)$ is \emph{not} bounded, bounding the $Q_j$ is rather involved. (Here $\mathcal N$ was introduced in \eqref{eq:number}.) With the notation
$$
Z_\phi :=  V_\phi + \int_{\R^3} |\phi(x)|^2\,dk + \int_{\R^3} \left( e^{-ik\cdot x} b_k + e^{ik\cdot x} b_k^*\right)\frac{dk}{|k|} + b(\phi) + b^*(\phi)
$$
abbreviate \eqref{eq:tildehphialt} as
$$
\tilde H_{\phi} = -\Delta + \mathcal N + Z_\phi \,.
$$
Defining
$$
\tilde Z_\phi := \left( -\Delta+\mathcal N+M\right)^{-1/2} Z_\phi \left(-\Delta+\mathcal N + M\right)^{-1/2}
$$
we have
\begin{align*}
& \left( H_\phi +M \right)^{-1} =\left( -\Delta+\mathcal N+M\right)^{-1/2} \left( 1+ \tilde Z_\phi \right)^{-1} \left( -\Delta+\mathcal N+M\right)^{-1/2} \\
& \quad= \left(-\Delta+\mathcal N+M\right)^{-1} \\
& \quad\qquad - \left( -\Delta+\mathcal N+M\right)^{-1/2} \left( 1+\tilde Z_\phi \right)^{-1} \left( -\Delta+\mathcal N+M\right)^{-1/2} Z_\phi \left( -\Delta+\mathcal N+M\right)^{-1}.
\end{align*}
It is not difficult to see that for every $\epsilon>0$ and $A>0$ there is an $M$ such that
\begin{equation}
\label{eq:zinverse}
\left\| \tilde Z_\phi \right\|_{\mathcal L^2\otimes\mathcal F\mapsto\mathcal L^2\otimes F} \leq \epsilon
\end{equation}
for all $\phi$ with $\|\phi\|_{\mathcal L^2} \leq A$; for details of this argument we refer to \cite{FrankSchlein2013}. Thus, using the bound on $\|\phi_s\|_{\mathcal L^2}$ from Lemma \ref{wellposedenergy}, we can choose $M$ in such a way that
$$
\left\| \tilde Z_{\phi_s} \right\|_{\mathcal L^2\otimes\mathcal F\mapsto\mathcal L^2\otimes F} \leq \frac12
\qquad\text{for all}\ s>0 \,.
$$
Therefore, the operator $1+\tilde Z_{\phi_s}$ in the above formula for $\left(H_{\phi_s}+M\right)^{-1}$ is invertible. We use this formula to decompose
\begin{align}
\label{eq:q1decomp}
Q_1 = & \left( 1 -\left( -\Delta+\mathcal N+M\right)^{-1/2} \left( 1+\tilde Z_{\phi_s} \right)^{-1} \left( -\Delta+\mathcal N+M\right)^{-1/2} \right. \nonumber \\
& \qquad\qquad \left. \times \left( V_{\phi_s} + \int_{\R^3} |\phi_s(x)|^2\,dk + b(\phi_s) + b^*(\phi_s) \right) \right) Q_{10} \nonumber\\
& - \left( -\Delta+\mathcal N+M\right)^{-1/2} \left( 1+\tilde Z_{\phi_s} \right)^{-1} \left( Q_{11} + Q_{12} \right)
\end{align}
with
\begin{align*}
Q_{10} & := \left( -\Delta+\mathcal N+M\right)^{-1} \int_{\R^3} \int_{\R^3} e^{-iH_{\phi_t}(k)(t-s)} e^{i(k'+k)\cdot x} b_k^* b_{k'}^* \tilde\psi_{s} \otimes \Omega \,\frac{dk'}{|k'|}\,\frac{dk}{|k|} \,, \\
Q_{11} & :=  \left( -\Delta+\mathcal N+M\right)^{-1/2} \left( \int_{\R^3} e^{-ik''\cdot x} b_{k''} \,\frac{dk''}{|k''|} \right) \left( -\Delta+\mathcal N+M\right)^{-1} \\
& \qquad\qquad\qquad \times \int_{\R^3}\int_{\R^3} e^{-iH_{\phi_t}(k)(t-s)} e^{i(k'+k)\cdot x} b_k^* b_{k'}^* \tilde\psi_{s} \otimes \Omega \,\frac{dk'}{|k'|}\,\frac{dk}{|k|} \,, \\
Q_{12} & :=  \left( -\Delta+\mathcal N+M\right)^{-1/2} \left( \int_{\R^3} e^{ik''\cdot x} b_{k''}^* \,\frac{dk''}{|k''|} \right) \left( -\Delta+\mathcal N+M\right)^{-1} \\
& \qquad\qquad\qquad \times \int_{\R^3}\int_{\R^3} e^{-iH_{\phi_t}(k)(t-s)} e^{i(k'+k)\cdot x} b_k^* b_{k'}^* \tilde\psi_{s} \otimes \Omega \,\frac{dk'}{|k'|}\,\frac{dk}{|k|} \,.
\end{align*}
Using \eqref{eq:zinverse}, the fact that $(-\Delta+\mathcal N+M)^{-1/2} (b(\phi_s)+b^*(\phi_s))$ is bounded uniformly in $s$ as well as the estimates $\|V_{\phi_s}\|_\infty \lesssim 1$ (from \eqref{eq:aprioriv} and Proposition \ref{THM:wellposedness}), $\|\phi_s\|_2 \lesssim 1$ (from Lemma \ref{wellposedenergy}), we conclude from \eqref{eq:q1decomp} that
$$
\|Q_1\|_{\mathcal L^2\otimes\mathcal F} \lesssim \|Q_{10}\|_{\mathcal L^2\otimes\mathcal F} + \|Q_{11}\|_{\mathcal L^2\otimes\mathcal F} + \|Q_{12}\|_{\mathcal L^2\otimes\mathcal F} \,.
$$
We now bound the three terms on the right side separately.


\subsection*{Bound on $Q_{10}$}

In order to control $Q_{10}$ we prove an analogue of Lemma \ref{singular} for the case of two singularities.

\begin{lemma}\label{singular2}
For $u\in\mathcal H^2(\R^3)$, $f\in \mathcal L^2(\R^3)$ and $s\in\R$,
$$
\left\| (-\Delta+1)^{-1} \int_{\R^3} \int_{\R^3} e^{-iH_{\phi_t}(k)s} e^{i(k+k')\cdot x} b_k^* b_{k'}^* u\otimes\Omega \frac{dk'\,dk}{|k'|\,|k|} \right\|_{\mathcal L^2\otimes\mathcal F} \lesssim \alpha^{-2} \|u\|_{\mathcal H^2} \,.
$$
\end{lemma}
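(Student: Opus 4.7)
The proof follows the strategy of Lemma~\ref{singular}, now adapted to handle two singular factors $|k|^{-1}|k'|^{-1}$ by performing the oscillatory integration by parts (IBP) twice---once in $e^{ik\cdot x}$ and once in $e^{ik'\cdot x}$. The operator $(-\Delta+1)^{-1}$ in front provides two "derivatives of room" when moved to a test vector, and together with the $\mathcal H^2$-regularity of $u$ these will absorb all spatial derivatives produced by the two IBPs.

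By duality it suffices to estimate $|\langle\Gamma,V\rangle|$ for $\Gamma\in\mathcal L^2(\R^3)\otimes\mathcal F$, where $V$ denotes the vector appearing in the lemma. Using the identity $e^{-iH_{\phi_t}(k)s}=e^{ik\cdot x}e^{-iH_{\phi_t}s}e^{-ik\cdot x}$, I rewrite
$$\langle\Gamma,V\rangle = \int\int\Bigl\langle (-\Delta+1)^{-1}\Gamma,\; e^{ik\cdot x}\,e^{-iH_{\phi_t}s}(e^{ik'\cdot x}u)\otimes b_k^* b_{k'}^*\Omega\Bigr\rangle\,\frac{dk\,dk'}{|k||k'|}.$$
Applying the IBP identity $e^{iK\cdot x}=\frac{1-iK\cdot\nabla_x}{1+|K|^2}e^{iK\cdot x}$ (interpreted via integration by parts in $x$, exactly as in the proof of Lemma~\ref{singular}) first with $K=k'$ and then with $K=k$ produces a finite sum of terms, each carrying a combined factor $(1+|k|^2)^{-1}(1+|k'|^2)^{-1}$ and in which the spatial derivatives generated by the IBPs act at most twice in total, landing either on the test factor $(-\Delta+1)^{-1}\Gamma$ (absorbed by its $\mathcal H^2$-regularity) or on $u$ (absorbed by $\|u\|_{\mathcal H^2}$).

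The resulting kernel is bounded by $\frac{(1+|k|)(1+|k'|)}{(1+|k|^2)|k|(1+|k'|^2)|k'|}$, which lies in $\mathcal L^2(\R^3\times\R^3;dk\,dk')$ because $\frac{1}{|k|(1+|k|)}\in\mathcal L^2(\R^3)$. Cauchy--Schwarz in $(k,k')$ followed by Cauchy--Schwarz in $x$, together with the prefactor $\alpha^{-2}$ coming from the two creation operators in the Fock-space factor $b_k^* b_{k'}^*\Omega$, yields $|\langle\Gamma,V\rangle|\lesssim\alpha^{-2}\|\Gamma\|\,\|u\|_{\mathcal H^2}$, and taking the supremum over $\Gamma$ completes the proof.

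The main technical obstacle is that $\nabla_x$ does not commute with $e^{-iH_{\phi_t}s}$, which sits between the two oscillations, so the derivatives produced by the IBPs cannot naively move across the propagator. I will handle this using the identity $(1-\Delta)=(1+H_{\phi_t})-V_{\phi_t}-\|\phi_t\|^2$ together with the uniform bounds on $V_{\phi_t}$, $\nabla V_{\phi_t}$, and $\Delta V_{\phi_t}$ inherited from Lemma~\ref{wellposedenergy} and Proposition~\ref{THM:wellposedness} (via the fact that $\phi_t\in\mathcal L^2_{(3)}\hookrightarrow\mathcal L^1$), supplemented by a Duhamel expansion of the relevant commutators to control any $s$-dependence. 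This will show that all derivatives arising from the IBP can be placed on whichever side of $e^{-iH_{\phi_t}s}$ is most convenient, reducing the bound to the kernel estimate described above.
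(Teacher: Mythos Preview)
Your approach differs from the paper's in a structural way that creates a real difficulty. You perform the oscillatory integration by parts \emph{separately} in $e^{ik\cdot x}$ and $e^{ik'\cdot x}$, aiming for a product kernel $(1+|k|^2)^{-1}(1+|k'|^2)^{-1}$. The paper instead keeps the expression in the form $e^{-iH_{\phi_t}(k)s}\,e^{i(k+k')\cdot x}u$ and integrates by parts twice with respect to the \emph{combined} phase $K=k+k'$. The point of the combined phase is that the propagator sits entirely on one side of $e^{iK\cdot x}$: one may commute $e^{-iH_{\phi_t}(k)s}e^{iK\cdot x}=e^{iK\cdot x}e^{-iH_{\phi_t}(-k')s}$ (or move it to the $\gamma$-side), so the integration by parts never has to push a derivative through the propagator; unitarity alone handles it. What remains is the finiteness of $\displaystyle\int\!\!\int \frac{|k+k'|^{m}}{|k|^2|k'|^2(1+|k+k'|^2)^4}\,dk\,dk'$ for $m\le 4$, which is the content of the paper's computation.

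In your setup the propagator $e^{-iH_{\phi_t}s}$ is \emph{between} the two oscillations, and at least one of your two IBPs must move a derivative across it. The Duhamel part of your proposed fix does not give what the lemma requires: the commutator $[\nabla,e^{-iH_{\phi_t}s}]=-i\int_0^s e^{-iH_{\phi_t}(s-r)}(\nabla V_{\phi_t})e^{-iH_{\phi_t}r}\,dr$ has norm of order $|s|\,\|\nabla V_{\phi_t}\|_\infty$, while the bound must be uniform in $s\in\R$. The other ingredient you mention---the equivalence $(-\Delta+M)\sim(H_{\phi_t}+M)$---does give $s$-uniform $H^1$ and $H^2$ boundedness of $e^{-iH_{\phi_t}s}$, but once you move the propagator across $e^{ik\cdot x}$ (or $e^{ik'\cdot x}$) to reach the other oscillation, the conjugation $e^{\mp ik\cdot x}(-\Delta+1)e^{\pm ik\cdot x}$ costs powers of $|k|$, which destroys your claimed product kernel. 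Carrying your scheme through would in effect force you to regroup the two phases into $e^{i(k+k')\cdot x}$, i.e.\ to reproduce the paper's argument; as written, the proposal has a genuine gap.
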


Before proving this lemma we show how to use it to bound $Q_{10}$. Note that, since $Q_{10}$ involves only $b_k^* b_{k'}^*\Omega$, the operator $(-\Delta+\mathcal N+M)^{-1}$ in its definition can be replaced by $(-\Delta+2\alpha^{-2} +M)^{-1}$. This observation, together with Lemma \ref{singular2} and the uniform boundedness of $\tilde\psi_s$ in $\mathcal H^2$ for $s\in[0,\alpha^2]$ (see Proposition \ref{THM:wellposedness}), proves that
\begin{equation}
\label{eq:q10}
\|Q_{10}\|_{\mathcal L^2\otimes\mathcal F} \lesssim \alpha^{-2} \,.
\end{equation}

\begin{proof}[Proof of Lemma \ref{singular2}]
We shall show that for any $\gamma\in\mathcal L^2(\R^3)\otimes\mathcal F$
$$
\left| \left\langle \gamma, (-\Delta+1)^{-1} \int_{\R^3} \int_{\R^3} e^{-iH_{\phi_t}(k)s} e^{i(k+k')\cdot x} b_k^* b_{k'}^* u\otimes\Omega \,\frac{dk'\,dk}{|k'|\,|k|} \right\rangle \right| \lesssim \alpha^{-2} \|\gamma\|_{\mathcal L^2\otimes\mathcal F} \|u\|_{\mathcal H^2} \,. 
$$
We integrate by parts twice in $x$ and use \eqref{eq:standard} with $k$ replaced by $k+k'$. A typical term that is obtained in this way in the inner product on the left side is
\begin{align*}
\left\langle   e^{iH_{\phi_t}(k)s} \partial_{x_i} \partial_{x_j} \left( -\Delta +1 \right)^{-1} \gamma,\ \int_{\R^3} \int_{\mathbb{R}^3} e^{i(k+k')\cdot x} b_k^* b_{k'}^* u\otimes \Omega \,\frac{(k_i+k_i')(k_j+k'_j)\,dk'\,dk}{|k|\,|k'|\,(1+|k+k'|^2)^2} \right\rangle.
\end{align*}
Since $\partial_{x_i}\partial_{x_j}(-\Delta+1)^{-1}$ is bounded and $e^{iH_{\phi_t}(k)s}$ is unitary, the vector on the left side of the inner product is bounded in norm by $\|\gamma\|_{\mathcal L^2\otimes\mathcal F}$. We now show that the vector on the right side of the inner product is bounded as well. We compute
\begin{align*}
& \left\| \int_{\R^3} \int_{\mathbb{R}^3} e^{i(k+k')\cdot x} b_k^* b_{k'}^* u\otimes \Omega \,\frac{(k_i+k_i')(k_j+k'_j)}{|k|\,|k'|\,(1+|k+k'|^2)^2} \,dk'\,dk \right\|_{\mathcal L^2\otimes\mathcal F}^2 \\
&  \qquad = 2\alpha^{-4} \|u\|_2^2 \int_{\R^3}\int_{\R^3} \frac{(k_i+k_i')^2(k_j+k'_j)^2}{|k|^2\,|k'|^2\,(1+|k+k'|^2)^4} \,dk'\,dk \,.
\end{align*}
The desired bound now follows from the fact that the double integral on the right side is finite. Other terms that arise in the integration by parts are controlled similarly and we omit the details. This proves the lemma.
\end{proof}


\subsection*{Bound on $Q_{11}$}

By considering the number of involved field particles we can replace $\mathcal N$ in the definition of $Q_{11}$ by numbers and obtain
\begin{align*}
Q_{11} & =  \left( -\Delta+\alpha^{-2} +M\right)^{-1/2} \left( \int_{\R^3} e^{-ik''\cdot x} b_{k''} \,\frac{dk''}{|k''|} \right) \left( -\Delta+2\alpha^{-2} +M\right)^{-1} \\
& \qquad\qquad\qquad \times \int_{\R^3}\int_{\R^3} e^{-iH_{\phi_t}(k)(t-s)} e^{i(k'+k)\cdot x} b_k^* b_{k'}^* \tilde\psi_{s} \otimes \Omega \,\frac{dk'}{|k'|}\,\frac{dk}{|k|} \,.
\end{align*}
Next, by commuting $b_{k''}$ to the right,
\begin{align*}
Q_{11} & =  \alpha^{-2} \left( -\Delta+\alpha^{-2} +M\right)^{-1/2} \int_{\R^3} \left( (i\nabla -k')^2+2\alpha^{-2} +M\right)^{-1} \\
& \qquad\qquad\qquad \times \int_{\R^3} e^{-ik'\cdot x} e^{-iH_{\phi_t}(k)(t-s)} e^{i(k'+k)\cdot x} b_k^* \tilde\psi_{s} \otimes \Omega \,\frac{dk'}{|k'|^2}\,\frac{dk}{|k|} \\
& \quad + \alpha^{-2} \left( -\Delta+\alpha^{-2} +M\right)^{-1/2} \int_{\R^3} \left( (i\nabla -k)^2+2\alpha^{-2} +M\right)^{-1} \\
& \qquad\qquad\qquad \times \int_{\R^3} e^{-ik\cdot x} e^{-iH_{\phi_t}(k)(t-s)} e^{i(k'+k)\cdot x} b_{k'}^* \tilde\psi_{s} \otimes \Omega \,\frac{dk'}{|k'|}\,\frac{dk}{|k|^2} \,.
\end{align*}
It remains to compute the norm of this expression. Since this is considerably easier than for $Q_{12}$ we omit the details and only state the final result,
\begin{equation}
\label{eq:q01}
\|Q_{11}\|_{\mathcal L^2\otimes\mathcal F} \lesssim \alpha^{-3} \,.
\end{equation}


\subsection*{Bound on $Q_{12}$}

In the same way as for $Q_{11}$, we can replace $\mathcal N$ by a number, so that
\begin{align*}
Q_{12} & =  \left( -\Delta+3\alpha^{-2} +M\right)^{-1/2} \int_{\R^3} e^{ik''\cdot x} b_{k''}^* \left( -\Delta+2\alpha^{-2} +M\right)^{-1} \\
& \qquad\qquad\qquad \times \int_{\R^3} \int_{\R^3} e^{-iH_{\phi_t}(k)(t-s)} e^{i(k'+k)\cdot x} b_{k'}^* b_k^* \tilde\psi_{s} \otimes \Omega \,\frac{dk'}{|k'|}\,\frac{dk}{|k|} \,.
\end{align*}
Next, we commute $e^{ik''\cdot x}$ and $e^{i(k'+k)\cdot x}$ to the right and obtain
\begin{align*}
Q_{12} & =  \int_{\R^3}\int_{\R^3}\int_{\R^3} b_k^* b_{k'}^* b_{k''}^* e^{i(k+k'+k'')\cdot x} \left( (i\nabla-k-k'-k'')^2+3\alpha^{-2} +M\right)^{-1/2} \\
& \qquad\qquad\quad \times \left( (i\nabla-k-k')^2+2\alpha^{-2} +M\right)^{-1} e^{-iH_{\phi_t}(-k')(t-s)} \tilde\psi_{s} \otimes \Omega \,\frac{dk''}{|k''|}\,\frac{dk'}{|k'|}\,\frac{dk}{|k|} \,.
\end{align*}
We now compute the norm of this expression. For the part of the norm over $\mathcal F$, we use the fact that
\begin{align*}
& \alpha^6 \langle\Omega,b_{k_1}b_{k_2}b_{k_3}b_{k_4}^*b_{k_5}^*b_{k_6}^*\Omega\rangle \\
& \qquad = \delta(k_1-k_4)\delta(k_2-k_5)\delta(k_3-k_6) + \delta(k_1-k_4)\delta(k_2-k_6)\delta(k_3-k_5) \\
& \qquad\quad + \delta(k_1-k_5)\delta(k_2-k_4)\delta(k_3-k_6) + \delta(k_1-k_5)\delta(k_2-k_6)\delta(k_3-k_4) \\
& \qquad\quad + \delta(k_1-k_6)\delta(k_2-k_4)\delta(k_3-k_5) + \delta(k_1-k_6)\delta(k_2-k_4)\delta(k_3-k_6)
\end{align*}
to write
\begin{equation}
\label{eq:q12expansion}
\|Q_{12}\|_{\mathcal L^2\otimes\mathcal F}^2 = \alpha^{-6} \left( X_1 + \ldots + X_6 \right) \,,
\end{equation}
where, for instance,
\begin{align*}
X_1 := & \int_{\R^3}\int_{\R^3}\int_{\R^3} \left\langle e^{-iH_{\phi_t}(-k')(t-s)} \tilde\psi_{s}, \left( (i\nabla-k-k'-k'')^2+3\alpha^{-2} +M\right)^{-1} \right. \\
& \qquad\qquad\left. \times \left( (i\nabla-k-k')^2+2\alpha^{-2} +M\right)^{-2} e^{-iH_{\phi_t}(-k')(t-s)} \tilde\psi_{s} \right\rangle \,\frac{dk''}{|k''|^2}\,\frac{dk'}{|k'|^2}\,\frac{dk}{|k|^2}
\end{align*}
and
\begin{align*}
X_2 := & \int_{\R^3}\int_{\R^3}\int_{\R^3} \left\langle e^{-iH_{\phi_t}(-k'')(t-s)} \tilde\psi_{s}, \left( (i\nabla-k-k'-k'')^2+3\alpha^{-2} +M\right)^{-1} \right. \\
& \qquad\qquad\times \left( (i\nabla-k-k'')^2+2\alpha^{-2} +M\right)^{-1} \left( (i\nabla-k-k')^2+2\alpha^{-2} +M\right)^{-1} \\
& \qquad\qquad\left. \times \ e^{-iH_{\phi_t}(-k')(t-s)} \tilde\psi_{s} \right\rangle \,\frac{dk''}{|k''|^2}\,\frac{dk'}{|k'|^2}\,\frac{dk}{|k|^2}.
\end{align*}
By the Schwarz inequality we have $|X_2|\leq X_1$ and, similarly,
\begin{equation}
\label{eq:x16}
|X_j|\leq X_1 
\qquad\text{for all}\ j=1,\ldots,6 \,.
\end{equation}
Thus it suffices to control $X_1$.

We first perform the $k''$ integral and then the $k$ integral. We make use of the following bounds.

\begin{lemma}\label{intbounds}
One has
\begin{equation}
\label{eq:hkl1}
\int_{\mathbb{R}^3} \left( (i\nabla-k'')^2+1\right)^{-1} \,\frac{dk''}{|k''|^2} \lesssim 1 \,,
\end{equation}
\begin{equation}
\label{eq:kl1}
\int_{\mathbb{R}^3} \left( (i\nabla_x-k)^2+1\right)^{-2} \frac{dk}{|k|^2} \lesssim \left(-\Delta+1\right)^{-1} \,.
\end{equation}
\end{lemma}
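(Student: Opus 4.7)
Both bounds are inequalities between functions of the self-adjoint operator $-i\nabla_x$, so by the spectral theorem they are equivalent to pointwise estimates on the corresponding symbols. That is, it suffices to prove that, uniformly in $p\in\R^3$,
\begin{equation}
\label{eq:planA}
\int_{\R^3} \frac{dk''}{|k''|^2\,((p-k'')^2+1)} \lesssim 1
\end{equation}
and
\begin{equation}
\label{eq:planB}
\int_{\R^3} \frac{dk}{|k|^2\,((p-k)^2+1)^2} \lesssim \frac{1}{1+p^2} \,.
\end{equation}

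For \eqref{eq:planA} the plan is to view the integral as the convolution of $|k|^{-2}$ with $(k^2+1)^{-1}$ evaluated at $p$, and to note that both functions lie in $L^1+L^\infty$ on $\R^3$, so the convolution is finite and continuous. To get uniformity in $p$, I would split the region of integration into the ball $\{|k''-p|\le 1\}$ (where $((p-k'')^2+1)^{-1}\le 1$ and $|k''|^{-2}$ is locally integrable and bounded by $\min(1,|p|^{-2}/2)$ up to a term from a small ball around the origin only if $|p|\le 2$), the ball $\{|k''|\le 1\}$, and the complement of both, using $|k''|^{-2}\lesssim (k''^2+1)^{-1}$ on the complement. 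These three regions each contribute a bounded quantity uniformly in $p$, which gives \eqref{eq:planA}.

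The inequality \eqref{eq:planB} is the harder one, since the right-hand side requires decay in $|p|$. For $|p|\le 1$ we use that the integrand is dominated by $|k|^{-2}(k^2+1)^{-2}\in L^1$, so the integral is bounded, which matches $1/(1+p^2)$ up to a constant. For $|p|\ge 1$, I would split $\R^3_k$ into three regions: $\{|k|\le |p|/2\}$, where $|p-k|\ge |p|/2$ and the integral is controlled by $|p|^{-2}\int (k^2+1)^{-2}\,dk\lesssim |p|^{-2}$; $\{|k-p|\le |p|/2\}$, where $|k|\ge |p|/2$ and hence $(k^2+1)^{-2}\lesssim |p|^{-4}$, so the integral reduces to $|p|^{-4}\int_{|y|\le |p|/2}|y|^{-2}\,dy\lesssim |p|^{-3}$; and the remaining region where $|k|\ge |p|/2$ and $|k-p|\ge |p|/2$, on which one factor of $(k^2+1)^{-1}$ is bounded by $|p|^{-2}$ and the remaining integral $\int|k|^{-2}((p-k)^2+1)^{-1}\,dk$ is uniformly bounded by the already-proven \eqref{eq:planA}. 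Combining yields $\lesssim |p|^{-2}\lesssim (1+p^2)^{-1}$.

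The main obstacle is the book-keeping in the second bound, since one must extract the full $(1+p^2)^{-1}$ decay and not merely boundedness; the key observation that makes this work is that the $\{|k|\ge |p|/2,\,|k-p|\ge |p|/2\}$ region can be handled by reusing \eqref{eq:planA} after giving up one of the two resolvent factors. Once both pointwise symbol estimates are in place, \eqref{eq:hkl1} follows immediately from \eqref{eq:planA} in the form $\int((-i\nabla-k'')^2+1)^{-1}|k''|^{-2}dk''\lesssim 1$, and \eqref{eq:kl1} follows from \eqref{eq:planB} together with functional calculus applied to $-i\nabla_x$.
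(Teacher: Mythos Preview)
Your approach is essentially the same as the paper's: pass to the symbol via Fourier transform and then split the $k$-integration into regions determined by the relative sizes of $|k|$, $|p-k|$, and $|p|$. The paper uses a two-region split (according to whether $4|k|$ exceeds $|p|+1$) rather than your three-region split, and does not need to invoke \eqref{eq:planA} in the proof of \eqref{eq:planB}, but the ideas are the same.

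One remark on the write-up: in your treatment of the case $|p|\ge 1$ you tacitly switch from the integrand $|k|^{-2}((p-k)^2+1)^{-2}$ of \eqref{eq:planB} to $|p-k|^{-2}(k^2+1)^{-2}$ (this is what makes your bounds in the regions $\{|k|\le|p|/2\}$ and $\{|k-p|\le|p|/2\}$ literally correct). Since the integral is invariant under $k\mapsto p-k$ this is harmless, but you should say so explicitly to avoid confusion.
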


Before proving the lemma, let us see that they provide the desired bounds on $X_1$. First, conjugating \eqref{eq:hkl1} with $e^{i(k+k')\cdot x}$ and assuming that $M+3\alpha^2\geq 1$, we obtain, uniformly in $k,k'\in\R^3$,
\begin{equation}
\label{eq:hkl}
\int_{\mathbb{R}^3} \left( (i\nabla-k-k'-k'')^2+3\alpha^{-2}+M\right)^{-1} \,\frac{dk''}{|k''|^2} \lesssim 1 \,.
\end{equation}
Similarly, conjugating \eqref{eq:kl1} with $e^{ik'\cdot x}$, we obtain, uniformly in $k'\in\R^3$,
\begin{equation}
\label{eq:kl}
\int_{\mathbb{R}^3} \left( (i\nabla_x-k-k')^2+2\alpha^{-2}+M\right)^{-2} \frac{dk}{|k|^2} \lesssim \left((i\nabla-k')^2+1\right)^{-1} \,.
\end{equation}
Inserting \eqref{eq:hkl} and \eqref{eq:kl} into the definition of $X_1$, we obtain
$$
X_1 \lesssim \int_{\R^3} \left\langle e^{-iH_{\phi_t}(-k')(t-s)} \tilde\psi_{s},
\left( (i\nabla-k')^2+1\right)^{-1} e^{-iH_{\phi_t}(-k')(t-s)} \tilde\psi_{s} \right\rangle \,\frac{dk'}{|k'|^2}
$$
Since $(-\Delta +1)^{-1/2} (H_{\phi_t}+M)^{1/2}$ is bounded, uniformly in $t$ (by Corollary \ref{formopdom} and Lemma \ref{wellposedenergy}), we also know that $((i\nabla -k')^2+1)^{-1/2} (H_{\phi_t}(-k')+M)^{1/2}$ is bounded, uniformly in $t$. Thus,
\begin{align*}
X_1 & \lesssim \int_{\R^3} \left\langle e^{-iH_{\phi_t}(-k')(t-s)} \tilde\psi_{s},
\left( H_{\phi_t}(-k') +M\right)^{-1} e^{-iH_{\phi_t}(-k')(t-s)} \tilde\psi_{s} \right\rangle \,\frac{dk'}{|k'|^2} \\
& = \int_{\R^3} \left\langle \tilde\psi_{s},
\left( H_{\phi_t}(-k') +M\right)^{-1} \tilde\psi_{s} \right\rangle \,\frac{dk'}{|k'|^2} \\
& \lesssim \int_{\R^3} \left\langle \tilde\psi_{s},
\left( (i\nabla-k')^2 +M\right)^{-1} \tilde\psi_{s} \right\rangle \,\frac{dk'}{|k'|^2} \,.
\end{align*}
Applying \eqref{eq:kl1} again, we see that the latter expression is bounded by a constant times $\|\tilde\psi_s\|_{\mathcal L^2}^2=1$ by Lemma \ref{wellposedenergy}. This, together with \eqref{eq:q12expansion} and \eqref{eq:x16}, implies that
\begin{equation}
\label{eq:q12}
\|Q_{12}\|_{\mathcal L^2\otimes\mathcal F} \lesssim \alpha^{-3} \,.
\end{equation}

\begin{proof}[Proof of Lemma \ref{intbounds}]
We only prove \eqref{eq:kl1}, since the proof of \eqref{eq:hkl1} is similar and simpler. By applying a Fourier transform we see that we need to prove
$$
\int_{\mathbb{R}^3} \left( (p+k)^2+1\right)^{-2} \frac{dk}{|k|^2} \lesssim \left(p^2+1\right)^{-1} 
\qquad\text{for}\ p\in\R^3 \,.
$$
We split the integral into the regions $4|k|>|p|+1$ and $4|k|\leq |p|+1$. In the first region we bound $|k|^{-2} \leq 16/(|p|+1)^2$ and note that
$$
\int_{\{4|k|>|p|+1\}} \left( (p+k)^2+1\right)^{-2} \,dk \leq
\int_{\R^3} \left( (p+k)^2+1\right)^{-2} \,dk = \int_{\R^3} \left( k^2+1\right)^{-2} \,dk <\infty \,.
$$
In the second region we distinguish the cases $|p|<1$ and $|p|\geq 1$. In the first case we bound
$$
\int_{\{4|k|\leq |p|+1\}} \left( (p+k)^2+1\right)^{-2} \frac{dk}{|k|^2} \leq
\int_{\{4|k|\leq |p|+1\}} \frac{dk}{|k|^2} \leq \int_{\{|k|\leq 1/2\}} \frac{dk}{|k|^2} <\infty \,.
$$
For $|p|\geq 1$ we note that in the second region we have $2|k|\leq |p|$ and therefore $(p+k)^2\geq p^2/4 \geq k^2$. Thus,
$$
\left( (p+k)^2+1\right)^{-2} \leq (p^2/4+1)^{-1} (k^2+1)^{-1} \,.
$$
Since $(k^2+1)^{-1} |k|^{-2}$ is integrable, we obtain again a bound of the required form.
\end{proof}


\subsection*{Bounds on $Q_2,\ldots, Q_5$}

The terms $Q_2,\ldots,Q_4$ are controlled in exactly the same way as $Q_1$. (For $Q_4$ we use the fact that $\|\partial_s\tilde\psi_s\|_{\mathcal H^2} \lesssim 1$ for $t\leq\alpha^2$ by Proposition \ref{THM:wellposedness}.) The argument for $Q_5$ is also similar. In fact, the term involving $\im(\phi_s,\alpha^2\partial_s\phi_s)$ is controlled as before. For the term involving $b^*(\alpha^2\partial_s\phi_s)$ we have to prove a simple extension of Lemma \ref{singular2} where we have operators $b^*(f)b^*_k b^*_{k'}$ with $f\in\mathcal L^2$ (similarly as the second part in Lemma \ref{singular}). Finally, the term involving $b(\alpha^2\partial_s\phi_s)$ can commuted to the right and therefore becomes a less singular term which can be controlled already with Lemma \ref{singular}. These arguments prove \eqref{eq:qbounds} and complete the proof of \eqref{eq:d111}.


\subsection*{Bound on $D_{112}$}

The term $D_{112}$ in \eqref{eq:d112def} contains only one factor $|k'|^{-1}$ and can therefore be controlled essentially by the same method as $D_{01}$, based on Lemma \ref{singular}. In order to create a factor of $(H_{\phi_t}+M)^{-1}$, we integrate by parts in $s_1$. This, however, will create a factor of $\tilde H_{\phi_t}$ in one of the terms. When dealing with $D_{211}$ we will explain how to remove this term by integrating by parts in $s$. Since $\|g_{s,t}\|_\infty\lesssim \alpha^{-2}|t-s|$ and $\|\partial_s g_{s,t}\|_\infty = \|g_s\|_\infty \lesssim \alpha^{-2}$ by Proposition \ref{THM:wellposedness}, this factor behaves will in the bounds. When applying Lemma \ref{singular} we also use $\|\partial_s\tilde\psi_s\|_{\mathcal H^1} \lesssim 1$ from Proposition \ref{THM:wellposedness}; see also the remark at the beginning of Subsection \ref{sec:decomp} concerning the bounds on $\partial_t\tilde\psi_t$. Without going into details we state the final result,
\begin{equation}
\label{eq:d112}
\left\| D_{112} \right\|_{\mathcal L^2\otimes\mathcal F} \lesssim \alpha^{-3} t^2(1+t) \,.
\end{equation}


\subsection*{Bound on $D_{121}$}

Also the term $D_{121}$ in \eqref{eq:d121def} contains only one factor of $|k|^{-1}$ and can be controlled as just sketched for $D_{112}$ and as explained in detail for $D_{211}$. In order to control the terms that appear when integrating by parts in $s$ we make use of $\|\partial_s\sigma_{\tilde\psi_s}\|_{\mathcal L^2} \lesssim 1$ and $\|\partial_s\tilde\psi_s\|_{\mathcal H^1}\lesssim$ from Proposition \ref{THM:wellposedness} in addition to the bounds from Lemma \ref{wellposedenergy}. Moreover, we need an obvious extension of Lemma \ref{singular} to the case with $b^*(f_1)b^*(f_2)b^*_k$, which is proved in the same way. Combining all this, we end up with
\begin{equation}
\label{eq:d121}
\left\| D_{121} \right\|_{\mathcal L^2\otimes\mathcal F} \lesssim \alpha^{-2} t(1+t) \,.
\end{equation}


\subsection*{Bound on $D_{122}$}

The term $D_{122}$ contains no $|k|^{-1}$ term. Using $\|g_{s,t}\|_\infty \lesssim \alpha^{-2} |t-s|$ for $0\leq s\leq t\leq\alpha^2$ by Proposition \ref{THM:wellposedness} and $\|b(\sigma_{\tilde\psi_s})\Omega\|_{\mathcal F}= \alpha^{-1} \|\sigma_{\tilde\psi_s}\|_2 \lesssim \alpha^{-1}$ by Lemma \ref{wellposedenergy} we obtain immediately
\begin{equation}
\label{eq:d122}
\left\| D_{122} \right\|_{\mathcal L^2\otimes\mathcal F} \lesssim \alpha^{-3} t^3 \,.
\end{equation}


\section{Estimation on $D_2$}\label{sec:remainder2}


\subsection*{Bound on $D_{211}$}

We recall equation \eqref{eq:d211def} for $D_{211}$. In this equation we commute $e^{-ik\cdot x}$ through $e^{-iH_{\phi_t}s_1}$, which introduces again the operator $H_{\phi_t}(k)$ from \eqref{eq:hphitk}, and we commute $b_k$ with $b_{k'}^*$. In this way, we obtain
\begin{align*}
D_{211}= \alpha^{-2} \int_0^{t} \int_{0}^{t-s} \int_{\R^3} e^{i\tilde{H}_{\phi_t}(s+s_1)} e^{-iH_{\phi_t}(k)s_1} W^*(\alpha^2\phi_t)W(\alpha^2\phi_s) \tilde\psi_s\otimes\Omega \,\frac{dk}{|k|^2}\,ds_1\,ds \,.
\end{align*}
The difficulty in controlling $D_{211}$ comes again from the $k$-integral. It is not enough to bound the norm of the integrand as it stands, since $|k|^{-2}$ is not integrable. Thus, we need to gain some extra decay from $e^{-iH_{\phi_t}(k)s_1}$. To get this decay, we integrate by parts in $s_1$ using
\begin{align}\label{eq:downK}
e^{-iH_{\phi_t}(k)s_1}=ie^{iM s_1} \left( H_{\phi_t}(k)+M \right)^{-1}\ \partial_{s_1}e^{-i[H_{\phi_t}(k)+M]s_1}
\end{align}
with a large constant $M>0$ independent of $\alpha$ and $t$. We obtain
\begin{align*}
D_{211}=& i\alpha^{-2} \int_0^{t} \int_{\R^3} e^{i\tilde{H}_{\phi_t}t} \left( H_{\phi_t}(k)+M\right)^{-1} e^{-iH_{\phi_t}(k)(t-s)} \\
& \qquad\qquad\times W^*(\alpha^2\phi_t)W(\alpha^2\phi_s) \tilde\psi_s\otimes\Omega \,\frac{dk}{|k|^2}\,ds \\
& -i\alpha^{-2} \int_0^t \int_{\R^3} e^{i\tilde{H}_{\phi_t}s} \left( H_{\phi_t}(k)+M\right)^{-1} W^*(\alpha^2\phi_t) \\
& \qquad\qquad\times W(\alpha^2\phi_s) \tilde\psi_s\otimes\Omega \,\frac{dk}{|k|^2}\,ds \\ 
& + \alpha^{-2} M \int_0^{t} \int_0^{t-s} \int_{\R^3} e^{i\tilde{H}_{\phi_t}(s+s_1)} \left( H_{\phi_t}(k)+M\right)^{-1} e^{-iH_{\phi_t}(k) s_1} \\
& \qquad\qquad\times W^*(\alpha^2\phi_t)W(\alpha^2\phi_s) \tilde\psi_s\otimes\Omega \,\frac{dk}{|k|^2}\,ds_1\,ds \\
& + \alpha^{-2} \int_0^{t} \int_0^{t-s} \int_{\R^3} e^{i\tilde{H}_{\phi_t}(s+s_1)} \tilde H_{\phi_t} \left( H_{\phi_t}(k)+M\right)^{-1} e^{-iH_{\phi_t}(k) s_1} \\
& \qquad\qquad\times W^*(\alpha^2\phi_t)W(\alpha^2\phi_s) \tilde\psi_s\otimes\Omega \,\frac{dk}{|k|^2}\,ds_1\,ds \\
=& D_{2111} + D_{2112} + D_{2113} + D_{2114} \,,
\end{align*} 
where $D_{211k}$, $k=1,\ldots,4$, are naturally defined.

We first show how to deal with the terms $D_{2111}$, $D_{2112}$ and $D_{2113}$. The term $D_{2114}$ is harder because of the additional factor of $\tilde H_{\phi_t}$.

The following lemma quantifies in which sense the operator $(H_{\phi_t}+M)^{-1}$ leads to additional decay in $k$.

\begin{lemma}\label{LM:k2Psi}
For $u\in\mathcal H^2(\R^3)$,
\begin{align}
\int_{\mathbb{R}^3} \left\|\left( |i\nabla + k|^2 + 1\right)^{-1} u \right\|_{2}\, \frac{dk}{|k|^2} \lesssim  \|u\|_{\mathcal{H}^2} \,.\label{eq:integrable}
\end{align}
\end{lemma}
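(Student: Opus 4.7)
The idea is to reduce the inequality to a pointwise Fourier-space estimate. By Plancherel's theorem, with $\hat u$ denoting the Fourier transform of $u$,
$$
\left\|\left(|i\nabla+k|^2+1\right)^{-1} u\right\|_2^2 = \int_{\R^3} \frac{|\hat u(p)|^2}{\left((p+k)^2+1\right)^2}\,dp \,,
$$
so the task reduces to controlling the right-hand side by a factor that decays in $k$ fast enough to compensate for the non-integrability of $|k|^{-2}$ at infinity.

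The key pointwise estimate I would establish is
\begin{equation}
\label{eq:plan-ptwise}
(1+p^2)\left((p+k)^2+1\right) \gtrsim 1+k^2 \qquad\text{for all } p,k \in \R^3 \,.
\end{equation}
To prove \eqref{eq:plan-ptwise} I would split into two regimes. If $|p| \leq |k|/2$, then $|p+k| \geq |k|-|p| \geq |k|/2$, so $(p+k)^2+1 \gtrsim k^2$, while the factor $1+p^2 \geq 1$ yields $(1+p^2)((p+k)^2+1) \gtrsim k^2$. If instead $|p| > |k|/2$, then $1+p^2 \gtrsim k^2$ and $(p+k)^2+1 \geq 1$ give the same bound. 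Combining with the trivial inequality $(1+p^2)((p+k)^2+1) \geq 1$ yields \eqref{eq:plan-ptwise}.

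Given \eqref{eq:plan-ptwise}, I would deduce that
$$
\left\|\left(|i\nabla+k|^2+1\right)^{-1} u\right\|_2^2 \lesssim \frac{1}{(1+k^2)^2} \int_{\R^3} (1+p^2)^2 |\hat u(p)|^2\,dp \lesssim \frac{\|u\|_{\mathcal H^2}^2}{(1+k^2)^2} \,,
$$
and hence
$$
\int_{\R^3} \left\|\left(|i\nabla+k|^2+1\right)^{-1} u\right\|_2 \frac{dk}{|k|^2} \lesssim \|u\|_{\mathcal H^2} \int_{\R^3} \frac{dk}{|k|^2(1+k^2)} \lesssim \|u\|_{\mathcal H^2} \,,
$$
where the remaining $k$-integral is finite in $\R^3$: near the origin it behaves like $\int_0^1 r^{-2} \cdot r^2\,dr$ and at infinity like $\int_1^\infty r^{-4}\cdot r^2\,dr$, both convergent.

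There is no real obstacle here: the proof is a short Fourier-transform computation, and the only thing that has to be verified carefully is the pointwise bound \eqref{eq:plan-ptwise}, which captures precisely why the operator $(|i\nabla+k|^2+1)^{-1}$ picks up an extra factor $(1+k^2)^{-1}$ when tested against $\mathcal H^2$ vectors, precisely enough to make $|k|^{-2}\,dk$ integrable.
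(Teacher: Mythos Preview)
Your proof is correct and follows essentially the same approach as the paper: Plancherel, then the pointwise bound $(1+p^2)\bigl((p+k)^2+1\bigr)\gtrsim 1+k^2$ proved by the same case split $|p|\leq |k|/2$ versus $|p|>|k|/2$, followed by integration of $|k|^{-2}(1+k^2)^{-1}$ over $\R^3$.
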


\begin{proof}
By Fourier transform, we have
$$
\left\|\left( |i\nabla + k|^2 + 1\right)^{-1} u \right\|_{2}^2
= \int_{\R^3} \frac{1}{(1+|p+k|^2)^2 (1+|p|^2)^2} (1+|p|^2)^2|\hat u(p)|^2\,dp
$$
We now observe that
$$
\frac{1}{(1+|p+k|^2)^2 (1+|p|^2)^2} \lesssim \frac{1}{(1+|k|^2)^2} \,.
$$
This can be proved by considering separately the regions where $|p|\leq \frac{1}{2}|k|$ and $|p|\geq \frac{1}{2}|k|$. Thus,
$$
\left\|\left( |i\nabla + k|^2 + 1\right)^{-1} u \right\|_{2}^2 \lesssim \frac{1}{(1+|k|^2)^2} \|u\|_{\mathcal H^2}^2 \,,
$$
and the claimed bound follows by integration over $k$.
\end{proof}

Let us return to the terms $D_{2111}$, $D_{2112}$ and $D_{2113}$. It follows from Corollary \ref{formopdom} by conjugating with the unitary $e^{ik\cdot x}$ that there is an $M>0$ such that the operator $\left( H_{\phi_t}(k)+M\right)^{-1} \left(|i\nabla +k|^2 +1 \right)$ is uniformly bounded in $\alpha$ and $t$. This, together with the boundedness of $\psi_s$ in $\mathcal H^2$ for $s\in [0,\alpha^2]$ from Proposition \ref{THM:wellposedness}, yields
$$
\int_{\R^3} \left\| \left( H_{\phi_t}(k)+M\right)^{-1} \tilde\psi_s \right\|_2 \,\frac{dk}{|k|^2} \lesssim 1 \,, 
$$
and therefore
\begin{equation}
\label{eq:d2111-3}
\| D_{2111} \|_{\mathcal L^2\otimes\mathcal F} \lesssim \alpha^{-2} t \,,
\qquad
\| D_{2112} \|_{\mathcal L^2\otimes\mathcal F} \lesssim \alpha^{-2} t \,,
\qquad
\| D_{2113} \|_{\mathcal L^2\otimes\mathcal F} \lesssim \alpha^{-2} t^2 \,.
\end{equation}

We now turn to the term $D_{2114}$, which contains the operator $\tilde H_{\phi_t}$. The idea is to remove this operator by integrating by parts in $s$ using
\begin{equation}
\label{eq:upK}
\tilde{H}_{\phi_t} e^{i\tilde{H}_{\phi_t}s}=-i \partial_{s}e^{i\tilde{H}_{\phi_t}s} \,.
\end{equation}
This leads to
\begin{align*}
D_{2114}=&-i\alpha^{-2} \int_0^{t} \int_{\R^3} e^{i\tilde{H}_{\phi_t}t} \left( H_{\phi_t}(k)+M\right)^{-1} e^{-iH_{\phi_t}(k) (t-s_1)} \\
& \qquad\qquad\qquad\times W^*(\alpha^2\phi_t)W(\alpha^2\phi_{s_1}) \tilde\psi_{s_1}\otimes\Omega \,\frac{dk}{|k|^2}\,ds_1 \\
& +i\alpha^{-2} \int_0^{t} \int_{\R^3} e^{i\tilde{H}_{\phi_t}s_1} \left( H_{\phi_t}(k)+M\right)^{-1} e^{-iH_{\phi_t}(k) s_1} \\
& \qquad\qquad\qquad\times W^*(\alpha^2\phi_t)W(\alpha^2\phi_0) \tilde\psi_0\otimes\Omega \,\frac{dk}{|k|^2}\,ds_1 \\
& +i \alpha^{-2} \int_0^{t} \int_0^{t-s} \int_{\R^3} e^{i\tilde{H}_{\phi_t}(s+s_1)} \left( H_{\phi_t}(k)+M\right)^{-1} e^{-iH_{\phi_t}(k) s_1}\\
& \qquad\qquad\qquad\times W^*(\alpha^2\phi_t)W(\alpha^2\phi_s) \partial_s \tilde\psi_s\otimes\Omega \,\frac{dk}{|k|^2}\,ds_1\,ds \\
& + i \alpha^{-2} \int_0^{t} \int_0^{t-s} \int_{\R^3} e^{i\tilde{H}_{\phi_t}(s+s_1)} \left( H_{\phi_t}(k)+M\right)^{-1} e^{-iH_{\phi_t}(k) s_1} \\
& \qquad\qquad\qquad\times W^*(\alpha^2\phi_t) \left( \partial_s W(\alpha^2\phi_s)\right) \tilde\psi_s\otimes\Omega \,\frac{dk}{|k|^2}\,ds_1\,ds \,.
\end{align*}

The first three terms on the right side can be bounded by Lemma \ref{LM:k2Psi} together with the uniform boundedness in $\mathcal H^2$ of $\tilde\psi_s$ and $\partial_s\tilde\psi_s$ in $[0,\alpha^2]$ from Proposition \ref{THM:wellposedness}; see also the remark at the beginning of Subsection \ref{sec:decomp} concerning the bounds on $\partial_t\tilde\psi_t$. For the fourth term on the right side we use the formula \eqref{eq:firPW} for $\partial_s W(\alpha^2\phi_s)$. Then the term can be bounded by proceeding in the same way as for $D_{015}$ and using Lemma \ref{LM:k2Psi} together with the fact that $\alpha^2\partial_s\phi_s$ is uniformly bounded in $\mathcal L^2$ for all times by Lemma \ref{wellposedenergy}. To summarize, we obtain
\begin{equation}
\label{eq:d2114}
\left\| D_{2114} \right\|_{\mathcal L^2\otimes\mathcal F} \lesssim \alpha^{-2}t(1+t) \,,
\end{equation}
and, because of \eqref{eq:d2111-3},
\begin{equation}
\label{eq:d211}
\left\| D_{211} \right\|_{\mathcal L^2\otimes\mathcal F} \lesssim \alpha^{-2} t(1+t) \,.
\end{equation}


\subsection*{Bound on $D_{212}$}

The term $D_{212}$ involves a single difficult operator $\int b_{k'}^* e^{ik'\cdot x} |k'|^{-1}\,dk'$ and can be controlled using the technique from bounding $D_{01}$. We first integrate by parts with respect to $s_1$ using \eqref{eq:downK} (with $k=0$) to create a factor of $(H_{\phi_t}+M)^{-1}$. Using this factor we can apply Lemma \ref{singular} as in the bound of $D_{01}$. In one of the terms, however, the integration by parts creates a factor $\tilde H_{\phi_t}$. We remove this operator via \eqref{eq:upK} by integrating by parts in $s$. The factor $g_{s,t}$ and its derivative $\partial_s g_{s,t}=-g_s$ are bounded by Proposition \ref{THM:wellposedness} and do not create any problems. Eventually, this shows that
\begin{equation}
\label{eq:d212}
\left\| D_{212} \right\|_{\mathcal L^2\otimes\mathcal F} \lesssim \alpha^{-3} t^2(1+t) \,.
\end{equation}


\subsection*{Bound on $D_{221}$}

The term $D_{221}$ appears in \eqref{eq:d221def}. We use $b_k b^*(\sigma_{\tilde\psi_s})\Omega = \alpha^{-2} \sigma_{\tilde\psi_s}(k)\Omega$. By the Schwarz inequality, \eqref{eq:apriorisigma} and Lemma \ref{wellposedenergy} we have $\| |k|^{-1} \sigma_{\tilde\psi_s}(k)\Omega \|_1 \lesssim \|\sigma_{\tilde\psi_s}\|_{\mathcal L^2_{(1)}}$ $ \lesssim \|\psi_s\|_{H^1}^2\lesssim 1$. From this one easily concludes that
$$
\| D_{221} \|_{\mathcal L^2\otimes\mathcal F} \lesssim \alpha^{-2} t^2 \,.
$$


\subsection*{Bound on $D_{222}$}

The term $D_{222}$ appears in \eqref{eq:d222def}. Using the bound on $g_{s,t}$ from Proposition \ref{THM:wellposedness} and the fact that $b(\sigma_{\tilde\psi_s})\Omega$ has norm of order $\alpha^{-1}$ by Lemma \ref{wellposedenergy} one obtains
$$
\| D_{222} \|_{\mathcal L^2\otimes\mathcal F} \lesssim \alpha^{-3} t^3 \,.
$$


\section{Bounds on $D_3$, $D_4$ and $D_5$}\label{sec:remainder3}

We recall that we have already controlled $D_{32}$, $D_{42}$ and $D_{52}$ in \eqref{eq:d32}, \eqref{eq:d42} and \eqref{eq:d52}. The remaining terms $D_{31}$, $D_{41}$ and $D_{51}$ have at most a single term $|k|^{-1}$ and can be bounded using the methods we have already developed. Therefore we will be rather brief.

For each of the terms $D_{311}$, $D_{312}$, $D_{412}$, $D_{511}$ and $D_{512}$ we first integrate by parts in $s_1$ to generate a factor of $(H_{\phi_t}+M)^{-1}$ which allows us to apply Lemma \ref{singular}. One of the terms, however, will involve a $\tilde H_{\phi_t}$, which we have to remove by integrating by parts in $s$. Using the bounds from Lemma \ref{wellposedenergy} and Proposition \ref{THM:wellposedness} we obtain
$$
\|D_{311}\|_{\mathcal L^2\otimes\mathcal F} \lesssim \alpha^{-2} t(1+t) \,,
\qquad
\|D_{312}\|_{\mathcal L^2\otimes\mathcal F} \lesssim \alpha^{-3} t^2(1+t) \,,
$$ 
$$
\|D_{412}\|_{\mathcal L^2\otimes\mathcal F} \lesssim \alpha^{-3} t^2(1+t)
$$
$$
\|D_{511}\|_{\mathcal L^2\otimes\mathcal F} \lesssim \alpha^{-3} t(1+t) \,,
\qquad
\|D_{512}\|_{\mathcal L^2\otimes\mathcal F} \lesssim \alpha^{-4} t^2(1+t+\alpha^{-1} t^2) \,,
$$
The remaining term $D_{411}$ can be immediately bounded by
$$
\|D_{411}\|_{\mathcal L^2\otimes\mathcal F} \lesssim \alpha^{-2} t^2 \,.
$$


\section{Proof of the almost orthogonality relations}

\subsection{Proof of \eqref{eq:vaToOne}}\label{sub:true1}

We recall that
\begin{align*}
&\left\langle \Omega, \  e^{-iH_{\phi_t}t} D_0\right\rangle_{\mathcal{F}} \\
& \qquad=\left\langle \Omega,\  \int_0^{t} e^{-iH_{\phi_t}(t-s)} P_{\tilde\psi_s}^\bot \int_{\R^3} \left( e^{ik\cdot x} W^*(\alpha^2\phi_t)W(\alpha^2\phi_s)\ b_k^*\ \tilde\psi_s \otimes \Omega \right) \frac{dk}{|k|}\,ds \right\rangle_{\mathcal{F}}.
\end{align*}
We commute the operator $b_k^*$ to the left and use $b_k\Omega=0$. For the commutator we obtain from Corollary \ref{cor:bbb} (with the definition \eqref{eq:gts} of $g_{s,t}$)
\begin{align*}
\left\langle \Omega, \  e^{-iH_{\phi_t}t} D_0\right\rangle_{\mathcal{F}}
= & \left\langle \Omega,\  \int_0^{t} e^{-iH_{\phi_t}(t-s)} P_{\tilde\psi_s}^\bot g_{s,t} W^*(\alpha^2\phi_t)W(\alpha^2\phi_s)\ \tilde\psi_s \otimes \Omega \,ds \right\rangle_{\mathcal{F}} \\
= & \int_0^{t} e^{-iH_{\phi_t}(t-s)} P_{\tilde\psi_s}^\bot g_{s,t}  \tilde\psi_s \left\langle \Omega,\ W^*(\alpha^2\phi_t)W(\alpha^2\phi_s)\ \Omega \right\rangle_{\mathcal{F}} ds \,.
\end{align*}
Thus,
$$
\left\| \left\langle \Omega, \  e^{-iH_{\phi_t}t} D_0\right\rangle_{\mathcal{F}} \right\|_{\mathcal L^2}
\leq t \sup_{0\leq s\leq t} \left\| g_{s,t} \right\|_\infty \left\|\tilde\psi_s\right\|_2 \,.
$$
Thus, by the bound on $g_{s,t}$ from Proposition \ref{THM:wellposedness} and the conservation of the $\mathcal L^2$ norm of $\tilde\psi_s$, we obtain the claimed bound \eqref{eq:vaToOne}.


\subsection{Proof of \eqref{eq:psiPerpprop}}\label{sub:true2}

For $\Phi\in\mathcal F$, let
\begin{align*}
& \Theta_\Phi(t):=\left\langle \tilde\psi_t\otimes \Phi, \ e^{-iH_{\phi_t}t} D_0 \right\rangle_{\mathcal{L}^2\otimes \mathcal{F}}\\
& \ =\left\langle \tilde\psi_t\otimes \Phi, \  \int_0^{t} e^{-iH_{\phi_t}(t-s)} P_{\tilde\psi_s}^\bot \int_{\R^3} \left( e^{ik\cdot x} W^*(\alpha^2\phi_t)W(\alpha^2\phi_s)\ b_k^*\ \tilde\psi_s \otimes \Omega \right) \frac{dk}{|k|}\,ds \right\rangle_{\mathcal{L}^2\otimes \mathcal{F}}\nonumber.
\end{align*}
We shall show that
\begin{align}\label{eq:psiPerp}
|\Theta_{\Phi}(t)|\lesssim \alpha^{-2}t^2 \left(1+\alpha^{-2} t^2\right) \|\Phi\|_{\mathcal{F}} \,,
\end{align}
which by duality implies \eqref{eq:psiPerpprop}.

Our goal will be to derive an ordinary differential equation for $\Theta_\Phi$. We use the presence of the operator $P_{\tilde\psi_s}^\bot$ to obtain (with inner products in $\mathcal L^2\otimes\mathcal F$)
\begin{align*}
& \partial_{t}\Theta_\Phi= \left\langle \partial_{t}\tilde\psi_t\otimes \Phi,\ \int_0^{t} e^{-iH_{\phi_t}(t-s)} P_{\tilde\psi_s}^\bot \int_{\R^3} \left( e^{ik\cdot x} W^*(\alpha^2\phi_t)W(\alpha^2\phi_s)\ b_k^*\ \tilde\psi_s \otimes \Omega \right) \frac{dk}{|k|}\,ds \right\rangle \nonumber\\
&\ +\left\langle \tilde\psi_t\otimes \Phi, \  \int_0^{t} \left( \partial_t e^{-iH_{\phi_t}(t-s)} \right) P_{\tilde\psi_s}^\bot \int_{\R^3} \left( e^{ik\cdot x} W^*(\alpha^2\phi_t)W(\alpha^2\phi_s)\ b_k^*\ \tilde\psi_s \otimes \Omega \right) \frac{dk}{|k|}\,ds \right\rangle \nonumber\\
&\ +\left\langle \tilde\psi_t\otimes \Phi, \  \int_0^{t} e^{-iH_{\phi_t}(t-s)} P_{\tilde\psi_s}^\bot \int_{\R^3} \left( e^{ik\cdot x} \left(\partial_t W^*(\alpha^2\phi_t) \right) W(\alpha^2\phi_s)\ b_k^*\ \tilde\psi_s \otimes \Omega \right) \frac{dk}{|k|}\,ds \right\rangle.
\end{align*}
For the first term we use equation \eqref{eq:defParticlemod} for $\partial_t\tilde\psi_t$. In the second term, we compute, using Duhamel's formula,
\begin{align*}
& \partial_t e^{-iH_{\phi_t}(t-s)} = -iH_{\phi_t} e^{-iH_{\phi_t}(t-s)} - i \int_0^{t-s} e^{-iH_{\phi_t}(t-s-s_1)} \left(\partial_t H_{\phi_t}\right) e^{-iH_{\phi_t}s_1} \,ds_1 \\
& \quad = -i\left( H_{\phi_t} +(t-s) \partial_t \|\phi_t\|_2^2 \right) e^{-iH_{\phi_t}(t-s)} - i \int_0^{t-s} e^{-iH_{\phi_t}(t-s-s_1)} \left(\partial_t V_{\phi_t}\right) e^{-iH_{\phi_t}s_1} \,ds_1 \,.
\end{align*}
Note that the part involving $H_{\phi_t}$ will cancel the contribution from the first term, except for part of the constant $\omega(t)$. Finally, for the third term we use Lemma \ref{LM:derivative} and Lemma \ref{LM:commRe} to obtain
\begin{align*}
\partial_t W^*(\alpha^2\phi_t) W(\alpha^2\phi_s) = & \alpha^2 W^*(\alpha^2\phi_t) \left[ b(\partial_t\phi_t) - b^*(\partial_t\phi_t) + i \im \left( \phi_t,\partial_t\phi_t \right) \right] W(\alpha^2\phi_s) \\
= & \alpha^2 W^*(\alpha^2\phi_t)W(\alpha^2\phi_s) \left[ b(\partial_t\phi_t) - b^*(\partial_t\phi_t) \right. \\
& \qquad\qquad\qquad\qquad \left. + 2i\im \left(\partial_t\phi_t,\phi_s\right) + i \im \left( \phi_t,\partial_t\phi_t \right) \right] \\
= & \alpha^2 W^*(\alpha^2\phi_t)W(\alpha^2\phi_s) \left[ b(\partial_t\phi_t) - b^*(\partial_t\phi_t) \right. \\
& \qquad\qquad\qquad\qquad \left. + 2i\im \left(\partial_t\phi_t,\phi_s-\phi_t\right) + i \im \left( \partial_t\phi_t,\phi_t \right) \right].
\end{align*}
Putting all this into the above formula, we obtain
$$
\partial_{t}\Theta_\Phi= M_1 + M_2 + M_3 \,,
$$
where the terms $M_1$, $M_2$ and $M_3$ are defined, using the notation
$$
\Phi_{s,t} : =  W^*(\alpha^2\phi_s) W(\alpha^2\phi_t)\Phi \,,
$$
by
\begin{align*}
M_1(t) & := -i \int_0^t \int_0^{t-s} \left\langle \tilde\psi_t\otimes \Phi_{s,t},\ e^{-iH_{\phi_t}(t-s-s_1)} \left(\partial_t V_{\phi_t}\right) e^{-iH_{\phi_t}s_1} \right.\\
& \qquad\qquad\qquad\qquad\qquad \left. P_{\tilde\psi_s}^\bot \int_{\R^3} \left( e^{ik\cdot x} \ b_k^*\ \tilde\psi_s \otimes \Omega \right) \frac{dk}{|k|} \right\rangle ds_1\,ds \,, \\
M_2(t) & := \alpha^2 \int_0^{t} \left\langle \tilde\psi_t\otimes \Phi_{s,t}, \ e^{-iH_{\phi_t}(t-s)} \right. \\
& \qquad\qquad\qquad\qquad\qquad \left. P_{\tilde\psi_s}^\bot \int_{\R^3} \left( e^{ik\cdot x} \left( b(\partial_t\phi_t) - b^*(\partial_t\phi_t) \right) b_k^*\ \tilde\psi_s \otimes \Omega \right) \frac{dk}{|k|} \right\rangle ds \,,\\
M_3(t) & := \int_0^t m(s,t) \left\langle \tilde\psi_t\otimes \Phi_{s,t},\  e^{-iH_{\phi_t}(t-s)} P_{\tilde\psi_s}^\bot \int_{\R^3} \left( e^{ik\cdot x} \ b_k^*\ \tilde\psi_s \otimes \Omega \right) \frac{dk}{|k|} \right\rangle ds
\end{align*}
with
$$
m(s,t) := -i (t-s)\partial_t \|\phi_t\|_2^2 + 2i \alpha^2 \im\left(\partial_t\phi_t,\phi_s-\phi_t\right) \,.
$$
Since $\Theta_\Phi(0)=0$, we conclude that
\begin{equation}
\label{eq:thetaint}
\Theta_\Phi(t) = \int_0^t \left( M_1(s)+ M_2(s)+M_3(s) \right)ds \,.
\end{equation}
Below we shall show that
\begin{equation}
\label{eq:mbounds}
|M_1(t)|\lesssim \alpha^{-3} t^2 \|\Phi\|_{\mathcal F} \,,
\quad
|M_2(t)|\lesssim \alpha^{-2} t \|\Phi\|_{\mathcal F} \,,
\quad
|M_3(t)|\lesssim \alpha^{-3} t^2 \|\Phi\|_{\mathcal F} \,.
\end{equation}
Together with \eqref{eq:thetaint} this will prove \eqref{eq:psiPerp} and therefore \eqref{eq:psiPerpprop}.

\subsection*{Bound on $M_1$}

Using the fact that $P_{\tilde\psi_s}^\bot=1-|\tilde\psi_s\rangle\langle\tilde\psi_s|$ (see the proof of Lemma \ref{LM:newform}) we decompose
$$
M_1 = M_{11} - M_{12} \,,
$$
where
\begin{align*}
M_{11}(t) & := -i \int_0^t \int_0^{t-s} \left\langle \tilde\psi_t\otimes \Phi_{s,t},\ e^{-iH_{\phi_t}(t-s-s_1)} \left(\partial_t V_{\phi_t}\right) e^{-iH_{\phi_t}s_1} \right. \\
& \qquad\qquad\qquad\qquad\qquad\qquad \left. \times\int_{\R^3} \left( e^{ik\cdot x} \ b_k^*\ \tilde\psi_s \otimes \Omega \right) \frac{dk}{|k|} \right\rangle_{\mathcal{L}^2\otimes \mathcal{F}} ds_1\,ds
\end{align*}
and, with $\sigma_{\tilde\psi_s}$ from \eqref{eq:forcing},
$$
M_{12}(t) := -i \int_0^t \!\int_0^{t-s} \!\! \left\langle \tilde\psi_t,\ e^{-iH_{\phi_t}(t-s-s_1)} \left(\partial_t V_{\phi_t}\right) e^{-iH_{\phi_t}s_1} \tilde\psi_s\right\rangle_{\mathcal L^2}\! \left\langle \Phi_{s,t}, b^*(\sigma_{\tilde\psi_s})\ \Omega \right\rangle_{\mathcal{F}} ds_1\,ds.
$$
The second term is easy to control. In fact, the a-priori bounds from Lemma \ref{wellposedenergy} together with $\|\partial_t V_{\phi_t}\|_\infty \lesssim \alpha^{-2}$ from \eqref{eq:vpartialtbounds} imply
$$
\left| \left\langle \tilde\psi_t,\ e^{-iH_{\phi_t}(t-s-s_1)} \left(\partial_t V_{\phi_t}\right) e^{-iH_{\phi_t}s_1} \tilde\psi_s\right\rangle_{\mathcal L^2} \right| \lesssim \alpha^{-2}
$$
and
$$
\left| \left\langle \Phi_{s,t}, b^*(\sigma_{\tilde\psi_s})\ \Omega \right\rangle_{\mathcal{F}} \right| \lesssim \alpha^{-1} \|\Phi\|_{\mathcal F} \,.
$$
This yields a bound of the form \eqref{eq:mbounds}.

We now bound the integrand in $M_{11}$. We have
\begin{align*}
& \left| \left\langle \tilde\psi_t\otimes \Phi_{s,t},\ e^{-iH_{\phi_t}(t-s-s_1)} \left(\partial_t V_{\phi_t}\right) e^{-iH_{\phi_t}s_1} \int_{\R^3} \left( e^{ik\cdot x} \ b_k^*\ \tilde\psi_s \otimes \Omega \right) \frac{dk}{|k|} \right\rangle_{\mathcal{L}^2\otimes \mathcal{F}} \right| \\
& \qquad \leq \left\| ( H_{\phi_t}+M)^{1/2} \tilde\psi_t\otimes \Phi_{s,t} \right\| \left\| (H_{\phi_t}+M)^{-1/2} \left(\partial_t V_{\phi_t}\right) (H_{\phi_t}+M)^{1/2} \right\| \\
& \qquad \qquad \times \left\| (H_{\phi_t} +M)^{-1/2}  \int_{\R^3} \left( e^{ik\cdot x} \ b_k^*\ \tilde\psi_s \otimes \Omega \right) \frac{dk}{|k|} \right\|
\end{align*}
By Corollary \ref{formopdom} and an easy modification of its proof, for $M$ sufficiently large (but independent of $t$ and $\alpha$), the operators $(H_{\phi_t}+M)^{\pm 1/2} (-\Delta+1)^{\mp 1/2}$ are both bounded uniformly in $t$. Therefore Lemma \ref{singular} and the a-priori bounds from Lemma \ref{wellposedenergy} yield
\begin{align*}
& \left| \left\langle \tilde\psi_t\otimes \Phi_{s,t},\ e^{-iH_{\phi_t}(t-s-s_1)} \left(\partial_t V_{\phi_t}\right) e^{-iH_{\phi_t}s_1} \int_{\R^3} \left( e^{ik\cdot x} \ b_k^*\ \tilde\psi_s \otimes \Omega \right) \frac{dk}{|k|} \right\rangle_{\mathcal{L}^2\otimes \mathcal{F}} \right| \\
& \qquad \lesssim \alpha^{-1} \|\tilde\psi_t\|_{\mathcal H^1} \|\Phi\|_{\mathcal F} \left\| (-\Delta+1)^{-1/2} \left(\partial_t V_{\phi_t}\right) (-\Delta+1)^{1/2} \right\| \|\psi_s\|_{\mathcal H^1} \\
& \qquad \lesssim \alpha^{-1} \|\Phi\|_{\mathcal F} \left\| (-\Delta+1)^{-1/2} \left(\partial_t V_{\phi_t}\right) (-\Delta+1)^{1/2} \right\| \,.
\end{align*}
Finally, using the fact that $\|\nabla\partial_t V_{\phi_t}\|_\infty \lesssim \alpha^{-2}$ (see \eqref{eq:vpartialtbounds}), we obtain that the operator appearing in this bound has norm $\lesssim \alpha^{-2}$. Thus, we finally obtain
$$
\left| \left\langle \tilde\psi_t\otimes \Phi_{s,t},\ e^{-iH_{\phi_t}(t-s-s_1)} \left(\partial_t V_{\phi_t}\right) e^{-iH_{\phi_t}s_1} \int_{\R^3} \left( e^{ik\cdot x} \ b_k^*\ \tilde\psi_s \otimes \Omega \right) \frac{dk}{|k|} \right\rangle_{\mathcal{L}^2\otimes \mathcal{F}} \right| \lesssim \alpha^{-3} \,,
$$
which, when integrated over $s_1$ and $s$, leads to the bound in \eqref{eq:mbounds}.

\subsection*{Bound on $M_2$}

As for $M_1$, we use $P_{\tilde\psi_s}^\bot=1-|\tilde\psi_s\rangle\langle\tilde\psi_s|$ to decompose
$$
M_2 = M_{21} - M_{22}
$$
with
$$
M_{21}(t) := \alpha^2\! \int_0^{t}\! \left\langle \tilde\psi_t\otimes \Phi_{s,t}, \ e^{-iH_{\phi_t}(t-s)}\! \int_{\R^3}\! \left( e^{ik\cdot x} \left( b(\partial_t\phi_t) - b^*(\partial_t\phi_t) \right) b_k^*\ \tilde\psi_s \otimes \Omega \right) \frac{dk}{|k|} \right\rangle\! ds
$$
and, with $\sigma_{\tilde\psi_s}$ from \eqref{eq:forcing},
$$
M_{22}(t) := \alpha^2 \int_0^{t} \left\langle \tilde\psi_t , \ e^{-iH_{\phi_t}(t-s)} \tilde\psi_s\right\rangle_{\mathcal{L}^2} \left\langle \Phi_{s,t}, \left( b(\partial_t\phi_t) - b^*(\partial_t\phi_t) \right) b^*(\sigma_{\tilde\psi_s})\ \Omega \right\rangle_{\mathcal{F}} \,ds \,.
$$
Once again the bound on $M_{22}$ is straightforward. Namely, we commute $b^*(\sigma_{\tilde\psi_s})$ to the left through $b(\partial_t\phi_t) - b^*(\partial_t\phi_t)$ and obtain
\begin{align*}
& \left\langle \Phi_{s,t}, \left( b(\partial_t\phi_t) - b^*(\partial_t\phi_t) \right) b^*(\sigma_{\tilde\psi_s})\ \Omega \right\rangle_{\mathcal{F}} \\
& \qquad 
= - \left\langle \Phi_{s,t}, b^*(\sigma_{\tilde\psi_s}) b^*(\partial_t\phi_t) \Omega \right\rangle_{\mathcal{F}} 
+ \alpha^{-2} (\partial_t\phi_t,\sigma_{\tilde\psi_s}) \left\langle \Phi_{s,t}, \Omega \right\rangle_{\mathcal{F}} \,.
\end{align*}
By similar computations as for instance in the bound on $D_{32}$ and by the a-priori bounds from Lemma \ref{wellposedenergy} we obtain
$$
\left| \left\langle \Phi_{s,t}, \left( b(\partial_t\phi_t) - b^*(\partial_t\phi_t) \right) b^*(\sigma_{\tilde\psi_s})\ \Omega \right\rangle_{\mathcal{F}} \right| \lesssim \alpha^{-2}  \|\Phi\|_{\mathcal F}\|\sigma_{\tilde\psi_s}\| \|\partial_t\phi_t\|
\lesssim \alpha^{-4} \|\Phi\|_{\mathcal F} \,.
$$
By the conservation of the $\mathcal L^2$ norm of $\tilde\psi_t$ we conclude
$$
\left|M_{22}(t)\right| \lesssim \alpha^{-2} t \|\Phi\|_{\mathcal F} \,,
$$
which is of the form claimed in \eqref{eq:mbounds}.

We now discuss $M_{21}$. Again we commute $b_k^*$ to the left through $b(\partial_t\phi_t) - b^*(\partial_t\phi_t)$ and obtain
$$
M_{21} = M_{211} + M_{212} \,,
$$
where
$$
M_{211}(t) := - \alpha^2 \int_0^{t} \left\langle \tilde\psi_t\otimes \Phi_{s,t}, \ e^{-iH_{\phi_t}(t-s)} \int_{\R^3} \left( e^{ik\cdot x} b_k^* \ b^*(\partial_t\phi_t) \ \tilde\psi_s \otimes \Omega \right) \frac{dk}{|k|} \right\rangle_{\mathcal{L}^2\otimes \mathcal{F}} \,ds
$$
and, with $g_s$ from \eqref{eq:gs},
$$
M_{212}(t) := \int_0^{t} \left\langle \tilde\psi_t, \ e^{-iH_{\phi_t}(t-s)} g_s \tilde\psi_s \right\rangle_{\mathcal{L}^2}
\left\langle \Phi_{s,t}, \ \Omega \right\rangle_{\mathcal{F}}
 \,ds \,.
$$
Since $\|g_s\|_\infty \lesssim \alpha^{-2}$ by Proposition \ref{THM:wellposedness}, we obtain immediately
$$
\left|M_{212}(t)\right| \lesssim \alpha^{-2} t \|\Phi\|_{\mathcal F} \,.
$$
To control $M_{211}$ we bound
\begin{align*}
& \left| \left\langle \tilde\psi_t\otimes \Phi_{s,t}, \ e^{-iH_{\phi_t}(t-s)} \int_{\R^3} \left( e^{ik\cdot x} b_k^* \ b^*(\partial_t\phi_t) \ \tilde\psi_s \otimes \Omega \right) \frac{dk}{|k|} \right\rangle_{\mathcal{L}^2\otimes \mathcal{F}} \right| \\
& \qquad \leq \left\| (H_{\phi_t}+M)^{1/2} \tilde\psi_t\otimes \Phi_{s,t} \right\| \left\| (H_{\phi_t}+M)^{-1/2} \int_{\R^3} \left( e^{ik\cdot x} b_k^* \ b^*(\partial_t\phi_t) \ \tilde\psi_s \otimes \Omega \right) \frac{dk}{|k|} \right\| \,.
\end{align*}
As for $M_{11}$ we use Lemma \ref{wellposedenergy} and Corollary \ref{formopdom} (and a simple extension of its proof) to choose $M$ large enough, but independent of $t$ and $\alpha$, so that $(H_{\phi_t}+M)^{\pm 1/2} (-\Delta+1)^{\mp 1/2}$ are both bounded uniformly in $t$. Therefore Lemma \ref{singular} and the a-priori bounds from Lemma \ref{wellposedenergy} yield
\begin{align*}
& \left| \left\langle \tilde\psi_t\otimes \Phi_{s,t}, \ e^{-iH_{\phi_t}(t-s)} \int_{\R^3} \left( e^{ik\cdot x} b_k^* \ b^*(\partial_t\phi_t) \ \tilde\psi_s \otimes \Omega \right) \frac{dk}{|k|} \right\rangle_{\mathcal{L}^2\otimes \mathcal{F}} \right| \\
& \qquad \lesssim \alpha^{-2} \|\tilde \psi_t\|_{\mathcal H^1} \|\Phi\|_{\mathcal F} \|\partial_t\phi_t\|_{\mathcal L^2} \|\tilde\psi_s\|_{\mathcal H^1} \\
& \qquad \lesssim \alpha^{-4} \|\Phi\|_{\mathcal F} \,.
\end{align*}
This, when integrated over $s$ and multiplied by $\alpha^2$, leads to the bound in \eqref{eq:mbounds}.

\subsection*{Bound on $M_3$}

The a-priori bounds from Lemma \ref{wellposedenergy} yield
$$
|m(s,t)| \lesssim \alpha^{-2} |t-s| \,.
$$
Moreover, applying Lemma \ref{singular} as in the bound on $M_{21}$ we find that the absolute value of the inner product in the integral defining $M_3$ is bounded by a constant times $\alpha^{-1}\|\Phi\|_{\mathcal F}$. This yields the bound in \eqref{eq:mbounds}.

\medskip

This concludes the proof of \eqref{eq:psiPerpprop}.


\appendix


\section{Some Properties of the Weyl operators}

In this appendix we collect some standard properties of the Weyl operators $W(f)$ defined in \eqref{eq:defW} in terms of $b(f)$ and $b^*(f)$. They are well-known, but we provide proofs for the sake of completeness. We recall that the commutation relations for $b_k$ and $b_k^*$ involve a factor $\alpha^{-2}$.

\begin{lemma}\label{LM:commRe} 
$b_{k}$, $b_k^*$ and $W(f)$ satisfy the following relations,
\begin{align}
 b_k W(f)=W(f)\left( b_k+\alpha^{-2} f(k) \right)
 \qquad\text{and}\qquad
 b_k^* W(f)=& W(f) \left( b_k^*+\alpha^{-2}\bar f(k) \right).
\end{align}
\end{lemma}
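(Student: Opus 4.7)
My approach is the standard one for identities of this Weyl--commutation type: I reduce the relation to the differential equation satisfied by a conjugation flow in the parameter $t$, exploiting the fact that the relevant commutator is a \emph{scalar}.

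Set $A(f):=b^*(f)-b(f)$, so that $W(f)=e^{A(f)}$, and consider the one-parameter family of operators
$$
g_k(t):=e^{-tA(f)}\, b_k\, e^{tA(f)},\qquad h_k(t):=e^{-tA(f)}\, b_k^*\, e^{tA(f)}.
$$
Using the $\alpha$-dependent canonical commutation relations \eqref{eq:commB}, a direct calculation gives
$$
[b_k,A(f)] = [b_k,b^*(f)] = \alpha^{-2} f(k),\qquad [b_k^*,A(f)] = -[b_k^*,b(f)] = \alpha^{-2}\overline{f(k)},
$$
which are both \emph{scalars} (smeared $\delta$-distributions against $f$ and $\bar f$). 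Differentiating in $t$ yields $g_k'(t)=e^{-tA(f)}[b_k,A(f)]e^{tA(f)}=\alpha^{-2}f(k)$ and analogously $h_k'(t)=\alpha^{-2}\overline{f(k)}$. Integrating from $0$ to $1$ (with $g_k(0)=b_k$, $h_k(0)=b_k^*$) gives
$$
W(f)^{-1} b_k W(f) = b_k + \alpha^{-2} f(k),\qquad W(f)^{-1} b_k^* W(f) = b_k^* + \alpha^{-2}\overline{f(k)},
$$
and multiplying by $W(f)$ on the left produces exactly the two relations claimed in the lemma.

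The one subtlety, and the only place that requires care, is the unboundedness of $b(f)$, $b^*(f)$ and hence of $A(f)$: the manipulations above have to be justified on a common invariant dense domain. This is standard and can be done in the usual way, e.g.\ by applying the identities to vectors $\Psi$ in the finite-particle subspace of $\mathcal{F}$ (on which $W(f)\Psi$ lies in the domain of $b_k$ and $b_k^*$, and on which $t\mapsto e^{tA(f)}\Psi$ is norm-differentiable with derivative $A(f)e^{tA(f)}\Psi$), so that the differential equations for $g_k(t)\Psi$ and $h_k(t)\Psi$ are meaningful. No further difficulty arises; this step is the main ``obstacle'' only in the pedantic sense of verifying domains.
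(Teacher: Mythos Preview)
Your proof is correct and follows essentially the same approach as the paper: both introduce the one-parameter family $e^{tA(f)}$, differentiate in $t$, and exploit that the relevant commutator $[b_k,A(f)]$ is a scalar. The only cosmetic difference is that the paper tracks $b_kF_t$ (with $F_t=e^{tA(f)}$) and solves the resulting ODE by Duhamel, whereas you track the conjugation $e^{-tA(f)}b_k e^{tA(f)}$ directly; the two are trivially equivalent.
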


\begin{proof}
For $t>0$ we consider the operators 
\begin{align}
F_{t}:=W(t\phi)=e^{t\left(b^*(f)-b(f)\right)} \,,\label{eq:Fts}
\end{align}
which satisfy
\begin{align}
\partial_{t}F_t=\left(b^*(f)-b(f)\right) F_t \,,
\qquad F_0=\text{Id} \,.\nonumber
\end{align}
Multiplying by $b_k$ and using the commutation relations we obtain the following equation for $b_k F_t$,
\begin{align}
\partial_{t} b_k F_t= \left( b^*(f)-b(f)\right) b_{k}F_t+f(k)F_t\,,
\qquad b_k F_0=& b_k \,. \nonumber
\end{align}
Therefore, by Duhamel's principle applied to the latter equation,
\begin{align}
 b_k F_t=e^{t\left(b^*(f)-b(f)\right)} b_k+ f(k) \int_{0}^{t}e^{(t-s)\left(b^*(f)-b(f)\right)} F_s\, ds \,. \nonumber
\end{align}
Recalling the definition of $F_t$ in \eqref{eq:Fts} we can rewrite this as
\begin{align}
 b_k F_t=F_t b_k+tf(k) F_t \,.
\end{align} 
At $t=1$ we obtain the first identity in the lemma. The second one is proved similarly.
\end{proof}

By applying Lemma \ref{LM:commRe} twice, we obtain

\begin{corollary}\label{cor:bbb}
\begin{align*}
\left[b_k^*, \ W^*(f)W(g)\right]= & -\alpha^{-2} \left(\bar f(k)-\bar g(k)\right) \ W^*(f)W(g) \,,\\
\left[b_k, \ W^*(f)W(g)\right]= & \alpha^{-2} \left(f(k)-g(k)\right)\ W^*(f) W(g) \,.
\end{align*}
\end{corollary}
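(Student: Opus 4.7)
The plan is to derive the two commutation relations by applying Lemma \ref{LM:commRe} twice and exploiting the fact that the Weyl operator satisfies $W(f)^* = W(-f)$. This identity holds because the exponent $b^*(f)-b(f)$ in \eqref{eq:defW} is formally anti-self-adjoint, so $W(f)^*=\exp(-(b^*(f)-b(f)))=W(-f)$, using the linearity of $f\mapsto b^*(f)$ and $f\mapsto b(f)$.

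With this observation, Lemma \ref{LM:commRe} applied with $f$ replaced by $-f$ yields the auxiliary identities
\[
b_k\, W^*(f)=W^*(f)\bigl(b_k-\alpha^{-2} f(k)\bigr),
\qquad
b_k^*\, W^*(f)=W^*(f)\bigl(b_k^*-\alpha^{-2} \bar f(k)\bigr),
\]
so that in commutator form $[b_k,W^*(f)]=-\alpha^{-2}f(k)\,W^*(f)$ and $[b_k^*,W^*(f)]=-\alpha^{-2}\bar f(k)\,W^*(f)$. The corresponding commutators with $W(g)$ come directly from Lemma \ref{LM:commRe}, without a sign change.

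Now I would use the derivation property $[A,BC]=[A,B]C+B[A,C]$ to combine these. For the second relation, I would write
\[
[b_k, W^*(f)W(g)]=[b_k,W^*(f)]W(g)+W^*(f)[b_k,W(g)],
\]
substitute the two commutators (noting that the resulting prefactors are scalars, which commute freely with $W^*(f)$), and collect terms to obtain an overall factor of $\alpha^{-2}(f(k)-g(k))\,W^*(f)W(g)$. The argument for $[b_k^*,W^*(f)W(g)]$ is identical, with $f(k),g(k)$ replaced by their complex conjugates and an overall sign adjustment, producing $-\alpha^{-2}(\bar f(k)-\bar g(k))\,W^*(f)W(g)$.

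There is no real analytic obstacle here; the only point requiring a bit of care is the bookkeeping of signs and the use of unitarity of $W(f)$ to justify $W(f)^{-1}=W(-f)$. If one wanted to avoid invoking $W^*(f)=W(-f)$ directly, an equivalent route would be to take adjoints of the identities in Lemma \ref{LM:commRe} to get the action of $b_k,b_k^*$ on $W^*(f)$, and then proceed with the derivation rule as above; both routes give the same result.
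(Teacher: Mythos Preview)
Your approach is exactly what the paper does: its entire proof is the line ``By applying Lemma~\ref{LM:commRe} twice, we obtain,'' and your use of $W^*(f)=W(-f)$ together with the Leibniz rule $[A,BC]=[A,B]C+B[A,C]$ spells this out.

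One bookkeeping slip: if you actually carry out your own computation for $b_k$, you get
\[
[b_k,W^*(f)]W(g)+W^*(f)[b_k,W(g)]
=-\alpha^{-2}f(k)\,W^*(f)W(g)+\alpha^{-2}g(k)\,W^*(f)W(g)
=-\alpha^{-2}\bigl(f(k)-g(k)\bigr)\,W^*(f)W(g),
\]
i.e.\ the same overall minus sign as in the $b_k^*$ identity. The second line of the corollary as printed seems to carry a sign typo, which your summary reproduces; the method is fine, just redo the final collection of terms.
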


Next, we'll consider the case where $f$ depends (differentiably) on a parameter.

\begin{lemma}\label{LM:derivative}
\begin{align}
\partial_{t}W(f_t) = &\frac{\alpha^{-2}}{2} \left( (f_t,\partial_t f_t) - (\partial_t f_t, f_t) \right) \ W(f_t) + W(f_t) \ \left( b^*(\partial_t f_t)- b(\partial_t f_t)\right) , \label{eq:firPW} \\
\partial_{t}W(f_t)= &-\frac{\alpha^{-2}}{2} \left( (f_t,\partial_t f_t) - (\partial_t f_t, f_t) \right) \ W(f_t) + \left( b^*(\partial_t f_t) - b(\partial_t f_t) \right) \ W(f_t)  \,.\label{eq:secPW}
\end{align}
\end{lemma}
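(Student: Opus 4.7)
The plan is to apply the standard Duhamel-type formula for the derivative of a time-dependent exponential,
$$\partial_t e^{A(t)} = \int_0^1 e^{sA(t)}\,B(t)\,e^{(1-s)A(t)}\,ds,$$
to $A(t):=b^*(f_t)-b(f_t)$ (so that $e^{A(t)}=W(f_t)$) and $B(t):=\partial_t A(t)=b^*(\partial_t f_t)-b(\partial_t f_t)$.

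The first step is to verify that $[A(t),B(t)]$ is a central element (a scalar multiple of the identity). Expanding bilinearly and invoking the commutation relations \eqref{eq:commB}, which yield $[b^*(f),b(g)]=-\alpha^{-2}(g,f)$, one finds
$$[A(t),B(t)] = -[b^*(f_t),b(\partial_t f_t)] - [b(f_t),b^*(\partial_t f_t)] = \alpha^{-2}\bigl((\partial_t f_t,f_t)-(f_t,\partial_t f_t)\bigr),$$
which is a purely imaginary c-number. Centrality implies that the Hadamard expansion for conjugation terminates after one step, so $e^{-rA(t)}B(t)e^{rA(t)} = B(t) - r[A(t),B(t)]$ for every $r\in\mathbb{R}$.

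Next I would factor $W(f_t)$ to the right inside the Duhamel integrand, writing $e^{sA(t)}B(t)e^{(1-s)A(t)} = W(f_t)\bigl(e^{-(1-s)A(t)}B(t)e^{(1-s)A(t)}\bigr) = W(f_t)\bigl(B(t)-(1-s)[A(t),B(t)]\bigr)$, and then integrating over $s\in[0,1]$. The result is
$$\partial_t W(f_t) = W(f_t)\,B(t) - \tfrac12\,W(f_t)\,[A(t),B(t)],$$
which, upon substituting the explicit value of $[A(t),B(t)]$ computed above, is exactly \eqref{eq:firPW}. The companion identity \eqref{eq:secPW} comes out of the same Duhamel formula by factoring $W(f_t)$ to the left instead, using $e^{sA(t)}B(t)e^{-sA(t)} = B(t)+s[A(t),B(t)]$; equivalently it can be read off from \eqref{eq:firPW} by commuting $B(t)$ through $W(f_t)$ via Lemma~\ref{LM:commRe}, which flips the sign of the scalar coefficient.

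There is no serious obstacle in this argument — once the centrality of $[A(t),B(t)]$ is observed, everything reduces to a one-line integration. The only bookkeeping requiring care is the factor $\alpha^{-2}$ coming from \eqref{eq:commB} and the non-symmetric order of arguments in $[b^*(f),b(g)]=-\alpha^{-2}(g,f)$, which is precisely why the two pairings $(f_t,\partial_t f_t)$ and $(\partial_t f_t,f_t)$ enter asymmetrically in the final formulas.
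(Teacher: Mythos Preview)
Your argument is correct and is essentially the same Duhamel-based approach as the paper's: both derive the representation $\partial_t W(f_t)=\int_0^1 e^{sA(t)}B(t)e^{(1-s)A(t)}\,ds$ with $A(t)=b^*(f_t)-b(f_t)$ and $B(t)=\partial_t A(t)$, and then simplify the integrand. The only cosmetic difference is that the paper rederives this integral formula from scratch by introducing $F(s,t)=W(sf_t)$ and solving an ODE in $s$, and then simplifies the integrand using Lemma~\ref{LM:commRe} (commuting $b,b^*$ through $W(s_1 f_t)$), whereas you invoke the Duhamel formula as known and simplify via the terminated Hadamard expansion $e^{-rA}Be^{rA}=B-r[A,B]$; these two simplifications are equivalent once one observes that Lemma~\ref{LM:commRe} is exactly the centrality statement in integrated form.
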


\begin{proof}
For $s>0$ we consider the operators
\begin{align}
F(s,t):=W(s f_t) \,,
\end{align} 
which satisfy
\begin{align}
\partial_{s}F(s,t)=& \left( b^*(f_t)-b(f_t) \right) F(s,t) \,,
\qquad F(0,t)=\text{Id} \,. \nonumber
\end{align}
We differentiate this equation with respect to $t$ and obtain
\begin{align*}
\partial_{s}\partial_{t} F(s,t)=& \left( b^*(f_t)-b(f_t) \right) \partial_t F(s,t)+ \left(b^*(\partial_t f_t)-b(\partial_t f_t) \right) F(s,t) \,,\\
\partial_t F(0,t)=&0 \,.
\end{align*}
Therefore, by Duhamel's principle,
\begin{align}
\partial_{t}F(s,t)=& \int_{0}^{s} e^{\left( b^*(f_t)- b(f_t) \right) (s-s_1)} \left( b^*(\partial_t f_t)-b(\partial_t f_t) \right) F(s_1,t)\, ds_1 \nonumber\\
=& \int_{0}^{s} W((s-s_1) f_t) \left( b^*(\partial_t f_t)-b(\partial_t f_t) \right) W(s_1 f_t) \ ds_1 \,. \nonumber
\end{align}
In order to simplify the integrand we now use Lemma \ref{LM:commRe} and obtain
\begin{align*}
\left(b^*(\partial_t f_t)-b(\partial_t f_t) \right) W(s_1 f_t)=& \alpha^{-2} W(s_1 f_t)\ s_1 \left( (f_t,\partial_t f_t) -(\partial_t f_t,f_t) \right) \nonumber\\
&+W(s_1 f_t)\ \left( b^*(\partial_t f_t)-b(\partial_t f_t) \right) \,.
\end{align*}
If we insert this into the above formula for $\partial_t F(s,t)$, we obtain
$$
\partial_{t}F(s,t)= \alpha^{-2} \frac{s^2}{2} W(s f_t) \ \left( (f_t,\partial_t f_t) -(\partial_t f_t,f_t) \right)
+s W(s f_t)\ \left( b^*(\partial_t f_t)-b(\partial_t f_t) \right) \,.
$$
At $s=1$ we obtain the first identity in the lemma. The second one is proved similarly.
\end{proof}

\begin{lemma}\label{overlap}
For any $f,g\in\mathcal F$,
$$
\langle\Omega,W^*(g)W(f)\Omega\rangle = e^{i\alpha^{-2} \mathrm{Im} (g,f) -\alpha^{-2}\|f-g\|^2/2} \,.
$$
\end{lemma}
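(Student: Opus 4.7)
The plan is to introduce the one-parameter family
$$
\Psi(t) := \langle \Omega, W^*(tg) W(tf) \Omega \rangle, \qquad t \in [0,1],
$$
so that $\Psi(0) = 1$ and $\Psi(1)$ is the quantity of interest, and to derive a scalar ODE for $\Psi$ using the differentiation formulas from Lemma \ref{LM:derivative} together with the commutation relations from Lemma \ref{LM:commRe} and Corollary \ref{cor:bbb}.

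First I would differentiate $W^*(tg) W(tf)$ by the product rule, applying formula \eqref{eq:secPW} to $W^*(tg)$ and \eqref{eq:firPW} to $W(tf)$. Since $(tg,g)-(g,tg) = 0$ and $(tf,f)-(f,tf)=0$, the scalar prefactors in Lemma \ref{LM:derivative} drop out and one obtains
$$
\partial_t \bigl[W^*(tg) W(tf)\bigr] = W^*(tg)\bigl(b(g)-b^*(g)\bigr) W(tf) + W^*(tg)\bigl(b^*(f)-b(f)\bigr) W(tf).
$$
Next I would commute the creation/annihilation operators past $W(tf)$ using Lemma \ref{LM:commRe}. The $b^*(f)-b(f)$ piece passes through $W(tf)$ unchanged (its commutator is a real multiple of $W(tf)$ that cancels), whereas the $b(g)-b^*(g)$ piece produces an extra scalar term $2i\alpha^{-2}t\,\mathrm{Im}(g,f)\, W(tf)$, giving
$$
\partial_t \bigl[W^*(tg) W(tf)\bigr] = W^*(tg) W(tf)\Bigl[b^*(f-g) - b(f-g) + 2i\alpha^{-2} t\,\mathrm{Im}(g,f)\Bigr].
$$

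Now I would apply this to $\Omega$ and take the inner product with $\Omega$. The $b(f-g)$ term annihilates $\Omega$. For the remaining $b^*(f-g)$ term, I would commute it back to the left of $W^*(tg)W(tf)$ using Corollary \ref{cor:bbb} (which yields a scalar of size $-\alpha^{-2} t \|f-g\|^2$ times $W^*(tg)W(tf)$), after which $b(f-g)\Omega=0$ kills the leftover piece. The outcome is the ODE
$$
\Psi'(t) = \bigl[\, 2i\alpha^{-2} t\,\mathrm{Im}(g,f) - \alpha^{-2} t\,\|f-g\|^2\,\bigr]\,\Psi(t), \qquad \Psi(0)=1.
$$
Integrating this scalar linear ODE and evaluating at $t=1$ gives the claimed formula.

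The only real pitfall is bookkeeping: one must be careful about (i) the inner product convention (anti-linear in the first slot) in identifying the commutator scalars, and (ii) the $\alpha^{-2}$ factor in the commutation relations \eqref{eq:commB}. Everything else is formal algebra applied to $\Omega$; there is no analytic difficulty since we only work with fixed elements $f, g \in \mathcal L^2(\mathbb R^3)$ and all operators encountered after reduction are bounded.
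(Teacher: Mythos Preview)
Your proof is correct and uses essentially the same ODE strategy as the paper: differentiate a one-parameter family of Weyl overlaps using Lemma~\ref{LM:derivative}, reduce to a scalar linear ODE via the commutation relations, and integrate. The only difference is the choice of path: the paper keeps $W^*(g)$ fixed and interpolates the second argument via $f_t = tf + (1-t)g$, whereas you scale both arguments simultaneously via $W^*(tg)W(tf)$. Both paths yield the same exponent at $t=1$. (One tiny labeling slip: the displayed formula you wrote actually corresponds to applying \eqref{eq:firPW} to $W^*(tg)=W(-tg)$ and \eqref{eq:secPW} to $W(tf)$, not the other way around; since the scalar prefactors vanish here, the two formulas coincide and nothing is affected.)
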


\begin{proof}
Let $f_t:=tf+(1-t)g$ and $F(t):= \langle\Omega,W^*(g)W(f_t)\Omega\rangle$. By Lemma \ref{LM:derivative}, using that $\im(f_t,\partial_t f_t) = \im(f_t,f-g) = \im(g,f)$,
$$
\partial_t F(t) = \langle\Omega,W^*(g)W(f_t) \left( b^*(f-g) +i\alpha^{-2} \im(g,f)\right) \Omega\rangle \,.
$$
Next, by Corollary \ref{cor:bbb}, since $(g-f_t,f-g)=-t \|f-g\|^2$,
\begin{align*}
W^*(g)W(f_t) b^*(f-g) =  & b^*(f-g)W^*(g)W(f_t) + \alpha^{-2}(g-f_t,f-g) W^*(g)W(f_t) \,,
\end{align*}
so
\begin{align*}
\partial_t F(t) & = \left( -\alpha^{-2}t\|f-g\|^2 + i\alpha^{-2}\im(g,f) \right) F(t) \,.
\end{align*}
Since $F(0)=1$, we conclude that
$$
F(t) = e^{-\alpha^{-2}t^2\|f-g\|^2/2 + i\alpha^{-2}t\im(g,f)} \,,
$$
which, at $t=1$, gives the assertion.
\end{proof}


\section{The effective Schr\"odinger operator}

In this appendix we investigate the operator and form domains of the effective Schr\"odinger operator $H_\phi$ from \eqref{eq:effso} with potential $V_\phi$ from \eqref{eq:effpot}.

\begin{lemma}\label{infiniformop}
For every $A>0$ and $\epsilon>0$ there is an $M>0$ such that if $\|\phi\|\leq A$, then for all $\psi\in\mathcal H^1(\R^3)$
$$
\left\| |V_\phi|^{1/2} \psi\right\| \leq \epsilon \left\|(-\Delta+M)^{1/2}\psi\right\|
$$
and for all $\psi\in\mathcal H^2(\R^3)$
$$
\left\| V_\phi \psi\right\| \leq \epsilon \left\|(-\Delta+M)\psi\right\| \,.
$$
\end{lemma}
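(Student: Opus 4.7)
The plan is to decompose $V_\phi = V_< + V_>$ using a frequency cutoff at radius $R$ on $\phi$, so that $V_<$ is bounded in $L^\infty$ while $V_>$ is bounded in $L^2$, after which the task reduces to well-known arguments for $L^\infty+L^2$ potentials in three dimensions. Concretely, writing $\chi_<$ for the indicator of $\{|k|\le R\}$ and letting $V_<(x):=\int (e^{-ik\cdot x}\phi(k)+e^{ik\cdot x}\overline{\phi(k)})\chi_<(k)|k|^{-1}\,dk$ and $V_>:=V_\phi-V_<$, Cauchy--Schwarz and $\int_{|k|\le R}|k|^{-2}\,dk=4\pi R$ give $\|V_<\|_\infty\le CAR^{1/2}$, while Plancherel applied to $\phi(k)(1-\chi_<(k))/|k|$ gives $\|V_>\|_2\le CAR^{-1}$.

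For the first (form) inequality, I would estimate
\begin{equation*}
\bigl\||V_\phi|^{1/2}\psi\bigr\|^2 = \int|V_\phi|\,|\psi|^2 \le \|V_<\|_\infty\|\psi\|_2^2 + \|V_>\|_2\|\psi\|_4^2
\end{equation*}
and then use the three-dimensional Gagliardo--Nirenberg inequality $\|\psi\|_4^2\lesssim\|\psi\|_2^{1/2}\|\nabla\psi\|_2^{3/2}$ together with Young's inequality to obtain, for every $\delta>0$, a constant $C_\delta$ with $\|\psi\|_4^2\le\delta\|\nabla\psi\|_2^2+C_\delta\|\psi\|_2^2$. Choosing first $R$ so large (depending only on $A$ and $\epsilon$) that $CAR^{-1}\le\epsilon^2/2$, taking $\delta=1$, and then $M$ so large that $CAR^{1/2}+CAR^{-1}C_1\le\epsilon^2 M/2$, gives the required bound since $\|(-\Delta+M)^{1/2}\psi\|^2=\|\nabla\psi\|_2^2+M\|\psi\|_2^2$.

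For the second (operator) inequality, I would write
\begin{equation*}
\|V_\phi\psi\|_2 \le \|V_<\|_\infty\|\psi\|_2 + \|V_>\|_2\|\psi\|_\infty,
\end{equation*}
and estimate $\|\psi\|_\infty$ through the embedding $H^2(\R^3)\hookrightarrow L^\infty(\R^3)$ (e.g., $\|\psi\|_\infty\le C\||\cdot|^2\hat\psi\|_1^{\mathrm{a\,la\,}}$Cauchy--Schwarz with weight $(1+|\xi|^2)^{-2}$, yielding $\|\psi\|_\infty\le C(\|\Delta\psi\|_2+\|\psi\|_2)$). A scaling argument $\psi(x)\mapsto\psi(\lambda x)$ upgrades this to $\|\psi\|_\infty\le C(\lambda^{1/2}\|\Delta\psi\|_2+\lambda^{-3/2}\|\psi\|_2)$ for every $\lambda>0$. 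Combined with the trivial bounds $\|\Delta\psi\|_2\le\|(-\Delta+M)\psi\|_2$ and $\|\psi\|_2\le M^{-1}\|(-\Delta+M)\psi\|_2$, I would choose first $R$ large, then $\lambda$ small enough that $C\|V_>\|_2\lambda^{1/2}\le\epsilon/2$, and finally $M$ large enough that $\|V_<\|_\infty+C\|V_>\|_2\lambda^{-3/2}\le\epsilon M/2$; these choices produce the second claim.

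There is no deep obstacle; the only real task is bookkeeping of the four parameters $R,\delta$ (respectively $\lambda$), $M,\epsilon$ in the correct order so that a single $M$ depending on $A$ and $\epsilon$ works for both inequalities (by taking the larger of the two values produced above). In particular, once the $L^\infty+L^2$ splitting of $V_\phi$ has been established the rest is a textbook infinitesimal relative boundedness argument that goes through uniformly for all $\phi$ with $\|\phi\|_{\mathcal L^2}\le A$, since all constants in the two preliminary estimates depend on $\phi$ only through $A$.
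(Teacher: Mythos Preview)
Your argument is correct. The frequency splitting $V_\phi=V_<+V_>$ with $V_<\in L^\infty$ and $V_>\in L^2$ (via Cauchy--Schwarz on $\{|k|\le R\}$ and Plancherel on $\{|k|>R\}$) is sound, and the subsequent interpolation/scaling steps are standard and correctly executed.

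The paper takes a different, somewhat shorter route: instead of decomposing $V_\phi$, it places the whole potential in a single Lebesgue space via the Hardy--Littlewood--Sobolev inequality, obtaining $\|V_\phi\|_{L^6}\lesssim\|\phi\|_{L^2}$. Then H\"older gives $\int|V_\phi|\,|\psi|^2\le\|V_\phi\|_6\|\psi\|_{12/5}^2$ and $\int|V_\phi|^2|\psi|^2\le\|V_\phi\|_6^2\|\psi\|_3^2$, and Gagliardo--Nirenberg interpolation ($\|\psi\|_{12/5}^2\lesssim\|\nabla\psi\|_2^{1/2}\|\psi\|_2^{3/2}$, $\|\psi\|_3^2\lesssim\|\Delta\psi\|_2^{1/2}\|\psi\|_2^{3/2}$) directly yields infinitesimal relative (form and operator) boundedness. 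The paper's approach is more streamlined---no auxiliary parameter $R$ to track---but requires HLS as input. Your approach is more elementary in that it uses only Cauchy--Schwarz and Plancherel to control $V_\phi$, at the cost of introducing the cutoff radius and managing the order of parameter choices. Both give exactly the same dependence of $M$ on $A$ and~$\epsilon$.
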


\begin{proof}
As in \cite[Sec.~2.1]{FrankSchlein2013}, the Hardy--Littlewood--Sobolev inequality implies that
\begin{equation}
\label{eq:hls}
\| V_\phi \|_{6} \lesssim \|\phi\|_2 \,.
\end{equation}
This implies, by the H\"older and Sobolev inequalities,
$$
\int_{\R^3} |V_\phi| |\psi|^2 \,dx \leq \| V_\phi \|_6 \| \psi \|_{12/5}^2
\lesssim \|\phi\|_2 \|\nabla\psi\|_2^{1/2} \|\psi\|_2^{3/2} 
$$
and
$$
\int_{\R^3} |V_\phi|^2 |\psi|^2 \,dx \leq \| V_\phi \|_6^2 \| \psi \|_{3}^2
\lesssim \|\phi\|_2^2 \|\Delta\psi\|_2^{1/2} \|\psi\|_2^{3/2} \,. 
$$
These bounds easily imply the assertions of the lemma.
\end{proof}

\begin{corollary}\label{formopdom}
For every $A>0$ there are $M>0$ and $C>0$ such that if $\|\phi\|_2\leq A$ then for all $f\in\mathcal L^2(\R^3)$
$$
\left\| \left( H_\phi + M\right)^{-1/2} f \right\|_2 \leq C \left\| \left( -\Delta + 1\right)^{-1/2} f \right\|_2 
$$
and
$$
\left\| \left( H_\phi + M\right)^{-1} f \right\|_2 \leq C \left\| \left( -\Delta + 1\right)^{-1} f \right\|_2 \,.
$$
\end{corollary}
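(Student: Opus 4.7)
I would derive both estimates directly from Lemma~\ref{infiniformop}, using the form bound for the first inequality and the operator bound for the second. The additive constant $c_0 := \int_{\R^3}|\phi(k)|^2\,dk \le A^2$ appearing in $H_\phi$ is nonnegative and can be absorbed into $M$ throughout.

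For the first inequality, apply Lemma~\ref{infiniformop} with, say, $\epsilon = 1/4$ and choose $M$ correspondingly large (and $\ge 1$). The form bound gives $\langle\psi,V_\phi\psi\rangle \ge -\tfrac14(\|\nabla\psi\|^2 + M\|\psi\|^2)$, and a short computation then yields the form inequality $H_\phi + M \ge \tfrac12(-\Delta+1)$ on $\mathcal H^1(\R^3)$. By operator monotonicity of inversion on positive self-adjoint operators, this implies $(H_\phi+M)^{-1} \le 2(-\Delta+1)^{-1}$ in the form sense, and evaluating the quadratic form at $f$ gives the first bound with $C = \sqrt{2}$.

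For the second inequality, I plan to prove the dual statement $\|(-\Delta+1)(H_\phi+M)^{-1}f\|_2 \le C\|f\|_2$, which is equivalent to the claimed bound by the self-adjointness of $H_\phi$ and of $-\Delta+1$ (note that $H_\phi$ is self-adjoint on $\mathcal H^2(\R^3)$ by the Kato--Rellich theorem, using the operator bound of Lemma~\ref{infiniformop} and the reality of $V_\phi$). Applying that operator bound with $\epsilon = 1/2$ and possibly a larger $M$, the triangle inequality gives, for $\psi \in \mathcal H^2(\R^3)$,
$$
\|(H_\phi+M)\psi\|_2 \ge \|(-\Delta+M+c_0)\psi\|_2 - \|V_\phi\psi\|_2 \ge \tfrac12\|(-\Delta+M)\psi\|_2 \ge \tfrac12\|(-\Delta+1)\psi\|_2,
$$
where in the middle step the nonnegative additive constant $c_0$ is dropped using $\|(-\Delta+M+c_0)\psi\|^2 \ge \|(-\Delta+M)\psi\|^2$ (an algebraic identity using that $-\Delta+M \ge 0$), and at the end $M \ge 1$. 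Setting $\psi = (H_\phi+M)^{-1}g$ yields the dual bound with $C = 2$.

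The argument is essentially routine and no step poses a genuine obstacle; the only subtlety worth flagging is the self-adjointness of $H_\phi$ on $\mathcal H^2(\R^3)$, which underlies both the operator monotonicity step in the first part and the passage to the adjoint in the second part.
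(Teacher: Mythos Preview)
Your proof is correct and proceeds along essentially the same lines as the paper's: both arguments reduce immediately to Lemma~\ref{infiniformop}, with the paper using explicit resolvent factorizations $(H_\phi+M)^{-1}=(-\Delta+M)^{-1/2}(1+\tilde V_\phi)^{-1}(-\Delta+M)^{-1/2}$ and $(H_\phi+M)^{-1}=(1+(-\Delta+M)^{-1}V_\phi)^{-1}(-\Delta+M)^{-1}$ in place of your operator-monotonicity and duality steps. The two presentations are interchangeable and equally routine.
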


\begin{proof}
To prove the first assertion, we write
$$
\left( H_\phi +M \right)^{-1} = \left(-\Delta+M\right)^{-\frac12} \left( 1+ \left(-\Delta+M\right)^{-\frac12} V_\phi \left(-\Delta+M\right)^{-\frac12} \right)^{-1} \left(-\Delta+M\right)^{-\frac12}
$$
and note that according to Lemma \ref{infiniformop} we can choose $M$ such that $\|\phi\|\leq A$ implies $\|\left(-\Delta+M\right)^{-1/2} V_\phi \left(-\Delta+M\right)^{-1/2}\| \leq \epsilon^2$. Similarly, for the second assertion we write
$$
\left( H_\phi +M \right)^{-1} = \left( 1+ \left(-\Delta+M\right)^{-1} V_\phi \right)^{-1} \left(-\Delta+M\right)^{-1}
$$
and choose $M$ such that $\|\phi\|\leq A$ implies $\|\left(-\Delta+M\right)^{-1} V_\phi\|\leq\epsilon$.
\end{proof}


\section{Well-posedness of the Landau--Pekar equations}\label{sec:wellposedness}

In this appendix we prove Lemma \ref{wellposedenergy} and Proposition \ref{THM:wellposedness}. Recall that the weighted spaces $\mathcal L^2_{(m)} = \mathcal L^2(\R^3;(1+k^2)^m\,dk)$ were introduced in \eqref{eq:weighted}. We begin with some bounds on the coupling terms $V_\phi$ and $\sigma_\psi$ introduced in \eqref{eq:effpot} and \eqref{eq:forcing}.

\begin{lemma}
We have
\begin{equation}
\label{eq:aprioriv}
\left\|\partial^\beta V_\phi\right\|_\infty \lesssim \|\phi\|_{\mathcal L^2_{|\beta|+1}}
\qquad
\text{for all}\ \beta\in\mathbb N_0^3 \,,
\end{equation}
\begin{equation}
\label{eq:apriorisigma}
\|\sigma_\psi\|_{L^2_{(1)}} \lesssim \|\psi\|_{\mathcal H^1}^2 \,,
\qquad
\|\sigma_\psi\|_{L^2_{(3)}} \lesssim \|\psi\|_{\mathcal H^2}^2 \,.
\end{equation}
\end{lemma}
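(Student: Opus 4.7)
The plan is to estimate each quantity by exploiting the Fourier structure and splitting whenever the factor $|k|^{-1}$ causes a low-frequency singularity.

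For \eqref{eq:aprioriv}, I would differentiate under the integral sign to obtain
\[
|\partial^\beta V_\phi(x)| \leq 2\int_{\R^3} |k|^{|\beta|-1} |\phi(k)| \,dk,
\]
uniformly in $x$. The Cauchy--Schwarz inequality with the weight $(1+k^2)^{|\beta|+1}$ then gives
\[
\int_{\R^3} |k|^{|\beta|-1} |\phi(k)| \,dk \leq \left( \int_{\R^3} \frac{|k|^{2|\beta|-2}}{(1+k^2)^{|\beta|+1}}\,dk \right)^{1/2} \|\phi\|_{\mathcal L^2_{(|\beta|+1)}}.
\]
The remaining task is to verify that the integral in parentheses is finite for every $\beta\in\mathbb N_0^3$. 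In spherical coordinates the integrand becomes $r^{2|\beta|}(1+r^2)^{-|\beta|-1}$, which is bounded near $r=0$ (so integrable there for any $\beta$) and decays like $r^{-2}$ at infinity, hence integrable globally.

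For \eqref{eq:apriorisigma}, I would use that $\sigma_\psi(k) = (2\pi)^{3/2} |k|^{-1} \widehat{|\psi|^2}(-k)$ up to a fixed constant, so
\[
\|\sigma_\psi\|_{\mathcal L^2_{(m)}}^2 \lesssim \int_{\R^3} \frac{(1+k^2)^m}{|k|^2} \bigl|\widehat{|\psi|^2}(k)\bigr|^2 dk.
\]
I would split this integral at $|k|=1$. On the high-frequency part $|k|\geq 1$, the factor $(1+k^2)^m/|k|^2$ is dominated by $(1+k^2)^{m-1}$, and Plancherel yields
\[
\int_{|k|\geq 1} \frac{(1+k^2)^m}{|k|^2} \bigl|\widehat{|\psi|^2}(k)\bigr|^2 dk \lesssim \bigl\| |\psi|^2 \bigr\|_{\mathcal H^{m-1}}^2.
\]
For $m=1$ this equals $\||\psi|^2\|_2^2 = \|\psi\|_4^4$, which by the Sobolev embedding $\mathcal H^1(\R^3)\hookrightarrow \mathcal L^4(\R^3)$ is bounded by $\|\psi\|_{\mathcal H^1}^4$. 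For $m=3$ the bound is $\||\psi|^2\|_{\mathcal H^2}^2$, and by the algebra property of $\mathcal H^2(\R^3)$ in three dimensions this is $\lesssim \|\psi\|_{\mathcal H^2}^4$.

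The low-frequency part is the most delicate point (and the main obstacle) because of the $|k|^{-2}$ singularity. I would handle it by pulling $\widehat{|\psi|^2}$ out in $L^\infty$:
\[
\int_{|k|\leq 1} \frac{(1+k^2)^m}{|k|^2} \bigl|\widehat{|\psi|^2}(k)\bigr|^2 dk \lesssim \bigl\|\widehat{|\psi|^2}\bigr\|_\infty^2 \int_{|k|\leq 1} \frac{dk}{|k|^2} \lesssim \bigl\||\psi|^2\bigr\|_1^2 = \|\psi\|_2^4,
\]
where the $|k|^{-2}$ integrates over the unit ball in $\R^3$ precisely because we are in dimension three. Taking square roots and combining the two pieces yields the claimed bounds for $m=1$ and $m=3$.
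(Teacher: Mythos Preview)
Your argument for \eqref{eq:aprioriv} is correct and coincides with the paper's: differentiate under the integral, then Cauchy--Schwarz against the weight $(1+k^2)^{|\beta|+1}$, checking that the resulting $k$-integral converges.

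For \eqref{eq:apriorisigma} your proof is also correct, and the high-frequency treatment matches the paper's (Plancherel turns the $|k|^{2m-2}$-weighted $L^2$ norm of $\widehat{|\psi|^2}$ into $(|\psi|^2,(-\Delta)^{m-1}|\psi|^2)$, controlled by $\|\psi\|_{\mathcal H^1}^4$ or $\|\psi\|_{\mathcal H^2}^4$ via Sobolev and the $\mathcal H^2$ algebra property). The difference lies in the low-frequency part: the paper handles the unweighted $\|\sigma_\psi\|_{\mathcal L^2}^2$ by rewriting it as the Coulomb self-interaction $\iint |\psi(x)|^2|\psi(y)|^2|x-y|^{-1}\,dx\,dy$ and invoking the Hardy--Littlewood--Sobolev inequality, whereas you use the more elementary pointwise bound $|\widehat{|\psi|^2}(k)|\leq C\|\psi\|_2^2$ together with the local integrability of $|k|^{-2}$ in three dimensions. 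Your route avoids HLS entirely and is slightly cleaner here; the paper's route has the advantage of identifying $\|\sigma_\psi\|_2^2$ with a physically meaningful quantity and connecting to estimates used elsewhere in the paper.
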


\begin{proof}
By the Schwarz inequality,
$$
|\partial^\beta V_\phi(x)|\leq 2 \int_{\R^3} |k|^{|\beta|-1} |\phi(k)|\,dk \leq 2 \|\phi\|_{\mathcal L^2_{|\beta|+1}} \left( \int_{\R^3} \frac{|k|^{2(|\beta|-1)}\,dk}{(1+k^2)^{2(|\beta|+1)}} \right)^{1/2}
$$
and the last integral is finite.

We have
$$
\|\sigma_\psi \|_2^2 = \left\| \frac{1}{|k|}\int_{\mathbb{R}^3} |\psi(x)|^2e^{ik\cdot x} \,dx \right\|_2^2
= 2\pi^2 \iint_{\R^3\times\R^3} \frac{|\psi(x)|^2\, |\psi(y)|^2}{|x-y|} \,dx\,dy \,.
$$
By the Hardy--Littlewood--Sobolev inequality, this is bounded by a constant times $\| |\psi|^2 \|_{6/5}^2=\|\psi\|_{12/5}^4$, which, by the Sobolev embedding theorem, is bounded by a constant times $\|\psi\|^4_{\mathcal H^1}$. Moreover, by Plancherel,
$$
\|\sigma_\psi\|_{\mathcal L^2(|k|^{2m})}^2 
= \int_{\R^3} |k|^{2(m-1)} \left|\int_{\R^3} |\psi|^2 e^{ik\cdot x}\,dx \right|^2 \,dk
= (2\pi)^3 (|\psi|^2, (-\Delta)^{m-1} |\psi|^2) \,.
$$
In particular, for $m=1$ we get $\|\psi\|_4^4$, which by Sobolev is controlled by $\|\psi\|_{\mathcal H^1}^2$. For $m=3$, the claimed bound follows easily using $\|\psi\|_\infty \lesssim \|\psi\|_{\mathcal H^2}$ and again Sobolev.
\end{proof}

\begin{proof}[Proof of Lemma \ref{wellposedenergy}]
Local well-posedness in $\mathcal H^1\times\mathcal L^2$ follows by a standard fixed-point argument and one sees that $\|\psi_t\|_2$ and $\mathcal E(\psi_t,\phi_t)$ are conserved. One can use \eqref{eq:hls} and the Sobolev inequality to show that \cite[Sec.~2.1]{FrankSchlein2013},
\begin{equation}
\label{eq:energybound}
\mathcal E(\psi,\phi) \geq \|\nabla\psi\|_2^2 + \|\phi\|_2^2 - C \|\phi\|_2 \|\nabla\psi\|_2^{1/2} \|\psi\|_2^{3/2}
\end{equation}
for some universal constant $C>0$. This, together with conservation of $\mathcal E(\psi_t,\phi_t)$, yields global well-posedness as well as the uniform bounds \eqref{eq:energycons}.

According to \eqref{eq:apriorisigma} and the first bound in \eqref{eq:energycons} we have $\|\sigma_{\psi_t}\|\lesssim \|\psi_t\|_{\mathcal H^1}^2 \lesssim 1$, which is the third bound in \eqref{eq:slowPhi1}.

By equation \eqref{eq:defField} for $\phi_t$ we have
$$
\|\alpha^2 \partial_t\phi_t\|_2 \leq \|\phi_t\|_2 + \left\| \sigma_{\psi_t} \right\|_2
$$
and therefore, by the second bound in \eqref{eq:energycons} and the third bound in \eqref{eq:slowPhi1} we obtain the first bound in \eqref{eq:slowPhi1}.

Finally, $\phi_t-\phi_s = \int_s^t \partial_{s_1}\phi_{s_1}\,ds_1$, so for $t>s$ by the first bound in \eqref{eq:slowPhi1}
$$
\|\phi_t - \phi_s\|_2 \leq \int_s^t \|\partial_{s_1}\phi_{s_1}\|_2 \,ds_1 \lesssim \alpha^{-2} |t-s| \,.
$$
This proves the second bound in \eqref{eq:slowPhi1} and completes the proof of the lemma.
\end{proof}

Before dealing with $\mathcal H^4\times \mathcal L^2_{(3)}$-regularity in Proposition \ref{THM:wellposedness}, we need to establish $\mathcal H^2\times \mathcal L^2_{(1)}$-regularity.

\begin{proposition}\label{wellposedh2}
If $(\psi_0,\phi_0)\in\mathcal H^2(\R^3)\times\mathcal L^2_{(1)}(\R^3)$, then $(\psi_t,\phi_t)\in\mathcal H^2(\R^3)\times\mathcal L^2_{(1)}(\R^3)$ for all $t\in\R$ and
$$
\|\psi_t\|_{\mathcal H^2} \lesssim 1+ \alpha^{-2} |t| \,,
\qquad
\|\phi_t\|_{\mathcal L^2_{(1)}(\R^3)} \lesssim 1+ \alpha^{-2} |t|
$$
with implicit constants depending only on the initial data. Moreover,
\begin{equation}
\label{eq:psiH2der}
\|\partial_t\psi_t\|_{\mathcal L^2} \lesssim 1+\alpha^{-2} |t| \,,
\qquad
\|\partial_t\sigma_{\psi_t}\|_{\mathcal L^2} \lesssim 1+\alpha^{-2} |t| \,.
\end{equation}
If, in addition, $\phi_0\in\mathcal L^2_{(m)}(\R^3)$, $m=2,3$, then $\phi_t\in\mathcal L^2_{(m)}(\R^3)$ for all $t\in\R$ and 
$$
\|\phi_t\|_{\mathcal L^2_{(m)}(\R^3)} \lesssim 1+ \alpha^{-6} |t|^3 \,.
$$
\end{proposition}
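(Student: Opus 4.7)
The plan is to bootstrap the energy-space bounds of Lemma~\ref{wellposedenergy} to higher regularity by estimating each of the two equations separately. The starting point is a Duhamel representation for \eqref{eq:defField},
\[
\phi_t = e^{-it/\alpha^2}\phi_0 - \frac{i}{\alpha^2}\int_0^t e^{-i(t-s)/\alpha^2}\sigma_{\psi_s}\,ds.
\]
Taking the $\mathcal L^2_{(1)}$ norm and using \eqref{eq:apriorisigma} together with the uniform bound $\|\psi_s\|_{\mathcal H^1}\lesssim 1$ from Lemma~\ref{wellposedenergy} produces $\|\phi_t\|_{\mathcal L^2_{(1)}}\lesssim 1+\alpha^{-2}|t|$; via \eqref{eq:aprioriv} with $\beta=0$ this also gives $\|V_{\phi_t}\|_\infty\lesssim 1+\alpha^{-2}|t|$.

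The main difficulty is to establish the $\mathcal H^2$ bound on $\psi_t$. A direct Duhamel argument on \eqref{eq:defParticle} would require $\|\Delta V_{\phi_t}\|_\infty$, hence $\|\phi_t\|_{\mathcal L^2_{(3)}}$, which is not yet available. To avoid this loss, I differentiate \eqref{eq:defParticle} in $t$: the function $u_t:=\partial_t\psi_t$ solves
\[
i\partial_t u_t = \bigl(-\Delta + V_{\phi_t}\bigr)u_t + (\partial_t V_{\phi_t})\psi_t,\qquad u_0 = -i(-\Delta+V_{\phi_0})\psi_0\in\mathcal L^2.
\]
Since $-\Delta+V_{\phi_t}$ is self-adjoint, the standard energy identity yields $\bigl|\tfrac{d}{dt}\|u_t\|_2\bigr|\leq \|\partial_tV_{\phi_t}\|_\infty\,\|\psi_t\|_2$. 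From \eqref{eq:defField} one has $\alpha^2\|\partial_t\phi_t\|_{\mathcal L^2_{(1)}}\leq\|\phi_t\|_{\mathcal L^2_{(1)}}+\|\sigma_{\psi_t}\|_{\mathcal L^2_{(1)}}$, so applying \eqref{eq:aprioriv} gives $\|\partial_tV_{\phi_t}\|_\infty\lesssim\alpha^{-2}(1+\alpha^{-2}|t|)$. Integrating in time produces $\|\partial_t\psi_t\|_2\lesssim 1+\alpha^{-2}|t|$, and rewriting \eqref{eq:defParticle} as $\Delta\psi_t=V_{\phi_t}\psi_t-i\partial_t\psi_t$ together with Step~1 delivers the $\mathcal H^2$ bound.

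The bound on $\partial_t\sigma_{\psi_t}$ follows from the continuity equation $\partial_t|\psi_t|^2=-\nabla\cdot J_t$ with $J_t:=2\,\mathrm{Im}(\bar\psi_t\nabla\psi_t)$: integrating by parts in $x$ inside the Fourier integral defining $\sigma_{\psi_t}$ gives the pointwise bound $|\partial_t\sigma_{\psi_t}(k)|\leq|\widehat{J_t}(k)|$, and hence $\|\partial_t\sigma_{\psi_t}\|_2\lesssim\|J_t\|_2\lesssim\|\psi_t\|_\infty\|\nabla\psi_t\|_2\lesssim\|\psi_t\|_{\mathcal H^2}$ by the Sobolev embedding $\mathcal H^2(\R^3)\hookrightarrow L^\infty$. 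For the higher moments $m=2,3$, the same Duhamel estimate applied in $\mathcal L^2_{(m)}$, combined with the $\mathcal H^2$-control of $\psi_s$ and \eqref{eq:apriorisigma}, yields
\[
\|\phi_t\|_{\mathcal L^2_{(3)}}\lesssim 1+\alpha^{-2}\!\int_0^{|t|}\!\|\psi_s\|_{\mathcal H^2}^2\,ds\lesssim 1+\alpha^{-2}\!\int_0^{|t|}\!(1+\alpha^{-2}s)^2\,ds\lesssim 1+\alpha^{-6}|t|^3,
\]
while the case $m=2$ follows from the Cauchy--Schwarz interpolation $\|\sigma\|_{\mathcal L^2_{(2)}}^2\leq\|\sigma\|_{\mathcal L^2_{(1)}}\|\sigma\|_{\mathcal L^2_{(3)}}$ applied to the bounds already established.

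The key technical point is the second step: one must avoid paying two $k$-weights on $\phi_t$ in order to control $\psi_t$ in $\mathcal H^2$. The trick is to use the equation itself to trade the two spatial derivatives $\Delta$ for a single time derivative $\partial_t$, and then to exploit the cancellation in the energy identity for the self-adjoint operator $-\Delta+V_{\phi_t}$ so that only the small perturbation $\partial_tV_{\phi_t}=O(\alpha^{-2})$ enters the estimate.
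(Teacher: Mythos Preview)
Your approach is correct and rests on the same cancellation as the paper's: self-adjointness of $-\Delta+V_{\phi_t}$ ensures that only the small perturbation $\partial_tV_{\phi_t}$ drives the growth of the $\mathcal H^2$-level quantity. The paper packages this via the modified energy $\mathcal E^{(2)}(\psi,\phi)=\|(-\Delta+V_\phi+M)\psi\|_2^2$ and computes its time derivative directly, whereas you differentiate the equation and run the $L^2$ energy estimate on $u_t=\partial_t\psi_t$; since $i\partial_t\psi_t=(-\Delta+V_{\phi_t})\psi_t$, the two quantities differ only by a bounded potential term and the cancellation is literally the same. Your route avoids the equivalence-of-norms argument for $\mathcal E^{(2)}$; the paper's route avoids formally differentiating the equation and hence needs one fewer degree of time regularity to justify. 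The bound on $\partial_t\sigma_{\psi_t}$ via the continuity equation is a nice alternative to the paper's Hardy--Littlewood--Sobolev argument.

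One small slip: bounding $\|(\partial_tV_{\phi_t})\psi_t\|_2$ by $\|\partial_tV_{\phi_t}\|_\infty\|\psi_t\|_2\lesssim\alpha^{-2}(1+\alpha^{-2}|t|)$ and integrating yields $\|u_t\|_2\lesssim 1+\alpha^{-2}|t|+\alpha^{-4}t^2$, not the claimed $1+\alpha^{-2}|t|$. To get the linear growth stated in the proposition, use instead
\[
\|(\partial_tV_{\phi_t})\psi_t\|_2\leq\|\partial_tV_{\phi_t}\|_6\,\|\psi_t\|_3\lesssim\|\partial_t\phi_t\|_{\mathcal L^2}\,\|\psi_t\|_{\mathcal H^1}\lesssim\alpha^{-2}
\]
uniformly in $t$, by \eqref{eq:hls} and Lemma~\ref{wellposedenergy}; this is exactly the estimate the paper uses. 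The same remark applies when you recover $\|\Delta\psi_t\|_2$ from $V_{\phi_t}\psi_t-i\partial_t\psi_t$: bounding $\|V_{\phi_t}\psi_t\|_2\leq\|V_{\phi_t}\|_6\|\psi_t\|_3\lesssim 1$ avoids the unnecessary dependence on $\|\phi_t\|_{\mathcal L^2_{(1)}}$.
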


\begin{proof}
By a standard fixed point argument one can show local existence of solutions in $\mathcal H^2\times\mathcal L^2_{(1)}$. In the following we will construct an functional, which is equivalent to the $\mathcal H^2$ norm of $\psi$ and which grows in a controlled way as time increases. This will prove, in particular, that $\psi_t$ belongs to $\mathcal H^2$ for all times.

We claim that for every $A>0$ there is a constant $M>0$ such that
$$
\mathcal E^{(2)}(\psi,\phi) := \left\| (-\Delta+V_\phi+M) \psi \right\|_2^2
$$
satisfies
\begin{equation}
\label{eq:equivh2}
(1/2)\|\psi\|_{\mathcal H^2} \leq \left( \mathcal E^{(2)}(\psi,\phi) \right)^{1/2} \leq (3/2) \|\psi\|_{\mathcal H^2}
\end{equation}
for all $\psi\in\mathcal H^2$ and all $\phi$ satisfying $\|\phi\|_2 \leq A$. In fact, similarly as in the proof of Corollary \ref{formopdom}, we have
$$
\left| \left\| (-\Delta+V_\phi+M) \psi \right\|_2 - \left\| (-\Delta+M) \psi \right\|_2 \right|
\leq \left\| V_\phi(-\Delta+M)^{-1} \right\| \left\| (-\Delta+M) \psi \right\|_2
$$
and according to Lemma \ref{infiniformop} we can choose $M$ such that the first factor on the right side is less than $\epsilon$ for $\|\phi\|_2\leq A$.

According to Lemma \ref{wellposedenergy} there is an $A>0$ (depending only on $\|\psi_0\|_{\mathcal H^1}$ and $\|\phi_0\|_{\mathcal L^2}$) such that $\|\phi_t\|_{\mathcal L^2}\leq A$ for all $t$. We choose $M$ corresponding to this value of $A$ and compute, using the equation for $\psi_t$,
\begin{align*}
\partial_t \mathcal E^{(2)}(\psi_t,\phi_t) = & 2\re \left((-\Delta+V_{\phi_t}+M)\psi_t,(-\Delta+V_{\phi_t}+M)\partial_t \psi_t\right) \\
& + 2\re \left((-\Delta+V_{\phi_t}+M)\psi_t,(\partial_t V_{\phi_t}) \psi_t\right) \\
= & 2\re \left((-\Delta+V_{\phi_t}+M)\psi_t,(\partial_t V_{\phi_t}) \psi_t\right).
\end{align*}
By the Schwarz and the H\"older inequality,
$$
\partial_t \mathcal E^{(2)}(\psi_t,\phi_t) \leq 2 \left( \mathcal E^{(2)}(\psi_t,\phi_t) \right)^{1/2} \left\|\partial_t V_{\phi_t} \right\|_6 \|\psi_t\|_3
$$
By \eqref{eq:hls} and Lemma \ref{wellposedenergy}, $\|\partial_t V_{\phi_t}\|_6\lesssim \|\partial_t\phi_t\|_2 \lesssim \alpha^{-2}$, and by the Sobolev inequality and Lemma \ref{wellposedenergy}, $\|\psi_t\|_3 \lesssim \|\psi_t\|_{\mathcal H^1}\lesssim 1$. Thus,
$$
\partial_t \mathcal E^{(2)}(\psi_t,\phi_t) \lesssim \alpha^{-2} \left( \mathcal E^{(2)}(\psi_t,\phi_t) \right)^{1/2} \,,
$$
which implies that $\left(\mathcal E^{(2)}(\psi_t,\phi_t)\right)^{1/2} \lesssim 1+ \alpha^{-2} |t|$. According to \eqref{eq:equivh2} this implies the claimed bound on $\|\psi_t\|_{\mathcal H^2}$.

The remaining bounds are proved in a straightforward way. We have
$$
\|\partial_t\psi_t\|_{2} \leq \|-\Delta\psi_t\|_{2} + \|V_{\phi_t}\psi_t\|_{2}
\leq \|\psi_t\|_{\mathcal H^2} + \|V_{\phi_t}\|_6 \|\psi_t\|_3
$$
By the bound on $\|\psi_t\|_{\mathcal H^2}$ together with \eqref{eq:hls} and the bounds from Lemma \ref{wellposedenergy} we obtain the first bound in \eqref{eq:psiH2der}. Moreover,
$$
\partial_t \sigma_{\psi_t} = 2|k|^{-1} \int_{\R^3} \re\left(\overline{\psi_t}\partial_t\psi_t\right) e^{ik\cdot x}\,dx
$$
and so, by the Hardy--Littlewood--Sobolev inequality as in \eqref{eq:hls},
$$
\left\|\partial_t \sigma_{\psi_t}\right\|_2 \lesssim \|\psi_t\partial_t\psi_t\|_{6/5}\leq \|\psi_t\|_3 \|\partial_t\psi_t\|_2 \,.
$$
By the first bound in \eqref{eq:psiH2der} and Lemma \ref{wellposedenergy} we obtain the second bound in \eqref{eq:psiH2der}.

In order to deduce the bounds on $\phi_t$ we use Duhamel's formula
\begin{equation}
\label{eq:duhamelphi}
\phi_t(k) = e^{-it/\alpha^2}\phi_0(k) - i\alpha^{-2} \int_0^t e^{-i(t-s)/\alpha^2}\sigma_{\psi_s}(k)\,ds \,.
\end{equation}
If $\phi_0\in\mathcal L^2_{(m)}$, $m=1,2,3$, we deduce that $\phi_t\in\mathcal L^2_{(m)}$ provided we can bound $\|\sigma_{\psi_s}\|_{\mathcal L^2_{(m)}}$. This quantity can by controlled by Sobolev norms of $\psi_s$ according to \eqref{eq:apriorisigma}. This proves the proposition.
\end{proof}

\begin{proof}[Proof of Proposition \ref{THM:wellposedness}]
The basic strategy is the same as in the proof of Lemma \ref{wellposedh2}, except that verifying the properties of the functional is more complicated in this case. Again we do not give the details of the local existence via a fixed point argument.

We claim that for every $A>0$ there is a constant $M>0$ such that
$$
\mathcal E^{(4)}(\psi,\phi) := \left\| (-\Delta+V_\phi+M)^2 \psi \right\|_2^2
$$
satisfies
\begin{equation}
\label{eq:equivh4}
(1/2)\|\psi\|_{\mathcal H^4} \leq \left( \mathcal E^{(4)}(\psi,\phi) \right)^{1/2} \leq (3/2) \|\psi\|_{\mathcal H^4}
\end{equation}
for all $\psi\in\mathcal H^4$ and all $\phi$ satisfying $\|\phi\|_{\mathcal L^2_{(3)}}\leq A$. To show this, we first observe that, as in the proof of Lemma \ref{wellposedh2},
\begin{align*}
& \left| \left\| (-\Delta+V_\phi+M)^2 \psi \right\|_2 - \left\| (-\Delta+M)(-\Delta+V_\phi+M) \psi \right\|_2 \right| \\ 
& \qquad \leq \left\| V_\phi (-\Delta+M)^{-1} \right\| \left\| (-\Delta+M)(-\Delta+V_\phi+M) \psi \right\|_2
\end{align*}
and that $\left\| V_\phi (-\Delta+M)^{-1} \right\|$ can be made arbitrarily small for $\|\phi\|_{\mathcal L^2}$ bounded by choosing $M$ large. Thus, it suffices to show that $\left\| (-\Delta+M)(-\Delta+V_\phi+M) \psi \right\|_2$ is equivalent to $\|(-\Delta+M)^2\psi\|_2$. We compute
\begin{align*}
& \left| \left\| (-\Delta+M)(-\Delta+V_\phi+M) \psi \right\|_2 - \left\| (-\Delta+V_\phi+M) (-\Delta+M) \psi \right\|_2  \right| \\ 
& \qquad \leq \left\| \left( 2\nabla V_\phi\cdot\nabla + \Delta V_\phi \right) (-\Delta+M)^{-1} \right\| \left\| (-\Delta+M) \psi \right\|_2 \,.
\end{align*}
According to \eqref{eq:aprioriv}, the first factor on the right side can be made arbitrarily small for $\|\phi\|_{\mathcal L^2_{(3)}}$ bounded by choosing $M$ large. We finally apply the argument in Lemma \ref{wellposedh2} again to compare $ \left\| (-\Delta+V_\phi+M) (-\Delta+M) \psi \right\|_2$ to  $\left\| (-\Delta+M)^2 \psi \right\|_2$. This proves the claim.

According to Lemma \ref{wellposedh2} for every $\tau>0$ there is an $A>0$ (depending only on $\|\psi_0\|_{\mathcal H^2}$, $\|\phi_0\|_{\mathcal L^2_{(3)}}$ and $\tau$) such that $\|\phi_t\|_{\mathcal L^2_{(3)}}\leq A$ for all $|t|\leq \tau \alpha^2$. We choose $M$ corresponding to this value of $A$ and compute, using the equation for $\psi_t$,
\begin{align*}
\partial_t \mathcal E^{(4)}(\psi_t,\phi_t) = & 2\re \left((-\Delta+V_{\phi_t}+M)^2\psi_t,(-\Delta+V_{\phi_t}+M)^2\partial_t \psi_t\right) \\
& + 2\re \left((-\Delta+V_{\phi_t}+M)^2\psi_t,(\partial_t V_{\phi_t}) (-\Delta+V_{\phi_t}+M)\psi_t\right) \\
& + 2\re \left((-\Delta+V_{\phi_t}+M)^2\psi_t, (-\Delta+V_{\phi_t}+M)(\partial_t V_{\phi_t})\psi_t\right) \\
= & 4\re \left((-\Delta+V_{\phi_t}+M)^2\psi_t,(\partial_t V_{\phi_t}) (-\Delta+V_{\phi_t}+M)\psi_t\right) \\
& - 2\re \left((-\Delta+V_{\phi_t}+M)^2\psi_t, (2\nabla\partial_t V_{\phi_t}\cdot\nabla + \Delta\partial_tV_{\phi_t})\psi_t\right).
\end{align*}
Therefore, by the Schwarz inequality,
\begin{align*}
\partial_t \mathcal E^{(4)}(\psi_t,\phi_t)
\leq & 2 \left( \mathcal E^{(4)}(\psi_t,\phi_t) \right)^{1/2} \left(2 \|\partial_t V_{\phi_t}\|_\infty \|(-\Delta+V_{\phi_t}+M)\psi_t\|_2 \right. \\
& \qquad\qquad\qquad\qquad \left. + 2\|\nabla\partial V_{\phi_t}\|_\infty \|\nabla\psi_t\|_2 + \|\nabla\partial_t V_{\phi_t}\|_\infty \|\psi_t\|_2 \right)
\end{align*}
According to Lemma \ref{wellposedh2} and \eqref{eq:equivh2} all terms involving $\psi_t$ here are bounded by a constant for $|t|\leq \tau\alpha^2$. Assume that we can prove that all terms involving $\phi_t$ here are bounded by a constant times $\alpha^{-2}$ for $|t|\leq \tau\alpha^2$. Then we will have shown that
$$
\partial_t \mathcal E^{(4)}(\psi_t,\phi_t) \lesssim \alpha^{-2} \left( \mathcal E^{(4)}(\psi_t,\phi_t) \right)^{1/2}
$$
for $|t|\leq \tau\alpha^2$, which implies that $\left(\mathcal E^{(4)}(\psi_t,\phi_t)\right)^{1/2} \lesssim 1+\alpha^{-2} |t|\lesssim 1$ for $|t|\leq\tau\alpha^2$. According to \eqref{eq:equivh4}, this proves that $\|\psi_t\|_{\mathcal H^4}\lesssim 1$ for $|t|\leq \tau\alpha^2$.

Thus, it remains to prove that
\begin{equation}
\label{eq:vpartialtbounds}
\|\partial_x^\beta\partial_t V_{\phi_t}\|_\infty \lesssim \alpha^{-2}
\qquad\text{for}\ |t|\leq\tau\alpha^{-2} \,.
\end{equation}
If we insert the equation of $\phi_t$ into the definition of $V_{\phi_t}$, we find
\begin{align}
\partial_{t}V_{\phi_t}(x)= -i\alpha^{-2} \int_{\R^3} \left( e^{-ik\cdot x}\phi_t(k) - e^{ik\cdot x} \overline{\phi_t(k)} \right) \frac{dk}{|k|} \,.
\end{align}
(Note that the contribution from $\sigma_{\psi_t}$ cancels.) Using this formula, we obtain
$$
\|\partial_x^\beta\partial_t V_{\phi_t}\|_\infty \lesssim \alpha^{-2} \|\phi_t\|_{\mathcal L^2_{|\beta|+1}}
$$
in the same way as we obtained \eqref{eq:aprioriv}. This implies \eqref{eq:vpartialtbounds} in view of the bounds on $\phi_t$ from Lemma \ref{wellposedh2}.

It is straightforward to deduce the remaining bounds claimed in the proposition. The bound on $\|\phi_t\|_{\mathcal L^2_{(3)}}$ follows from Proposition \ref{wellposedh2}. Because of the equation for $\psi_t$, we have
$$
\|\partial_t\psi_t\|_{\mathcal H^2} \leq \|-\Delta\psi_t\|_{\mathcal H^2} + \|V_{\phi_t} \psi_t \|_{\mathcal H^2}
\lesssim \|\psi_t\|_{\mathcal H^4} + \sum_{|\beta|\leq 2} \|\partial^\beta V_{\phi_t}\|_\infty \|\psi_t\|_{\mathcal H^2} 
$$
Using the fact that $\|\psi_t\|_{\mathcal H^4}\lesssim 1$ and $\|\phi_t\|_{\mathcal L^2_{(3)}} \lesssim 1$, which by \eqref{eq:aprioriv} controls $\|\partial^\beta V_{\phi_t}\|_\infty$ for $|\beta|\leq 2$, we conclude that $\|\partial_t\psi_t\|_{\mathcal H^2}\lesssim 1$. The second bound in \eqref{eq:psiH4der} follows from Proposition \ref{wellposedh2}.

Finally, we need to prove the bounds on $g_s$ and $g_{s,t}$. By the Schwarz inequality as in the proof of \eqref{eq:aprioriv} together with the equation for $\phi_s$ we find
$$
\|g_s\|_\infty \lesssim \|\partial_s\phi_s\|_{\mathcal L^2_{(1)}}
\leq \alpha^{-2} \left( \|\phi_s\|_{\mathcal L^2_{(1)}} + \|\sigma_{\psi_s}\|_{\mathcal L^2_{(1)}} \right).
$$
According to \eqref{eq:apriorisigma} and Lemma \ref{wellposedenergy} we have $\|\sigma_{\psi_{s}}\|_{\mathcal L^2_{(1)}}\lesssim\|\psi_s\|_{H^1}^2\lesssim 1$. Moreover, if $|t|,|s|\leq\tau\alpha^2$, then Proposition \ref{wellposedh2} implies $\|\phi_{s}\|_{\mathcal L^2_{(1)}}\lesssim 1$. Thus,
$$
\|g_s\|_\infty \lesssim \alpha^{-2} \,,
$$
as claimed. Moreover, $g_{s,t} = \int_s^t g_{s_1} \,ds_1$, so for $t>s$
$$
\|g_{s,t}\|_\infty \leq \int_s^t \|g_{s_1}\|_\infty \,ds_1 \lesssim \alpha^{-2} (t-s) \,.
$$
This proves \eqref{eq:slowPhi}.
\end{proof}


\section{Reduced density matrices}\label{sec:reducedabstract}

Here we show how the approximation of $e^{-i\tilde{H}_{\alpha}^{F}t}\psi_0\otimes W(\alpha^2\phi_0)\Omega$ in Theorem \ref{THM:main} yields approximations to its reduced density matrices in Theorem \ref{reduceddensity}. The argument relies on the following abstract lemma.

\begin{lemma}
Let $\mathcal H_1$ and $\mathcal H_2$ be Hilbert spaces, let $\Psi, \Phi \in\mathcal H_1\otimes\mathcal H_2$ and $f\in\mathcal H_1$ and $g\in\mathcal H_2$ such that
$$
\Psi = f\otimes g + \Phi
$$
and
$$
\|f\|_{\mathcal H_1} \leq C\,,\quad \|g\|_{\mathcal H_2} \leq C \,,\quad \| \Phi \|_{\mathcal H_1\otimes\mathcal H_2} \leq C \epsilon
$$
and
$$
\| \langle g,\Phi\rangle_{\mathcal H_2} \|_{\mathcal H_1} \leq C \epsilon^2\,,\quad \| \langle f,\Phi\rangle_{\mathcal H_1} \|_{\mathcal H_2} \leq C\epsilon^2
$$
for some $C>0$ and $\epsilon>0$. Define
$$
\gamma_1 = \tr_{\mathcal H_2} |\Psi\rangle\langle\Psi| \,,\qquad
\gamma_2 = \tr_{\mathcal H_1} |\Psi\rangle\langle\Psi| \,.
$$
Then
$$
\tr_{\mathcal H_1} \left|\gamma_1 - \|g\|^2_{\mathcal H_2}\, |f\rangle\langle f|\right| \leq 3 C^2 \epsilon^2 \,,\quad
\tr_{\mathcal H_2} \left|\gamma_2 - \|f\|^2_{\mathcal H_1}\, |g\rangle\langle g|\right| \leq 3 C^2 \epsilon^2 \,.
$$
\end{lemma}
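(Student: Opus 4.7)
The plan is to expand the density matrix $|\Psi\rangle\langle\Psi|$ using the decomposition $\Psi = f\otimes g + \Phi$, which gives
$$|\Psi\rangle\langle\Psi| = |f\otimes g\rangle\langle f\otimes g| + |f\otimes g\rangle\langle\Phi| + |\Phi\rangle\langle f\otimes g| + |\Phi\rangle\langle\Phi|,$$
and then to take partial traces termwise. Applying $\tr_{\mathcal H_2}$, the first term produces exactly the target main term $\|g\|_{\mathcal H_2}^2\,|f\rangle\langle f|$, so the task reduces to controlling the trace norms of the partial traces of the three remaining terms.

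The key computation I would carry out is the identification of the partial traces of the cross terms as rank-one operators built from the vector-valued partial inner product. A short calculation against any orthonormal basis of $\mathcal H_2$ gives
$$\tr_{\mathcal H_2}\bigl(|f\otimes g\rangle\langle\Phi|\bigr) = |f\rangle\bigl\langle \langle g,\Phi\rangle_{\mathcal H_2}\bigr|,\qquad \tr_{\mathcal H_2}\bigl(|\Phi\rangle\langle f\otimes g|\bigr) = \bigl|\langle g,\Phi\rangle_{\mathcal H_2}\bigr\rangle\langle f|.$$
For the quadratic remainder, $\tr_{\mathcal H_2}|\Phi\rangle\langle\Phi|$ is a non-negative operator whose trace norm equals its trace, namely $\|\Phi\|_{\mathcal H_1\otimes\mathcal H_2}^2$.

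Combining these with the elementary fact that a rank-one operator $|u\rangle\langle v|$ has trace norm $\|u\|\|v\|$, the hypotheses yield that each of the three remainder contributions is bounded in trace norm by $C\cdot C\epsilon^2 = C^2\epsilon^2$, and the first assertion follows by the triangle inequality. The second assertion follows by the identical argument with the roles of $\mathcal H_1,\mathcal H_2$ and of $f,g$ interchanged, using the second almost-orthogonality hypothesis.

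There is no real obstacle here beyond careful bookkeeping for the partial trace identities; the crucial conceptual point is that the two almost-orthogonality hypotheses $\|\langle g,\Phi\rangle_{\mathcal H_2}\|_{\mathcal H_1}\leq C\epsilon^2$ and $\|\langle f,\Phi\rangle_{\mathcal H_1}\|_{\mathcal H_2}\leq C\epsilon^2$ are exactly what is needed to promote the cross terms from order $\epsilon$ (which would be the bound from Cauchy--Schwarz using only $\|\Phi\|\leq C\epsilon$) to order $\epsilon^2$. This is the mechanism by which Theorem \ref{reduceddensity} attains accuracy $\alpha^{-2}$ on the reduced density matrices even though the total error $R(t)$ in Theorem \ref{THM:main} is only of order $\alpha^{-1}$.
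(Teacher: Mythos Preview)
Your proof is correct and is essentially identical to the paper's: the paper expands $|\Psi\rangle\langle\Psi|$ via $\Psi=f\otimes g+\Phi$, uses the same identity $\tr_{\mathcal H_2}|f\otimes g\rangle\langle\Phi|=|f\rangle\langle\langle g,\Phi\rangle_{\mathcal H_2}|$ for the cross terms, and bounds each of the three remainder pieces by $C^2\epsilon^2$. The only cosmetic difference is that the paper invokes the general bound \eqref{eq:traceineq} for $\tr_{\mathcal H_2}|\Phi\rangle\langle\Phi|$ whereas you observe directly that this operator is non-negative with trace $\|\Phi\|^2$; both give the same estimate.
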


Before proving this lemma, let us use it to derive Theorem \ref{reduceddensity} from Theorem \ref{THM:main}. We apply the lemma with $\mathcal H_1=\mathcal L^2(\R^3)$, $\mathcal H_2=\mathcal F$, $f=e^{-i\int_0^t \omega(s)\,ds} \psi_t$, $g=\Omega$,
$$
\Psi = W^*(\alpha^2\phi_t) e^{-i\tilde{H}_{\alpha}^{F}t}\psi_0\otimes W(\alpha^2\phi_0)\Omega \,,
\quad
\Phi = W^*(\alpha^2\phi_t)R(t) \,.
$$
Then Theorem \ref{THM:main} implies that the assumptions of the lemma are satisfied with $\epsilon=\alpha^{-1}(1+|t|)$. 
We have $\|f\|^2 = \|\psi_t\|^2=\|\psi_0\|^2=1$, $\|g\|^2 = \|\Omega\|^2= 1$ and $|f\rangle\langle f|= |\psi_t\rangle\langle\psi_t|$. Moreover,
$$
\tr_{\mathcal H_2} |\Psi\rangle\langle\Psi| = \gamma_t^{\mathrm{particle}} \,,
\qquad
\tr_{\mathcal H_1} |\Psi\rangle\langle\Psi| = W^*(\alpha^2\phi_t) \gamma_t^{\mathrm{field}} W(\alpha^2\phi_t) \,.
$$
Thus, the conclusion of Theorem \ref{reduceddensity} follows from the lemma.

We now turn to the proof of the lemma. It relies on the bound
\begin{equation}
\label{eq:traceineq}
\tr_{\mathcal H_1} | \tr_{\mathcal H_2} |\Psi_1\rangle\langle\Psi_2| | \leq \|\Psi_1\|_{\mathcal H_1\otimes\mathcal H_2} \|\Psi_2\|_{\mathcal H_1\otimes\mathcal H_2} 
\end{equation}
valid for any vectors $\Psi_1,\Psi_2\in\mathcal H_1\otimes\mathcal H_2$. For the proof of \eqref{eq:traceineq} recall the variational characterization of the trace norm,
$$
\tr_{\mathcal H_1} |K| = \sup_{(e_j),(e_j')} \re \sum_j \langle e_j,K e_j'\rangle_{\mathcal H_1} \,,
$$
where the supremum is over all orthonormal systems $(e_j)$ and $(e_j')$ in $\mathcal H_1$. Thus, if $(b_k)$ is an orthonormal basis in $\mathcal H_2$, then
\begin{align*}
& \re \sum_j \left\langle e_j, \left( \tr_{\mathcal H_2} |\Psi_1\rangle\langle\Psi_2| \right) e_j'\right\rangle_{\mathcal H_1} = \sum_{j,k} \langle e_j\otimes b_k, \Psi_1\rangle_{\mathcal H_1\otimes\mathcal H_2} \langle \Psi_2,e_j'\otimes b_k\rangle_{\mathcal H_1\otimes\mathcal H_2} \\
& \qquad\leq \left( \sum_{j,k} |\langle e_j\otimes b_k, \Psi_1\rangle_{\mathcal H_1\otimes\mathcal H_2} |^2 \right)^{1/2} \left( \sum_{j,k} |\langle \Psi_2,e_j'\otimes b_k\rangle_{\mathcal H_1\otimes\mathcal H_2}|^2 \right)^{1/2} \\
& \qquad\leq \|\Psi_1\|_{\mathcal H_1\otimes\mathcal H_2} \|\Psi_2\|_{\mathcal H_1\otimes\mathcal H_2} \,,
\end{align*}
where the last inequality comes from the orthonormality of $(e_j\otimes b_k)$ and $(e_j'\otimes b_k)$. Therefore the variational characterization of the trace norm yields \eqref{eq:traceineq}.

\begin{proof}
Since $\tr_{\mathcal H_2} |f\otimes g\rangle\langle \Phi| = |f\rangle\langle \langle g,\Phi\rangle_{\mathcal H_2}|$, we have
\begin{align*}
\gamma_1 - \|g\|^2_{\mathcal H_2}\, |f\rangle\langle f| 
= &  |f\rangle\langle \langle g,\Phi\rangle_{\mathcal H_2}| + | \langle \Phi,g\rangle_{\mathcal H_2}\rangle\langle f| + \tr_2|\Phi\rangle\langle\Phi| \,.
\end{align*}
By \eqref{eq:traceineq} and the assumptions the trace norm of each one of the three operators on the right side is bounded by $C^2 \epsilon^2$. This proves the first inequality in the lemma. The second one is proved similarly.
\end{proof}

Finally, we show that the $\alpha^{-2}$ error bound in Theorem \ref{reduceddensity} (for times of order one) is due to the fact that $\phi_t$ is time-dependent. The proof makes use of the fact that for arbitrary normalized vectors $a$ and $b$ in a Hilbert space $\mathcal H$ one has
\begin{equation}
\label{eq:rankone}
\tr_{\mathcal H} \left| |a\rangle\langle a| - |b\rangle\langle b| \right| = 2 \left( 1- \left|\langle a,b\rangle\right|^2 \right)^{1/2},
\end{equation}
as is easily verified.

\begin{proof}[Proof of Lemma \ref{opt}]
Because of Theorem \ref{reduceddensity} it suffices to prove that there are $\epsilon>0$ and $c>0$ such that for all $|t|\leq\epsilon$ and all $\alpha\geq 1$,
$$
\tr_{\mathcal F} \left| \left|W(\alpha^2\phi_t)\Omega\right\rangle\left\langle W(\alpha^2\phi_t)\Omega\right| - \left|W(\alpha^2\phi_0)\Omega\right\rangle\left\langle W(\alpha^2\phi_0)\Omega\right| \right| \geq c\alpha^{-1} |t| \,. 
$$
According to Lemma \ref{overlap} and \eqref{eq:rankone} this is equivalent to
$$
1- e^{-\alpha^2\|\phi_t-\phi_0\|_2^2} = 
1-\left|\langle \Omega,W^*(\alpha^2\phi_0) W(\alpha^2\phi_t)\Omega \rangle\right|^2 \geq (c^2/4) \alpha^{-2} t^2 \,.
$$
Since $\|\phi_t-\phi_0\|_2 \lesssim \alpha^{-2} |t|$ by Lemma \ref{wellposedenergy}, it suffices to prove that there are $\epsilon>0$ and $c'>0$ such that for all $|t|\leq \epsilon$ and all $\alpha\geq 1$,
$$
\|\phi_t-\phi_0\|_2 \geq c'\alpha^{-2} |t| \,.
$$
Since $\phi_0+\sigma_{\psi_0}\not\equiv 0$, this will clearly follow if we can prove that for all $|t|\leq\alpha^2$ and $\alpha\geq 1$
\begin{equation}
\label{eq:optproof}
\left\|\phi_t - \phi_0 + i\alpha^{-2} t \left(\phi_0+\sigma_{\psi_0}\right) \right\|_2 \leq C \alpha^{-2} t^2 \,.
\end{equation}
To prove this, we use equation \eqref{eq:defParticle} for $\phi_t$ to write
\begin{align*}
\phi_t-\phi_0 = \int_0^t \partial_s\phi_s\,ds = -i\alpha^{-2} \int_0^t \left(\phi_s + \sigma_{\psi_s}\right)ds = -i\alpha^{-2} t \left(\phi_0+\sigma_{\psi_0}\right) + r_t
\end{align*}
with
$$
r_t := -i\alpha^{-2} \int_0^t\int_0^s \left(\partial_{s_1}\phi_{s_1} + \partial_{s_1}\sigma_{\psi_{s_1}}\right) ds_1\,ds \,. 
$$
By Lemma \ref{wellposedenergy} and Proposition \ref{THM:wellposedness} the $\mathcal L^2$-norm of the integrand of $r_t$ is bounded by a constant uniformly in $|s_1|\leq\alpha^2$ and $\alpha\geq 1$. This yields \eqref{eq:optproof} and completes the proof.
\end{proof}



  
\bibliographystyle{amsalpha}

\end{document}